\DeclarePairedDelimiter{\ceil}{\lceil}{\rceil}
\newtheorem{theorem}{Theorem}
\newcommand{\an}{a}
\newcommand{\cre}{\an^\dagger}
\begin{document}

\title{Simulating Effective QED on Quantum Computers}

\author{Torin F. Stetina}
\affiliation{Department of Chemistry, University of Washington, Seattle, Washington, USA}
\affiliation{Simons Institute for the Theory of Computation, University of California, Berkeley, California, USA}
\email{torins@berkeley.edu}
%\homepage{http://quantum-journal.org}
%\orcid{0000-0002-6279-8455}
%\thanks{You can use the \texttt{\textbackslash{}email}, \texttt{\textbackslash{}homepage}, and \texttt{\textbackslash{}thanks} commands to add additional information for the preceding \texttt{\textbackslash{}author}. If applicable, this can also be used to indicate that a work has previously been published in conference proceedings.}
\author{Anthony Ciavarella}
%\orcid{0000-0003-0290-4698}
\affiliation{Institute for Nuclear Theory, University of Washington, Seattle, Washington, USA}
\affiliation{Department of Physics, University of Washington, Seattle, Washington, USA}
\author{Xiaosong Li}
\affiliation{Department of Chemistry, University of Washington, Seattle, Washington, USA}
%\orcid{0000-0003-1985-4623}
\author{Nathan Wiebe}
\affiliation{Department of Physics, University of Washington, Seattle, Washington, USA}
\affiliation{Pacific Northwest National Laboratory, Richland, Washington, USA}
\affiliation{Department of Computer Science, University of Toronto, Toronto, Ontario, Canada}
\affiliation{Challenge Institute for Quantum Computation, University of Washington, Seattle, Washington, USA}
%\orcid{0000-0002-0335-9508}
\maketitle

\begin{abstract}
In recent years simulations of chemistry and condensed materials has emerged as one of the preeminent applications of quantum computing, offering an exponential speedup for the solution of the electronic structure for certain strongly correlated electronic systems.  To date, most treatments have ignored the question of whether relativistic effects, which are described most generally by quantum electrodynamics (QED), can also be simulated on a quantum computer in polynomial time.  Here we show that effective QED, which is equivalent to QED to second order in perturbation theory, can be simulated in polynomial time under reasonable assumptions while properly treating all four components of the wavefunction of the fermionic field.  In particular, we provide a detailed analysis of such simulations in position and momentum basis using Trotter-Suzuki formulas.  We find that the number of $T$-gates needed to perform such simulations on a $3D$ lattice of $n_s$ sites scales at worst as $O(n_s^3/\epsilon)^{1+o(1)}$ in the thermodynamic limit for position basis simulations and $O(n_s^{4+2/3}/\epsilon)^{1+o(1)}$ in momentum basis.  We also find that qubitization scales slightly better with a worst case scaling of $\widetilde{O}(n_s^{2+2/3}/\epsilon)$ for lattice eQED and complications in the prepare circuit leads to a slightly worse scaling in momentum basis of $\widetilde{O}(n_s^{5+2/3}/\epsilon)$.  We further provide concrete gate counts for simulating a relativistic version of the uniform electron gas that show challenging problems can be simulated using fewer than $10^{13}$ non-Clifford operations and also provide a detailed discussion of how to prepare multi-reference configuration interaction states in effective QED which can provide a reasonable initial guess for the ground state.  Finally, we estimate the planewave cutoffs needed to accurately simulate heavy elements such as gold.
\end{abstract}

\section{Introduction}
Since their inception in the early 1980s~\cite{manin1980computable}, quantum computers have been a highly anticipated technology for simulating the laws of physics. Richard Feynman, known mainly for his work contributing to the development of quantum electrodynamics (QED), the quantum field theory of electromagnetism, was also a pioneer of the idea to use real quantum systems for computation~\cite{feynman1982simulating}. Bringing these two major scientific ideas together, we take a look at how an effective version of QED can be simulated on a universal quantum computer with applications to many body material and molecular systems.

A great body of literature exists that shows that quantum computers can provide exponential advantages for certain \emph{ab initio} electronic structure calculations~\cite{lanyon2010towards,whitfield2011simulation,peruzzo2014variational,reiher2017elucidating,von2020quantum,lee2020even}.  The utility of quantum based devices stems from their ability to directly simulate the dynamics of the Coulomb Hamiltonian without making the approximations that are usually required to make their classical counterparts efficiently tractable.  However, despite these existing approaches to quantum chemistry, often times they still fall short of being theoretically correct, since the Hamiltonian is still approximate.  For example, most approaches to solving the electronic structure problem utilize the Born-Oppenheimer approximation, which assumes that the nuclear wavefunction is uncoupled from the electronic wave function.  While this approximation has received substantial attention in recent quantum simulation work~\cite{kivlichan2017bounding,babbush2019quantum}, comparably less progress has been made to properly treat relativistic effects in electronic structure simulations~\cite{gerritsma2010quantum,fillion2017algorithm},  which give rise to orbital contractions/expansions, spin-orbit coupling, modification of electron-electron repulsion and more ~\cite{reiher2014relativistic, dyall2007introduction}.

Much work has been done on the quantum simulation of relativistic field theories including but not limited to the simulation of scalar field theories \cite{Jordan_2012,Preskill1,Preskill3,Brennen_2015,Marshall_2015,Ciavarella_2020,Barata:2020jtq,Klco:2018zqz,klco2021hierarchical}, fermionic field theories \cite{Preskill2,Lamm_2020,Mazza_2012}, lattice gauge theories \cite{Klco:2018kyo,Schwinger1,Schwinger2,Schwinger3,Schwinger5,Schwinger6,LGT1,LGT2,su2sim,Zohar_2013,Zohar_2015,Lamm_2019,Alexandru_2019,Ba_uls_2020,Tagliacozzo_2013,Tagliacozzo_2013_2,PhysRevLett.105.190404,Tavernelli20_094501,Byrnes_2006,Ciavarella_2021,kan2021lattice,stryker2021shearing,Paulson_2021,davoudi2021simulating,Zohar:2012ay,Zohar:2012xf,Banerjee:2012xg,Banerjee:2012pg,Martinez2016,Muschik:2016tws,Zohar:2016iic,Banuls:2017ena,Kaplan:2018vnj,Zache:2018jbt,Stryker:2018efp,Raychowdhury:2018tfj,Davoudi:2019bhy,Haase:2020kaj,Davoudi:2020yln,Buser:2020cvn,raychowdhury2020solving,Raychowdhury:2019iki,Tagliacozzo:2012df,Ji:2020kjk,Chandrasekharan:1996ih,Brower:2003vy,Wiese:2006kp,atas2021su2,klco2021standard,dejong2021quantum,meurice2021theoretical,zohar2021quantum,armon2021photonmediated,andrade2021engineering}, superstring theory \cite{Gharibyan_2021}, $\sigma$ models \cite{Alexandru_2019_2,Singh:2019uwd,Bhattacharya:2020gpm,Hostetler_2021} and light front QFTs \cite{kreshchuk2020quantum,Kreshchuk:2020kcz} (which are known to have many similarities with quantum chemistry as noted by Kenneth Wilson \cite{WILSON199082}). However, the absence of relativity from the majority of existing quantum chemistry simulation algorithms raises both practical and theoretical questions. First, relativistic effects are significant for the ground states of some molecules.  For example, atomic gold and uranium dimers both have significant relativistic effects that need need to be considered to accurately model their ground state. Without the ability to simulate these effects, quantum computers face a fundamental precision limit that prevents the proper physics from being simulated within arbitrary accuracy, thereby limiting the primary advantage of quantum computers. While previous work has addressed relativistic Hamiltonian simulation using the Dirac Hartree-Fock method~\cite{Pittner12_030304}, this relies on the no-pair approximation where only electronic orbitals are used from the mean field, ultimately leading to the same form of the Coulomb Hamiltonian, but with different coefficients during quantum simulation. While this no-pair relativistic Hamiltonian will obviously capture a portion of relativistic effects being ignored by the non-relativistic Coulomb Hamiltonian, the positronic manifold, and subsequently quantum electrodynamical (QED) effects are ignored. Some common consequences of QED effects include the Lamb shift, change in ionization energies, energy shifts in absorption spectra, and more. A review on relativistic and QED effects in atoms and molecules can be found in~\cite{pyykko12_371}. The leading order source of these QED effects in atomic and molecular material systems are typically known as vacuum polarization, and electron self-energy. These effects arise from the realization that the ground state of the molecule is a joint ground state of both the fermionic as well as bosonic fields and is perhaps most striking with the vacuum polarization term which describes the energy contributions from virtual electron/positron productions interacting with the boson field. Going beyond the no-pair approximation for molecular and material systems in general is sparse in the current literature, due to its high computational cost classically and its theoretical complexity. Due to this complexity of including both fermionic and photonic fields for QED corrections in molecular systems, a convenient approximation can be used to neglect the use of photonic modes in exchange for an effective field theory known as effective QED or eQED \cite{Liu15_631, Lindgren13_014108}. By integrating out photonic degrees of freedom, the electronic and positronic interaction contributions are kept without the added complication of tracking photonic modes present in the vacuum.

The second aspect of why we need to consider including relativity stems from the Quantum Complexity-Theoretic Church-Turing Thesis. This belief, widely held in the quantum information community, is that any physically realistic model of computing can be efficiently simulated by a quantum Turing machine. If physics is viewed as providing an analog computational model, and if this statement is correct, then quantum computers must be able to simulate all physical processes in polynomial time.  Thus, understanding whether quantum computers can solve the electronic structure problem in polynomial time in the presence of relativistic effects is important since it sheds light on the validity of the Quantum Complexity-Theoretic Church-Turing Thesis.

In this paper, we focus on a many-body Hamiltonian perspective and show that quantum computers can efficiently sample from the energy eigenvalues for an approximation to quantum electrodynamics known as eQED.  This formulation of quantum electrodynamics is found from expanding the Feynman path integrals that describe quantum electrodynamics to second order from which a Hamiltonian can be derived for only the electronic and positronic degrees of freedom.  We specifically provide algorithms for simulating these dynamics in both position and momentum basis both with and without an external vector potential.  We then analyze the cost of performing such a simulation using Trotter-Suzuki simulation methods both theoretically and numerically for a relativistic version of the free-electron gas (which we denote rellium).  While the rellium model by itself may be considered to be a toy model from an \emph{ab initio} physics simulation perspective, we examine this model for its novel Hamiltonian terms that arise in the context of quantum simulation algorithms, that will be present in more sophisticated relativistic many-body systems.  In addition, we discuss how to prepare multi-reference configuration singles and doubles (MRCISD) approximations to the ground state for eQED which is necessary because the strong correlations present in systems where eQED is needed will often make elementary approximations such as Hartree-Fock inaccurate. Finally, we present a simple cost estimate for simulating a gold atom using planewaves in eQED, and context for future work involving QED and relativistic effects in quantum simulation.

\section{Review of \lowercase{e}QED}
Quantum electrodynamics (QED) is the quantum field theory that describes the electromagnetic interactions of electrons and positrons. QED is formulated in terms of a four component fermionic spinor field, $\psi_a(x)$ that creates electrons and annihilates positrons and a 4-potential, $A_\mu(x)$, that creates and annihilates photons. It should be noted that the interpretation of $\psi_a(x)$ is not the same as the interpretation of the electron field in non-relativistic field theory. In non-relativistic field theory, applying the electron field operator to a state simply removes a single electron, but in the relativistic case, applying $\psi_a(x)$ to a state will create a superposition of a state with one less electron and a state with one more positron. For this reason, $\psi_a(x)$ can be interpreted as an operator that increases the electric charge of the state by $1$. For relativistic chemical physics, the primary goal is to simulate the motion of electrons and positrons, therefore it would be convenient to have a formulation of QED without the photon degrees of freedom. In other words, the goal of effective QED is to derive a Hamiltonian $H_{eQED}$, such that
\begin{equation}
    \bra{\phi_f} e^{- i H_{QED} t} \ket{\phi_i} = \bra{\tilde{\phi}_f} e^{- i H_{eQED} t} \ket{\tilde{\phi}_i}
\end{equation}
where $\ket{\phi_i}$ and $\ket{\phi_f}$ are states without free photons propagating, $\ket{\tilde{\phi}_i}$ and $\ket{\tilde{\phi}_f}$ are the corresponding states with the photon field integrated out, $H_{QED}$ is the full QED Hamiltonian, and $H_{eQED}$ has the photon field integrated out. The effective Hamiltonian can be derived in the path integral formulation of QED with the Feynman gauge fixing procedure \cite{Schwartz}, giving
\begin{multline}
\label{eq:QEDInt}
    \bra{\phi_f} e^{- i H_{QED} t} \ket{\phi_i} = \int DA_{\mu}(x) D\bar{\psi}(x) D \psi(x) \phi_f^*(A_{\mu}(x),\bar{\psi}(x),\psi(x)) \phi_i(A_{\mu}(x),\bar{\psi}(x),\psi(x)) \\
    \exp(i \int d^4 x \left( i \bar{\psi}(x) \gamma^\mu \partial_{\mu} \psi(x) - m\bar{\psi}(x) \psi(x) - e \bar{\psi}(x) \gamma^{\mu} \psi(x) A_{\mu}(x) - \frac{1}{2} A^{\mu}(x) \square A_{\mu}(x) \right) ) 
\end{multline}
where $e$ is the elementary charge, $\square = \partial_t^2 - \sum_{i=1}^{3} \partial_{x_i}^2 $, and $ DA_{\mu}(x) D\bar{\psi}(x) D \psi(x)$ is the functional measure for the fields being integrated over. The repeated upper and lower indices corresponds to Einstein summation convention with a spacetime metric using the mostly negative convention ($g_{\mu \nu} = \text{diag}(1,-1,-1,-1)$), i.e.
\begin{equation}
    a^\mu b_\mu = a^0 b^0 - \sum_{i=1}^3 a^i b^i \ \ \ .
\end{equation}
The $\gamma$ matrices act on the Fock space that mixes the components of the particles.  From this perspective, they are akin to operations such as fermionic swaps which are widely used in the quantum chemistry literature~\cite{babbush2018low,hagge2020optimal}. The $\gamma$ matrices, when seen as operators acting on the spinor of fermionic operators $[a^\dagger_{\vec{x},0},a^\dagger_{\vec{x},1},a^\dagger_{\vec{x},2},a^\dagger_{\vec{x},3}]$, can be represented in the Dirac representation as
\begin{eqnarray}
    \gamma^0 & = \hat{1} \otimes \hat{\sigma}_3 \nonumber \\
    \gamma^i & = i \hat{\sigma}_2 \otimes \hat{\sigma}_i
\end{eqnarray}
where $\hat{\sigma}_i$ are the standard Pauli matrices with $i=1,2,3$ specifying a spatial direction. In Eq.~\eqref{eq:QEDInt}, $\bar{\psi}(x) = \psi^\dagger(x) \gamma^0$, where the summation over the spinor indices of $\psi^\dagger(x)$ and $\gamma^0$ has been suppressed. For states where the electromagnetic field is close to satisfying the classical equations of motion, an action for eQED can be obtained by performing the integral over $A_\mu(x)$ in the stationary phase approximation yielding
\begin{multline}
    \bra{\tilde{\phi}_f} e^{- i H_{eQED} t} \ket{\tilde{\phi}_i} = \int D\bar{\psi}(x) D \psi(x) \tilde{\phi}_f^*(\bar{\psi}(x),\psi(x))
    \tilde{\phi}_i(\bar{\psi}(x),\psi(x))\\
   \exp\left( i \int d^4 x \left(  i \bar{\psi}(x) \gamma^\mu \partial_{\mu} \psi(x) - m\bar{\psi}(x) \psi(x)
		 - \frac{1}{2} e^2 \bar{\psi}(x) \gamma^{\mu} \psi(x) \square^{-1}(x,y) \int d^4 y \ \bar{\psi}(y) \gamma_{\mu} \psi(y) \right) \right) \ .
\end{multline}
Therefore the action for eQED is given by
\begin{equation}
    S_{eQED} = \int d^4 x \left( i \bar{\psi}(x) \gamma^\mu \partial_{\mu} \psi(x) - m\bar{\psi}(x) \psi(x)
		 - \frac{1}{2} e^2 \bar{\psi}(x) \gamma^{\mu} \psi(x) \square^{-1}(x,y) \int d^4 y \bar{\psi}(y) \gamma_{\mu} \psi(y) \right)
\end{equation}
where $\square^{-1}(x,y)$ is the Green's function for $\square$.
\begin{equation}
\square^{-1}(\vec{x},t',\vec{y},t) = \frac{1}{4\pi \abs{\vec{y} - \vec{x}}} \delta(t' - t - \abs{\vec{y} - \vec{x}})  
\end{equation}
This resulting action is not local in time, which makes it unsuitable for deriving a Hamiltonian for the canonical quantization formulation of quantum mechanics. If the radiation effects are neglected (ie. the Coulomb interaction between electrons and positrons is approximated as being instantaneous), then
\begin{equation}
    \square^{-1}(\vec{x}',t',\vec{x},t) \approx \frac{1}{4\pi \abs{\vec{x}' - \vec{x}}} \delta(t' - t)\label{eq:Gapprox}
\end{equation}
In this approximation,
\begin{equation}
    S_{eQED} = \int d^4 x \left( i \bar{\psi}(x) \gamma^\mu \partial_{\mu} \psi(x) - m\bar{\psi}(x) \psi(x)
	- \frac{1}{2} e^2 \bar{\psi}(x) \gamma^{\mu} \psi(x)  \int d^3 y \frac{1}{4\pi \abs{\vec{x} - \vec{y}}} \bar{\psi}(y) \gamma_{\mu} \psi(y) \right)
	\label{eq:eQEDAction}
\end{equation}
At this point, a Legendre transform can be performed to obtain a Hamiltonian for eQED given by

\begin{equation}
\begin{split}
H_{eQED} = \int d^3 x & \left(   -i \sum_{j=1}^3 \bar{\psi}(x) \gamma^j \triangledown_{j} \psi(x) + m\bar{\psi}(x) \psi(x) \right.\\
 & \,\,\,\, \left. \vphantom{\sum_{j=1}^3} + \frac{1}{2} e^2 \bar{\psi}(x) \gamma^{\mu} \psi(x)  \int d^3 y \frac{1}{4\pi \abs{\vec{x} - \vec{y}}} \bar{\psi}(y) \gamma_{\mu} \psi(y) \right) \ .
\end{split}
\end{equation}
where $\triangledown_{j}$ is the typical 3-space gradient operator. Note that when working in momentum space instead of position space, this issue of time nonlocality is avoided by obtaining the potentials by matching Feynman diagrams at $O(e^2)$ with QED. By performing this matching, the leading order corrections from the radiation effects neglected in the position space formulation will be included in the momentum space formulation.

\subsection{eQED for Free Electrons}
The first case that we will present is the Hamiltonian for eQED for the uniform electron gas.  The non-relativistic version of this model is known as jellium, which not only is useful for the foundations of density functional theory but also has become a standard benchmark problem for quantum chemistry simulation~\cite{babbush2018low}.  Here we will present a formulation that is appropriate for eQED which we call ``rellium'' in analogy to the non-relativistic jellium.  There are two representations that we will consider for the Hamiltonian: position space and momentum space.  Both approaches have different advantages and disadvantages and without further knowledge, it is unclear which will prove to be superior for a given problem without knowing the number of lattice sites in the real-space or reciprocal-space lattice needed for the simulation.

\subsubsection{Lattice eQED}
The Hamiltonian from the previous section can be placed on a discrete cubic lattice with $n_s$ sites and side-length $L$.  In this discrete representation we make the identification that $\psi = \an\sqrt{n_s/L^3}$ and $\bar{\psi} = \cre \gamma^0 \sqrt{n_s/L^3}$ where $\an$ is the standard dimensionless fermionic annihilation operator, and $n_s$ is the total number of lattice sites.  $\psi$ and $\an$ obey the standard fermionic anti-commutation relations
\begin{align}
 \{ \psi_{\vec{x},i}, \psi_{\vec{y},j} \} &= 0 \nonumber\\
 \{ \psi^\dagger_{\vec{x},i}, \psi_{\vec{y},j} \} &= \frac{n_s}{L^3} \delta_{\vec{x},\vec{y}} \, \delta_{i,j}\nonumber \\
\{ \an_{\vec{x},i}, \an_{\vec{y},j} \} &= 0 \nonumber\\
\{ \an^\dagger_{\vec{x},i}, \an_{\vec{y},j} \} &= \delta_{\vec{x},\vec{y}} \, \delta_{i,j}
\end{align}
where the first index labels the position on the lattice and the second index labels the spinor component. The discretized Hamiltonian on the lattice can then be expressed as
$$H_{eQED} = H_{K} + H_m + H_V$$
$$H_m = \frac{L^3}{n_s} \sum\limits_{\vec{x}} m \bar{\psi}_{\vec x} \psi_{\vec x} = \sum_{\vec{x}} m\cre_{\vec x} \gamma^0 \an_{\vec x}$$
\begin{equation}
H_V = \frac{L^6}{n_s^2} \sum\limits_{\vec{x} \neq \vec{y}} \sum_{\mu=0}^3 g_{\mu \mu}\frac{e^2n_s^{1/3}}{8 \pi L \abs{\vec{x} - \vec{y}}} (\bar{\psi}_x \gamma^\mu \psi_x) (\bar{\psi}_y \gamma^\mu \psi_y)    = \sum\limits_{\vec{x} \neq \vec{y}} \sum_{\mu=0}^3 g_{\mu \mu}\frac{e^2n_s^{1/3}}{8 \pi L\abs{\vec{x} - \vec{y}}} (\cre_x \gamma^0 \gamma^\mu \an_x) (\cre_y \gamma^0 \gamma^\mu \an_y) 
\end{equation}
Here we use the convention that $\vec{x}$, $\vec{y}$ and $\vec{p}$ are vectors of integers that index a particular fermionic mode. The state of the system can be represented by using a qubit to represent each $a^\dagger_x a_x$. This will require $4n_s$ qubits total.

A technicality emerges in choosing the correct quantization of the kinetic operator on the lattice, $H_K$.  In particular, if a finite difference on the lattice is used as in
\begin{equation}
H_{\text{na\"ive}} = \frac{L^{2}}{n_s^{2/3}} \sum\limits_{\vec{x}} -i \bar{\psi}_x \sum\limits_{j=1}^{3} \gamma^j \frac{\psi_{x + \hat{j}} - \psi_{x - \hat{j}}}{2}  
\end{equation}
where $\hat{j}$ is a unit vector pointing in the $j$-th direction, a continuum limit will not be recovered \cite{NNTheorem1,NNTheorem2}. Consider a free electron with integer valued momentum $p$, in units
of $2\pi/L$ in one-dimension. The electron's momentum then lies in the range $[-\frac{n_s^{1/3}}{2},\frac{n_s^{1/3}}{2}]$where the energy is given by $E = \sqrt{m^2 + \frac{n_s^{2/3}}{L^2} \sin^2({2\pi p/n_s^{1/3}})}$. Therefore with this kinetic Hamiltonian, both an electron with momentum $\frac{n_s^{1/3}}{2}$ and an electron with zero momentum have energy $m$. Likewise, by periodicity, at every energy there will be twice as many states as there are in the continuum limit. This doubling of states prevents this na\"ive choice of Hamiltonian from correctly reproducing the physics of continuum eQED. Several solutions to this fermion doubling problem have been developed in the study of lattice gauge theories \cite{Doubling1,Doubling2,Doubling3,Doubling4,KAPLAN1992342,Neuberger_1998_1,Neuberger_1998_2}. One solution, known as SLAC fermions, solves this problem by choosing the kinetic term, $H_K$ such that the dispersion relation agrees with the continuum limit \cite{Doubling3,Doubling4}.

\begin{equation}
H_K \rightarrow H_{SLAC} :=  \frac{2\pi L^2}{n_s^{1/3}}\sum\limits_{\vec{x},\vec{y},\vec{p}}  \frac{e^{i {2 \pi n_s^{-\frac{1}{3}}}\vec{p} \cdot (\vec{x} - \vec{y})}}{n_s} \bar{\psi}_{\vec{y}}  \gamma^j  p_j \psi_{\vec{x} }= \frac{2\pi }{n_s^{1/3}L}\sum\limits_{\vec{x},\vec{y},\vec{p}}  {e^{i {2 \pi n_s^{-\frac{1}{3}}}\vec{p} \cdot (\vec{x} - \vec{y})}}\cre_{\vec{y}} \gamma^0 \gamma^j  p_j \an_{\vec{x} }
\end{equation}

This choice of kinetic term solves the fermion doubling at the expense of locality. The cost model we use for the simulation assumes all-to-all couplings between the qubits and can perform CNOT between such pairs as well as all single qubit Clifford operations. Therefore, the loss of locality does not represent a significant cost in this model. Other solutions exist, such as  Wilson's kinetic operator \cite{Doubling1}, domain wall fermions \cite{KAPLAN1992342}, and overlap fermions \cite{Neuberger_1998_1,Neuberger_1998_2},  but for simplicity we focus on the SLAC case.

\subsubsection{Momentum Space Finite Volume eQED}
The momentum space formulation of a relativistic eQED Hamiltonian without an external potential can be viewed as a variant of a well known interacting electron model: jellium. We will follow similar notation to Ref. \citenum{babbush2018low}, where the standard non-relativistic 3 dimensional jellium Hamiltonian is defined as the following in second quantization
\begin{equation}
H_{jel} = \sum_{p,\sigma_1} \frac{k_p^2}{2} a^{\dagger}_{p,\sigma_1} a_{p,\sigma_1} + \frac{1}{2 L^3} \sum_{\substack{(p,\sigma_1) \neq (q,\sigma_2)  \\ \nu \neq 0}} \frac{4 \pi}{k_\nu^2} a^{\dagger}_{p,\sigma_1} a^{\dagger}_{q,\sigma_2} a_{q+\nu,\sigma_2} a_{p-\nu,\sigma_1}
\end{equation}
where the $p,q$ indices are momentum space electronic planewave orbitals, $L$ is the length of one dimension of the cubic simulation box, $\sigma_1, \sigma_2 \in [\uparrow, \downarrow]$ are the fermion spin indices, and $k$ is the momentum. Here $\{a_{p,\sigma_1}^\dagger, a_{q,\sigma_2}\} = \delta_{p,q} \delta_{\sigma_1,\sigma_2}$ and atomic units are chosen such that the charge on the electron $e$, and the electron mass $m_e$, obeys $e=m_e=1$ to match the standard unit choice in the literature. In non-relativistic jellium, the planewave basis is only defined for electronic and spin degrees of freedom
\begin{align}
\varphi_\nu (r) = \sqrt{\frac{1}{L^3}} e^{i k_{\nu} \cdot r},\qquad
k_\nu = \frac{2 \pi \nu}{L}
\end{align}
where $\varphi_\nu (r)$ is a single plane wave as a function of the momentum grid point $\nu \in [-N^{(1/3)}, N^{(1/3)}]^3 \subset \mathbb{Z}^3$  and distance is denoted by $r$. The momentum grid is the same size for both spin-up and spin-down planewaves.

In order to extend the 3D jellium Hamiltonian to a relativistic framework, we use the 4-spinor solution to the Dirac equation to build a planewave basis, and use the second order tree Feynman diagrams for eQED to compute the two particle interactions. As mentioned above, we will refer to this relativistic jellium model as `rellium'. In general, the planewave basis has the same form as above, but the total number of planewaves will be doubled, now that positronic degrees of freedom must be considered. The general form of the rellium Hamiltonian takes into account relativistic particle energies, and also eQED interaction amplitudes that have pair creation $(e^- \rightarrow e^- e^-e^+)$ interactions in addition to the typical two body interactions $(e^- e^- \rightarrow e^-e^-)$. In this form charge is conserved, but not particle number. The rellium Hamiltonian can be described in second quantization as
\begin{align}
H_{rel} =& \sum_{p,\sigma_1} E_p a^{\dagger}_{p,\sigma_1} a_{p,\sigma_1} + \sum_{p,\sigma_1} E_p b^{\dagger}_{p,\sigma_1} b_{p,\sigma_1}  \nonumber\\
&+ \frac{1}{2 L^3} \sum_{\substack{p,q,r \\ \sigma_1, \sigma_2, \sigma_3, \sigma_4}} \frac{\mathcal{M}_{e^-_{p,\sigma_1}e^-_{p,\sigma_2}}^{e^-_{r,\sigma_3}e^-_{p+q-r,\sigma_4}}}{\sqrt{E_p E_q  E_r E_{p+q-r}  }} a^{\dagger}_{p+q-r,\sigma_4} a^{\dagger}_{r,\sigma_3} a_{q,\sigma_2} a_{p,\sigma_4} \nonumber\\
&+ \frac{1}{2 L^3} \sum_{\substack{p,q,r \\ \sigma_1, \sigma_2, \sigma_3, \sigma_4}} \frac{\mathcal{M}_{e^+_{p,\sigma_1}e^+_{q,\sigma_2}}^{e^+_{r,\sigma_3} e^+_{p+q-r,\sigma_4}}}{\sqrt{E_p E_q  E_r E_{p+q-r}  }} b^{\dagger}_{p+q-r,\sigma_4} b^{\dagger}_{r,\sigma_3} b_{q,\sigma_2} b_{p,\sigma_1} \nonumber\\
&+ \frac{1}{2 L^3} \sum_{\substack{p,q,r \\ \sigma_1, \sigma_2, \sigma_3, \sigma_4}} \frac{\mathcal{M}_{e^-_{p,\sigma_1}e^+_{q,\sigma_2}}^{e^-_{r,\sigma_3} e^+_{p+q-r,\sigma_4}}}{\sqrt{E_p E_q  E_r E_{p+q-r}  }} b^{\dagger}_{p+q-r,\sigma_4} a^{\dagger}_{r,\sigma_3} b_{q,\sigma_2} a_{p,\sigma_1} \nonumber\\
&+ \frac{1}{2 L^3} \sum_{\substack{p,q,p_1 \\ \sigma_1, \sigma_2, \sigma_3, \sigma_4}} \frac{\mathcal{M}_{e^-_{p,\sigma_1}}^{e^+_{q,\sigma_2} e^-_{p_1,\sigma_3} e^-_{p-q-p_1,\sigma_4}}}{\sqrt{E_p E_q  E_{p_1} E_{p-q-p_1}  }} a^{\dagger}_{p-q-p_1,\sigma_4} a^{\dagger}_{p_1,\sigma_3} b^{\dagger}_{q,\sigma_2} a_{p,\sigma_1} + h.c. \nonumber\\
&+ \frac{1}{2 L^3} \sum_{\substack{p,q,p_1 \\ \sigma_1, \sigma_2, \sigma_3, \sigma_4}} \frac{\mathcal{M}_{e^+_{p,\sigma_1}}^{e^-_{q,\sigma_2} e^+_{p_1,\sigma_3} e^+_{p-q-p_1,\sigma_4}}}{\sqrt{E_p E_q  E_{p_1} E_{p-q-p_1}  }}  b^{\dagger}_{p-q-p_1,\sigma_4} b^{\dagger}_{p_1,\sigma_3} a^{\dagger}_{q,\sigma_2} b_{p,\sigma_1} + h.c.  \nonumber\\
&+ \frac{1}{2 L^3} \sum_{\substack{p,q,r \\ \sigma_1, \sigma_2, \sigma_3, \sigma_4}} \frac{\mathcal{M}_{0}^{e^-_{p,\sigma_1} e^+_{q,\sigma_2} e^-_{r,\sigma_3} e^+_{-p-q-r,\sigma_4}}}{\sqrt{E_p E_q  E_r E_{-p-q-r}  }} a^{\dagger}_{p,\sigma_1} b^{\dagger}_{q,\sigma_2} a^{\dagger}_{r,\sigma_3} b^{\dagger}_{-p-q-r,\sigma_4} +h.c.  \nonumber\\
&+ \delta m \sum_{p,\sigma_1,\sigma_2} \frac{1}{2 E_p n_s} \Biggr(\bar{u}_{\sigma_1}(p) u_{\sigma_2}(p) a_{p,\sigma_1}^\dagger a_{p,\sigma_2} - \bar{v}_{\sigma_1}(p) v_{\sigma_2}(p) b_{p,\sigma_1}^\dagger b_{p,\sigma_2} \nonumber\\
&+ \bar{v}_{\sigma_1}(-p) u_{\sigma_2}(p) b^{\dagger}_{-p,\sigma_1} a_{p,\sigma_2} + \bar{u}_{\sigma_1}(-p) v_{\sigma_2}(p) a^\dagger_{-p,\sigma_1} b_{p,\sigma_2}
 \Biggr) +\Lambda n_s \label{eq:Hrel}
\end{align}
where the relativistic energy $E_k$ is defined as $E_k = \sqrt{k_x^2 + k_y^2 + k_z^2 + m^2}$ where $m$ is the electron mass, $a, b$ are the annihilation operators corresponding to electronic and positronic degrees of freedom respectively, $\mathcal{M}$ represents the computed eQED amplitudes for the interaction, and $\sigma_1, \sigma_2, \sigma_3, \sigma_4 \in [\uparrow, \downarrow]$ are all independent spin indices. $\delta m$ is the difference between the bare mass and the physical electron mass and $\Lambda$ is a constant added to guarantee that the vacuum has zero energy. It is necessary to include these terms in the Hamiltonian to guarantee that the particles in this discretized theory have a mass equal to the electron mass and that the correct physics is reproduced in the continuum limit. The electron and positron operators obey the standard fermionic anti-commutation relations
\begin{align}
\{ a_{\vec{p},i}, a_{\vec{q},j} \} &= 0 \nonumber \\
\{ a^\dagger_{\vec{p},i}, a_{\vec{q},j} \} &= \delta_{\vec{p},\vec{q}} \, \delta_{i,j} \nonumber \\
\{ a_{\vec{p},i}, b_{\vec{q},j} \} &= 0 \nonumber \\
\{ b_{\vec{p},i}, b_{\vec{q},j} \} &= 0 \nonumber \\
\{ b^\dagger_{\vec{p},i}, b_{\vec{q},j} \} &= \delta_{\vec{p},\vec{q}} \, \delta_{i,j} \ .
\end{align}
The helicity spinors are defined by
\begin{align}
u_1(p) &=
\sqrt{E_p + m} 
\begin{pmatrix}
1 \\
0 \\
\frac{p_z}{E_p + m} \\
\frac{p_x + ip_y}{E_p + m}
\end{pmatrix}, \
u_2(p) =
\sqrt{E_p + m} 
\begin{pmatrix}
0 \\
1 \\
\frac{p_x - ip_y}{E_p + m} \\
\frac{-p_z}{E_p + m} 
\end{pmatrix}
,\notag \\
v_1(p) &=
\sqrt{E_p + m} 
\begin{pmatrix}
\frac{p_x - ip_y}{E_p + m} \\
\frac{-p_z}{E_p + m} \\
0 \\
1
\end{pmatrix} 
, \
v_2(p) =
\sqrt{E_p + m} 
\begin{pmatrix}
\frac{p_z}{E_p + m} \\
\frac{p_x + ip_y}{E_p + m} \\
 1 \\
 0
\end{pmatrix} 
\end{align} 
and are used in the construction of $\mathcal{M}$. The details of the amplitudes $\mathcal{M}$ are discussed in detail in Appendix \ref{appendix:momentumpotential}. The state of the system can be represented on a quantum computer by using a qubit to represent the value of each $a_{p,\sigma}^\dagger a_{p,\sigma}$ and each $b_{p,\sigma}^\dagger b_{p,\sigma}$.  Thus the total number of qubits required to encode the state of the fermionic field here is $4n_s$, in exact agreement with the number required in position space (despite the fact that in momentum basis we explicitly divide the field into a fermionic and anti-fermionic subsystem). 

\subsection{eQED with an External Potential}
In the continuum, when eQED is done in the presence of an external vector potential $A^{ex}(x)$, an additional term 
\begin{equation}
    H_{ext} = - e \int d^3 x \bar{\psi}(x) \gamma^\mu \psi(x) A^{ex}_{\mu}(x)\label{eq:extpot}
\end{equation}
must be added to the Hamiltonian.  This vector potential term is more general than the Coulomb term commonly used for external potentials in chemistry applications.  In part, this is because it applies to general external electric potentials but also because this term includes the vector potential needed to describe interactions with external magnetic fields as well.

Since quantum computer simulations necessarily require discretized wave functions, it is necessary to consider discretizations of Eq.~\eqref{eq:extpot}.  The two natural approaches to discretize the system in a lattice are in position and momentum representations.  In the position representation, the integral of the external potential can be discretized as a finite sum via
\begin{equation}
    H_{L,ext} = - e \frac{L^3}{n_s} \sum_{x} \bar{\psi}_x \gamma^\mu \psi_x A^{ex}_{\mu,x}.
\end{equation}

The momentum space representation can be found from the position space representation by applying the Fourier transform to the field operators.  This approach is analogous to the fermionic Fourier transform used in quantum chemistry simulations~\cite{verstraete2009quantum,babbush2018low}; however, here the transform needs to be performed over all four components of the field.  The transform of the vector potential operator is given by
\begin{equation}
    A^{ex}_{\mu}(x) = \int \frac{d^3x}{(2 \pi)^3} e^{-i p x} \tilde{A}^{ex}_{\mu}(p),
\end{equation}
This term takes the form
\begin{equation}
\begin{aligned}
    H_{ext} =   - e \sum_{\sigma_1,\sigma_2} \int \frac{d^3 p d^3 q}{2\sqrt{E_p E_q}} & \Biggr(\bar{u}_{\sigma_2}(q) \gamma^{\mu} u_{\sigma_1}(p) a^{\dagger}_{q,\sigma_2} a_{p,\sigma_1} \tilde{A}^{ex}_{\mu}(p-q) \\
    & + \bar{u}_{\sigma_2}(q) \gamma^{\mu} v_{\sigma_1}(p) a^{\dagger}_{q,\sigma_2} b^{\dagger}_{p,\sigma_1} \tilde{A}^{ex}_{\mu}(-p-q) \\
    & + \bar{v}_{\sigma_2}(q) \gamma^{\mu} u_{\sigma_1}(p) b_{q,\sigma_2} a_{p,\sigma_1} \tilde{A}^{ex}_{\mu}(p+q) \\
    & + \bar{v}_{\sigma_2}(q) \gamma^{\mu} v_{\sigma_1}(p) b_{q,\sigma_2} b^{\dagger}_{p,\sigma_1} \tilde{A}^{ex}_{\mu}(q-p) \Biggr) \\
\end{aligned}
\end{equation}
in momentum space. 

In the discretized momentum space simulation, the external potential term takes the form of
\begin{equation}
\begin{aligned}
    H_{p,ext} = {}  - e \sum_{\sigma_1,\sigma_2}  \sum_{p,q} \frac{1}{2\sqrt{E_p E_q} L^3} & \Biggr(\bar{u}_{\sigma_2}(q) \gamma^{\mu} u_{\sigma_1}(p) a^{\dagger}_{q,\sigma_2} a_{p,\sigma_1} \tilde{A}^{ex}_{\mu}(p-q) \\
    & + \bar{u}_{\sigma_2}(q) \gamma^{\mu} v_{\sigma_1}(p) a^{\dagger}_{q,\sigma_2} b^{\dagger}_{p,\sigma_1} \tilde{A}^{ex}_{\mu}(-p-q) \\
    & + \bar{v}_{\sigma_2}(q) \gamma^{\mu} u_{\sigma_1}(p) b_{q,\sigma_2} a_{p,\sigma_1} \tilde{A}^{ex}_{\mu}(p+q) \\
    & + \bar{v}_{\sigma_2}(q) \gamma^{\mu} v_{\sigma_1}(p) b_{q,\sigma_2} b^{\dagger}_{p,\sigma_1} \tilde{A}^{ex}_{\mu}(q-p) \Biggr). \\\label{eq:Hpext}
\end{aligned}
\end{equation}
With these additional definitions, we now have a general form of the eQED Hamiltonian that we can use for cost analysis.
%%% here %%%

\section{Trotter-Suzuki Simulations of \lowercase{e}QED}
There are many techniques that have been proposed thus far for simulating quantum dynamics.  The first approach proposed for quantum simulation involves the use of Trotter-Suzuki formulas to compile quantum dynamics into a discrete sequence of gate operations~\cite{lloyd1996universal,zalka1998simulating,berry2007efficient,wiebe2011simulating,su2020nearly}. These approaches are space optimal and can take advantage of properties such as locality and commutation relations that qubitization cannot.  For simulations of the free electron gas, known as jellium, recent work has shown that the scaling of the time complexity of Trotter-Suzuki simulation methods and qubitization are nearly equal.  For this reason, we focus on Trotter-Suzuki simulations.

Trotter-Suzuki simulations can be viewed as a method for compiling the unitary matrix $e^{-iHt}$ as a sequence of unitary gates, $U$, such that $\|e^{-iHt} - U\| \le \delta$, where the notation $\| \cdot \|$ refers to the spectral norm, and $\delta$ is a chosen error threshold.  If $H=\sum_j H_j$ for a set of local Hamiltonians $H_j$, such that $e^{-iH_j \theta}$ can be efficiently compiled as a quantum circuit via
\begin{equation}
    U_2(t):=\left(\prod_{j=1}^m e^{-iH_j t/2} \right)\left(\prod_{j=m}^1 e^{-iH_j t/2} \right)= e^{-iHt} + O\left(\max_{j,k,\ell} \|[H_j,[H_k,H_\ell]]\|t^3\right)
\end{equation}
then this approximation can effectively compile $e^{-iHt}$ into a sequence of unitary operations. Here, $O(\cdot)$ refers to the standard big-O notation denoting an upper bound in the asymptotic limit. Additionally, $\widetilde{O}(\cdot)$, $\Theta(\cdot)$ denotes the asymptotic upper bound with suppressed poly-logarithmic terms, and the asymptotic tight bound respectively used throughout this manuscript.
Since each $H_j$ is assumed to be implementable using a polynomial-sized circuit, this approximation effectively compiles $e^{-iHt}$ into a sequence of unitary operations.
Higher-order variants of the Trotter-Suzuki approximation can be constructed from the symmetric Trotter formula $U_2$ via~\cite{suzuki1990fractal,childs2021theory}
\begin{equation}
\begin{split}
    U_{2k+2}(t) &:= U_{2k}^2(s_{2k}t)U_{2k}((1-4s_{2k})t)U_{2k}^2(s_{2k}t) \\ 
    &\, = e^{-iH t} +O\left(\max_{j_1,\ldots,j_{2k+3}}(\|[H_{j_1},[\cdots[H_{2k+2},H_{2k+3}]\cdots]]\|)t^{2k+3}\right),
\end{split}
\end{equation}
where $s_{2k} = (4-4^{1/(2k+1)})^{-1}$.
Such high-order Trotter-Suzuki formulas are not always superior to their lower-order brethren.  This is because the number of exponentials in $U_{2k}(t)$ is in $\Theta(5^{k}m)$ and hence tradeoffs between the exponential improvements to accuracy yielded increasing $k$ and the exponentially increasing costs of doing so must be made.  Further, as the error in the Trotter-Suzuki approximation depends on the commutators between the Hamiltonian terms, the cost of such simulations can be better than these upper bounds suggest~\cite{childs2018toward,hastings2015improving,childs2021theory}.

Thus in order to construct the operation $U_{2k}(t)$, for some integer value of $k$, we need to develop circuits for implementing each of the terms in the decomposition separately.  That is to say we need to take each Hamiltonian term present in the Hamiltonian and convert them to easily simulatable Hamiltonians before using the Trotter-Suzuki approximation to compile it to a sequence of operations that can be run on a quantum computer.

The individual $H_j$ in our representation will, similar to chemistry, be expressable as Pauli operators through the use of a Jordan-Wigner transformation.  Such a transformation yields the following transformation for the fermionic creation operator $a^\dagger_x$ via
\begin{equation}
    a^\dagger_x \mapsto \frac{(X-iY)_x(\bigotimes_{n<x} Z_n)}{2},
\end{equation}
for some arbitrary canonical ordering of the site labels.
Note that it may be tempting to make this assignment to the field operators $\psi^\dagger$ and $\psi$ but we cannot do so directly since the field operators are dimensionful.  For this reason, we discuss in the following the dimensionless fermionic operators $a_x$ and  will use these operators interchangeably with their anti-particle counterparts, $b_y$.  Other fermionic representations are possible, such as the Bravyi-Kitaev encoding~\cite{seeley2012bravyi}, but here we use Jordan-Wigner for its simplicity.  

One technicality that needs to be considered with the Jordan-Wigner encoding is that the pattern of Pauli-$Z$ operations depends on the lexigraphical ordering of the site labels.  In one-dimension, such orderings are straightforward, but in higher dimensions there are a multitude of natural lexigraphical orderings of the sites (orbitals) that can be chosen.

Here we focus on a simplified cost model for the simulation wherein non-Clifford operations constitute the majority of the cost.  Specifically, we assume our quantum computer has all-to-all couplings between the qubits and can perform CNOT between such pairs as well as all single qubit Clifford operations (which can be formed from products of the Hadamard gate $H$ and the phase gate $S=\sqrt{Z}$).  We also assume that the $T=\sqrt{S}$ gate can be applied to each quantum bit.  Further we assume that all Clifford operations can be implemented without cost and only $T$-gates are costly.  This further motivates why we choose Jordan-Wigner representation for our problem because the additional gates needed to enforce the correct signs from the lexigraphical ordering are all Clifford operations, which we take to be without cost.  Thus, within our cost model, the choice of the ordering of the labels of the sites  will prove to be irrelevant.

With the Jordan-Wigner transformation in place we have all we need to compile the circuit from the Trotter-Suzuki simulation $U_{2k}(t)$ into a sequence of gates that can be executed on a quantum computer.  Below, we discuss ways that exponentials of the one- and two-body terms in the Hamiltonian can be compiled into a gateset involving Clifford gates and single qubit rotations.  We will discuss later what translations need to be done to convert the single qubit rotations into circuits involving $H$ and $T$.

%-----
\subsection{Quantum Circuit for the one-body operators}
The free piece of the eQED Hamiltonian consists of a sum over terms of the form ${\psi}^\dagger_p \psi_q$, however the representation of these terms can vary depending on whether we are interested in the position or momentum basis. In the momentum basis formulation, all terms take the form $a_{\sigma,p}^{\dagger} a_{\sigma,p}$ and $b_{\sigma,p}^{\dagger} b_{s,p}$, which acts only on a single qubit register and is trivial to simulate. Therefore, evolving according to the free Hamiltonian in the momentum basis requires $4n_s$ single qubit gates. In the position space lattice formulation, the free Hamiltonian terms can take one of two more additional forms: $a_x^\dagger a_y +a_y^\dagger a_x$ and $i(a_x^\dagger a_y - a^\dagger_y a_x)$.
These operators can then be converted into Pauli operators using the Jordan-Wigner transformation.  
Assuming without loss of generality that in the canonical ordering we choose $x < y$, the Jordan-Wigner representation of these terms takes the form
\begin{enumerate}
    \item $a_{x}^\dagger a_y + a_{y}^\dagger a_x \xrightarrow{JW} X_y (\bigotimes_{x<n<y} Z_n) X_x + Y_y (\bigotimes_{x<n<y} Z_n) Y_x$
    \item $i(a_{x}^\dagger a_y - a_{y}^\dagger a_x) \xrightarrow{JW} X_y (\bigotimes_{x<n<y} Z_n) Y_x - Y_y (\bigotimes_{x<n<y} Z_n) X_x$
\end{enumerate}
While Case $1$ appears in standard constructions for quantum circuits for simulating chemistry~\cite{whitfield2011simulation}, Case $2$ does not typically arise in existing quantum circuit constructions and so we provide optimized networks for implementing it, shown in Figure~\ref{fig:xy_trotter}.  An optimized circuit for the one-body operations in Case $1$ is given in Figure~\ref{fig:xx_trotter}

\begin{figure}[t!]
    \centering
    \[
    \Qcircuit @C=1em @R=.7em {
& \qw & \targ     & \qw      & \qw      & \qw                       & \ctrl{3} & \qw                       & \ctrl{3} & \qw      & \qw      & \targ     & \qw \\
& \qw & \qw       & \ctrl{1} & \qw      & \qw                       & \qw      & \qw                       & \qw      &\qw      & \ctrl{1} & \qw       & \qw \\ 
& \qw & \qw       & \targ    & \ctrl{1} & \qw                       & \qw      & \qw                       & \qw      &\ctrl{1} & \targ    & \qw       & \qw \\
& \qw & \ctrl{-3} & \gate{H} & \targ    & \gate{e^{i\frac{c}{2}Z}}  & \targ    & \gate{e^{-i\frac{c}{2}Z}} & \targ    &\targ    & \gate{H} & \ctrl{-3} & \qw \\
}
\]
    \caption{Circuit used to implement $e^{i c \left( X_y \left(\bigotimes_{x<n<y} Z_n \right)\otimes X_x + Y_y \left(\bigotimes_{x<n<y} Z_n \right)\otimes Y_x \right)}$} for $y=x+2$.
    \label{fig:xx_trotter}
\end{figure}

\begin{figure}[t!]
    \centering
    \[
    \Qcircuit @C=1em @R=.7em {
& \qw & \targ     & \qw &\qw      & \qw      & \qw                       & \ctrl{3} & \qw                       & \ctrl{3} & \qw      & \qw      & \qw       & \targ & \qw \\
& \qw & \qw       & \qw & \ctrl{1} & \qw      & \qw                       & \qw      & \qw                       & \qw&\qw      & \ctrl{1} & \qw        & \qw & \qw \\ 
& \qw & \qw       & \qw & \targ    & \ctrl{1} & \qw                       & \qw      & \qw                       & \qw &\ctrl{1} & \targ    & \qw        & \qw & \qw \\
& \qw & \ctrl{-3} & \gate{S^\dagger} & \gate{H} & \targ    & \gate{e^{i\frac{c}{2}Z}}  & \targ    & \gate{e^{-i\frac{c}{2}Z}} & \targ & \targ    & \gate{H} & \gate{S} & \ctrl{-3} & \qw \\
}
\]
    \caption{Circuit used to implement $e^{i c \left(X_y \left(\bigotimes_{x<n<y} Z_n \right) \otimes Y_x - Y_y \left(\bigotimes_{x<n<y} Z_n \right)\otimes X_x \right)}$} for $y=x+2$.
    \label{fig:xy_trotter}
\end{figure}

There are $2n_s (4n_s -1)$ couplings of this form in the Hamiltonian. Therefore, Trotter simulation of the free Hamiltonian requires
\begin{equation}
    N_{rot}= 20n_s(4n_s-1) \label{eq:nrotFree}
\end{equation}
single qubit $Z$-rotations. The focus of this paper is on fault-tolerant quantum simulation and so the cost of simulation is dominated by the number of non-Clifford operations needed to implement the rotations. Other cost models may not assume Clifford operations can be implemented without cost and the number of CNOT gates required may be of interest. Ignoring the Jordan-Wigner strings, $8n_s(4n_s-1)$ CNOTs are needed per step. The Jordan-Wigner strings require 

\begin{equation}
4\sum\limits_{n=0}^{4n_s-1} \sum\limits_{k=0}^{n-1} k = 2\sum\limits_{n=0}^{4n_s-1} n(n-1) = \frac{16}{3}n_s(2n_s-1)(4n_s-1)    
\end{equation}
CNOT gates. Therefore a single Trotter step requires $\frac{8}{3} n_s (16n_s^2-1)$ CNOT gates for the free Hamiltonian. However, the number of CNOT operations can be reduced to $\widetilde{O}(n_s^2)$ using fermionic swap networks~\cite{babbush2018low,o2019generalized,hagge2020optimal}.

\subsection{Interaction Circuits}
\label{section:intcircuits}
As shown in the previous section, the eQED Hamiltonian potential is a sum of a $2\rightarrow2$ term that takes the same form as the non-relativistic case, and a new $1 \rightarrow 3$ term. New circuits will be required to simulate this $1 \rightarrow 3$ term which takes the form $\sum \limits_{j>k>l,m} h_{V \, j,k,l,m} a_j^\dagger a_k^\dagger a_l^\dagger a_m + h.c.$. Using the Jordan-Wigner encoding, $a_j^\dagger a_k^\dagger a_l^\dagger a_m$ can take 4 different forms depending on value of $m$. For each case below, it is assumed that for all other sites greater than the highest site index and smaller than the lowest index, the local $j$th operator is simply the identity matrix $I_j$.
\\
Case 1. $m < l$
\begin{equation}
\begin{split}
&a_j^\dagger a_k^\dagger a_l^\dagger a_m = \\
&-\frac{(X- i Y)_j}{2} \otimes \left(\bigotimes_{n = k+1}^{j-1} Z_n \right) \otimes \frac{(X- i Y)_k}{2} \otimes \left(\bigotimes_{n = l+1}^{k-1} I_n \right) \otimes \frac{(X- i Y)_l}{2} \otimes \left(\bigotimes_{n = m+1}^{l-1} Z_n\right) \otimes \frac{(X + i Y)_m}{2} 
\end{split}
\end{equation}
\\
Case 2. $m \in [l+1,k-1] $
\begin{equation}
\begin{split}
&a_j^\dagger a_k^\dagger a_m a_l^\dagger  =   \\
&\frac{(X - i Y)_j}{2} \otimes \left(\bigotimes_{n = k+1}^{j-1} Z_n \right) \otimes \frac{(X - i Y)_k}{2} \otimes \left(\bigotimes_{n = m+1}^{k-1} I_n \right) \otimes \frac{(X + i Y)_m}{2} \otimes \left(\bigotimes_{n = l+1}^{m-1} Z_n \right) \otimes \frac{(X - i Y)_l}{2} 
\end{split}
\end{equation}
\\
Case 3. $m \in [k+1,j-1]$
\begin{equation}
\begin{split}
&a_j^\dagger a_m a_k^\dagger a_l^\dagger  = \\
&-\frac{(X - i Y)_j}{2} \otimes \left(\bigotimes_{n = m+1}^{j-1} Z_n \right) \otimes \frac{(X + i Y)_m}{2} \otimes \left(\bigotimes_{n = k+1}^{m-1} I_n \right) \otimes \frac{(X - i Y)_k}{2} \otimes \left(\bigotimes_{n = l+1}^{k-1} Z_n \right) \otimes \frac{(X - i Y)_l}{2}
\end{split}
\end{equation}
\\
Case 4. $m > j$
\begin{equation}
\begin{split}
&a_m a_j^\dagger a_k^\dagger a_l^\dagger = \\
&\frac{(X + i Y)_m}{2} \otimes \left(\bigotimes_{n = j+1}^{m-1} Z_n \right) \otimes \frac{-(X - i Y)_j}{2} \left(\bigotimes_{n = k+1}^{j-1} I_n \right) \otimes \frac{-(X - i Y)_k}{2} \otimes \left(\bigotimes_{n = l+1}^{k-1} Z_n \right) \otimes \frac{-(X - i Y)_l}{2} 
\end{split}
\end{equation}
\\

Suppressing the chains of $Z$'s and identities, the contribution to the Hamiltonian takes the following forms
\\
Case 1.
$$H = -\frac{h_V + h_V^*}{16}(XXXX + XXYY + XYXY - XYYX + YXXY - YXYX - YYXX - YYYY) $$
\begin{equation}
- i \frac{h_V - h_V^*}{16}(XXXY - XXYX - XYXX - YXXX - XYYY - YXYY - YYXY + YYYX)
\end{equation}
\\ 
Case 2.
$$H = \frac{h_V + h_V^*}{16} (XXXX + XXYY - XYXY - YXXY + YXYX - YYXX + XYYX -YYYY)$$
\begin{equation}
 + i \frac{h_V - h_V^*}{16}(-XYYY - YXYY + YYXY -YYYX - YXXX - XYXX + XXYX - XXXY)
\end{equation}
\\ 
Case 3.
$$H = - \frac{h_V + h_V^*}{16}(XXXX - XXYY + XYXY - YXXY - YXYX + YYXX + XYYX - YYYY)$$
\begin{equation}
-i \frac{h_V - h_V^*}{16} (-XXXY - XXYX + XYXX - YXXX - XYYY + YXYY - YYXY - YYYX)
\end{equation}
\\Case 4.
$$H =  \frac{h_V + h_V^*}{16}(XXXX - XXYY -XYXY + YXXY - XYYX + YXYX + YYXX - YYYY)$$
\begin{equation}
+i \frac{h_V - h_V^*}{16}(XYYY - YXYY - YYXY - YYYX - XXXY - XXYX - XYXX + YXXX).
\end{equation}

Additionally, the eQED Hamiltonian also contains 4 creation/annhilation terms when expressed in the planewave basis.  The Jordan Wigner representation of these terms is
\begin{equation}
\begin{split}
&a_j^\dagger a_k^\dagger a_l^\dagger a_m^\dagger = \\
&\frac{(X- i Y)_j}{2} \otimes \left(\bigotimes_{n = k+1}^{j-1} Z_n \right) \otimes \frac{(X- i Y)_k}{2} \otimes \left(\bigotimes_{n = l+1}^{k-1} I_n \right) \otimes \frac{(X- i Y)_l}{2} \otimes \left(\bigotimes_{n = m+1}^{l-1} Z_n \right) \otimes \frac{(X - i Y)_m}{2} 
\end{split}
\end{equation}

$$H =  \frac{h_V + h_V^*}{16}(XXXX - XXYY - XYXY - YXXY - XYYX - YXYX - YYXX + YYYY)$$
\begin{equation}
+i \frac{h_V - h_V^*}{16}(XYYY + YXYY + YYXY + YYYX - XXXY - XXYX - XYXX - YXXX).
\end{equation}

From these equations, it can be seen that each term contains every possible tensor product of 4 $X$'s and $Y$'s. In a na\"ive Trotterization simulation, each term would be treated separately. However, this is suboptimal because all terms with an even number of $X$'s and $Y$'s commute and all terms with an odd number of $X$'s and $Y$'s commute. Previous work has introduced a circuit to simulate all terms with an even number of $X$'s and $Y$'s. It will be shown here that the same techniques can be adapted for the odd case as well. A Hamiltonian that contains all terms with odd numbers of $X$'s and $Y$'s can be efficiently simulated by using a circuit that will simultaneously diagonalize all terms.

\begin{figure}[H]
    \centering
    \subfloat{
    \Qcircuit @C=1em @R=.7em {
& \qw & \ctrl{3} & \gate{S} & \gate{H}& \qw  \\
& \qw & \targ & \qw & \qw & \qw \\ 
& \qw & \targ & \qw & \qw & \qw \\
& \qw & \targ & \qw & \qw & \qw \\
}
}
    \caption{Circuit $G$ used to diagonalize all tensor products of an odd number of $X$'s and $Y$'s.}
    \label{fig:odd_diagonal}
\end{figure}
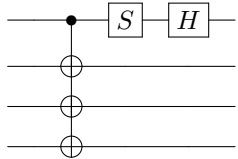

The circuit $G$ can be used to implement time evolution according to a Hamiltonian made out of a sum over the odd tensor products. For example, take
\begin{equation}
    H = XXXY - XXYX - XYXX - YXXX - XYYY - YXYY - YYXY + YYYX.
\end{equation}
Then using the results in Appendix~\ref{appendix:diagonalization} we see that   
\begin{equation}
    H = G (ZZZZ - ZZZ1 - ZZ11 - Z111 + ZZ1Z + Z11Z + Z1ZZ - Z1Z1) G^{\dagger}
\end{equation}
and the resulting circuit in Fig. \ref{fig:odd_pauli} can be used to implement $e^{i \alpha H}$. From Fig. \ref{fig:odd_pauli}, it can be seen that single term in the interaction Hamiltonian requires 12 single qubit gates to implement, and 16 CNOT gates in addition to however many CNOT gates are needed to implement the Jordan-Wigner strings. In the position space formulation, there are at most $256 n_s^2$ terms and in the momentum space formulation there are at most $8192 n_s^3$ terms.

\begin{figure}[t!]
     \centering
\[
    \Qcircuit @C=0.24em @R=.6em {
&\gate{H}&\gate{S^\dagger}&\ctrl{3}&\gate{e^{-i\alpha Z}}&\ctrl{3}&\qw&\qw&\qw&\qw&\qw&\qw&\qw&\qw&\ctrl{3}&\targ&\gate{e^{-i\alpha Z}}&\targ&\gate{e^{-i\alpha Z}}&\targ&\gate{e^{-i\alpha Z}}&\targ&\ctrl{3}&\gate{S}&\gate{H}\\
&\qw&\qw&\targ&\qw&\qw&\qw&\qw&\qw&\ctrl{2}&\qw&\qw&\qw&\ctrl{2}&\qw&\ctrl{-1}&\qw&\qw&\qw&\ctrl{-1}&\qw&\qw&\targ&\qw&\qw&\qw  \\ 
&\qw&\qw&\targ&\qw&\qw&\qw&\ctrl{1}&\qw&\qw&\qw&\ctrl{1}&\qw&\qw&\qw&\qw&\qw&\ctrl{-2}&\qw&\qw&\qw&\ctrl{-2}&\targ&\qw&\qw&\qw \\
&\qw&\qw&\targ&\qw&\targ&\gate{e^{i\alpha Z}}&\targ&\gate{e^{i\alpha Z}}&\targ&\gate{e^{i\alpha Z}}&\targ&\gate{e^{i\alpha Z}}&\targ&\targ&\qw&\qw&\qw&\qw&\qw&\qw&\qw&\targ&\qw&\qw&\qw \\
}
\]

    \caption{Implementation of $e^{i \alpha (XXXY - XXYX - XYXX - YXXX - XYYY - YXYY - YYXY + YYYX)}$}
    \label{fig:odd_pauli}

\end{figure}
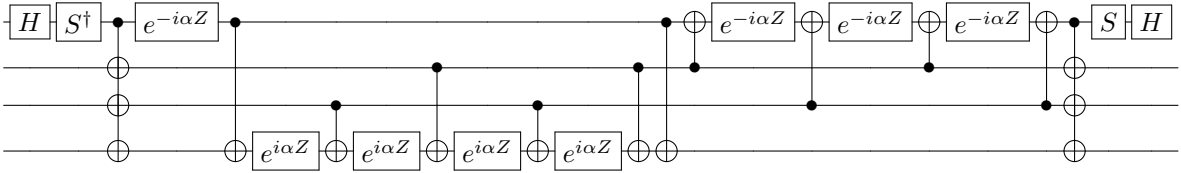

\section{Cost Estimates for \lowercase{e}QED Simulation}
The aim of this section is to provide preliminary cost estimates for simulating effective quantum electrodynamics on quantum computers using Trotter-Suzuki approximations and also provide a comparison to the asymptotic scaling expected from a na\"ive application of qubitization.  All such cost estimates are performed within a computational model wherein Clifford gates are free but non-Clifford gates (specifically the $T$-gate) are not free.  We will consider the cost of simulations both for lattice eQED (position space) or momentum space eQED.
\subsection{Cost Estimates for Lattice eQED}
Estimating the trotterization error for eQED requires the computation of nested commutators of terms in the Hamiltonian~\cite{babbush2015chemical,childs2021theory}. The Hamiltonian for eQED in position space is given by
\begin{equation}\label{eq:Hpos}
    H = H_{SLAC} + H_{m} + H_{int} +H_{L,ext}
\end{equation}
where
$$H_m = \frac{L^3}{n_s}\sum\limits_{\vec{x}} m \bar{\psi}_x \psi_x= \sum_{\vec{x}} m\cre_{\vec x} \gamma^0 \an_{\vec x}$$

$$H_{L,ext} = - e \frac{L^3}{n_s} \sum_{x} \bar{\psi}(x) \gamma^\mu \psi(x) A^{ex}_{\mu}(x)=-\sum_{\vec{x}}e a_{\vec{x}}^\dagger \gamma^0 \gamma^\mu A_\mu^{ex}(x)\an_{\vec{x}} $$

$$
H_{SLAC} = \frac{2\pi }{n_s^{1/3}L}\sum\limits_{\vec{x},\vec{y},\vec{p}}  {e^{i {2 \pi n_s^{-\frac{1}{3}}}\vec{p} \cdot (\vec{x} - \vec{y})}}\cre_{\vec{y}} \gamma^0 \gamma^j  p_j \an_{\vec{x} }=: \sum_{\vec{x},\vec{y}}\sum_{\mu,\mu'}T_{\vec{x},\vec{y}}^{(\mu,\mu')}\cre_{\vec{x},\mu}\an_{\vec{y},\mu'}
$$

\begin{equation}
H_{int} = \sum\limits_{\vec{x} \neq \vec{y}} \sum_{\mu=0}^3 g_{\mu \mu}\frac{n_s^{1/3}e^2}{8 \pi L \abs{\vec{x} - \vec{y}}} (\cre_{\vec{x}} \gamma^0 \gamma^\mu \an_{\vec{x}}) (\cre_{\vec{y}} \gamma^0 \gamma^\mu \an_{\vec{y}})=: \sum_{\vec{x}\ne \vec{y}}\sum_{\mu,\mu'} h^{(\mu,\mu',\nu,\nu')}_{V\, \vec{x},\vec{y}}  \, \cre_{\vec{x},\mu} \an_{\vec{x},\mu'}   \cre_{\vec{y},\nu} \an_{\vec{y},\nu'} ,
\end{equation}
here we have taken the convention that $\vec{x}$ and $\vec{y}$ are $4$-vectors of integers and that the indices $\mu,\nu$ specify one of the components of the $4$-vector.
\begin{theorem}
Let $H$ be the Hamiltonian of Eq.~\eqref{eq:Hpos} with $n_s\ge 8$ sites in the cubic lattice with side-length $L$ and external vector potential operator $A^{ex}(x)$ and let $\Lambda:= \max\left(m+e\max_{x}\|A^{ex}(x)\|,\frac{e^2n_s}{L}, \frac{n_s^{4/3}}{L}\right)$.  Finally let $\ket{\psi}$ be an eigenstate of $H$ such that $H\ket{\psi}= E\ket{\psi}$.   The number of $T$-gates, $N_T$, needed to estimate $E$ within error $\epsilon$ and constant failure probability less than $1/3$ obeys
$$
N_T \in \left(\frac{\Lambda n_s^{2}}{\epsilon} \right)^{1+o(1)}.
$$
\end{theorem}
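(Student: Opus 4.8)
The plan is to combine a phase-estimation resource count with a Trotter error bound driven by nested commutators, following the strategy of Refs.~\cite{babbush2015chemical,childs2019theory}. First I would fix the high-level structure: to estimate the eigenvalue $E$ to additive error $\epsilon$ with bounded failure probability, it suffices to run quantum phase estimation on the Trotter step $U_{2k}(t)$ for a suitable simulation time $t = \Theta(1/\epsilon)$ and a suitable Trotter order $k$, so that the total $T$-count is (number of phase-estimation repetitions) $\times$ (number of Trotter steps $r$) $\times$ (number of $T$-gates per Trotter step). The number of repetitions is $O(1)$ for constant failure probability once the per-step error is controlled, the per-step $T$-count is dominated by the number of single-qubit rotations times the Solovay--Kitaev/Ross--Selinger cost $O(\log(1/\epsilon_{\mathrm{rot}}))$ to synthesize each rotation, and from the circuit-counting in the previous sections the number of rotations per Trotter step is $O(n_s^2)$ (the $O(n_s^2)$ one-body SLAC, mass, and external-potential terms each needing $O(1)$ rotations and the $O(n_s^2)$ two-body interaction terms each needing $O(1)$ rotations, since $\abs{\vec x - \vec y}^{-1}$ sums over pairs). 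So the skeleton gives $N_T \in \widetilde{O}(r \cdot n_s^2)$, and everything reduces to bounding $r$.

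Next I would bound the number of Trotter steps. For a $(2k)$-th order Trotter--Suzuki formula the error per step of duration $t/r$ scales as $\widetilde{O}\bigl(\alpha_{\mathrm{comm}} (t/r)^{2k+1}\bigr)$ where $\alpha_{\mathrm{comm}} = \max \|[H_{j_1},[H_{j_2},[\cdots,H_{j_{2k+1}}]\cdots]]\|$ is the relevant nested-commutator norm over the $O(1)$ blocks into which we partition $H = H_{SLAC}+H_m+H_{int}+H_{L,ext}$ (or into the commuting even/odd Pauli groupings identified earlier). Requiring total error $\le \epsilon$ over the whole evolution of time $t=\Theta(1/\epsilon)$ forces $r \in \widetilde{O}\bigl( t \,(\alpha_{\mathrm{comm}} t^{2k})^{1/(2k+1)} / \epsilon^{1/(2k+1)}\bigr)$; choosing $k$ slowly growing with $n_s$ (the $o(1)$ in the exponent) collapses the $t$ and commutator powers so that $r \in (\alpha_{\mathrm{comm}}/\epsilon)^{1+o(1)}$ up to $\mathrm{poly}(n_s)$ factors that also get absorbed into the $1+o(1)$. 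The crucial input is therefore an estimate $\alpha_{\mathrm{comm}} \in (\Lambda n_s^{2})^{1+o(1)}$ after dividing by an appropriate power — more precisely I expect to show $\alpha_{\mathrm{comm}} \lesssim \Lambda^{2k+1} n_s^{\text{(something)}}$ and then that the $(2k+1)$-st root together with the $n_s^2$ rotations per step yields the claimed $(\Lambda n_s^2/\epsilon)^{1+o(1)}$.

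The main obstacle — and the step I would spend the most care on — is bounding the nested commutator norms, in particular controlling the long-range, nonlocal SLAC kinetic term $H_{SLAC}$ and the Coulomb-like interaction $H_{int}$. Unlike local lattice Hamiltonians, $H_{SLAC}$ couples every pair of sites, so naive counting of commutator terms gives too many powers of $n_s$; I would instead bound operator norms using the structure $\|\cre_{\vec x,\mu}\an_{\vec y,\mu'}\|\le 1$ and the fact that the momentum-space coefficients $T^{(\mu,\mu')}_{\vec x,\vec y}$ arise from a unitary (Fourier) conjugation of a diagonal operator whose entries are bounded by $\Theta(n_s^{1/3}/L)$, so $\|H_{SLAC}\| = O(n_s^{4/3}/L)\le \Lambda$ acting blockwise, and similarly $\|H_{int}\| = O(e^2 n_s/L) \le \Lambda$ using $\sum_{\vec y \ne \vec x}\abs{\vec x-\vec y}^{-1} = O(n_s^{2/3})$ lattice-sum bounds, and $\|H_m + H_{L,ext}\| \le (m + e\max_x\|A^{ex}(x)\|) \cdot O(1) \le \Lambda$ per site. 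The definition of $\Lambda$ in the statement is exactly engineered so that each block has norm $\le \Lambda$ times the relevant combinatorial factor, and a nested commutator of $2k+1$ such blocks is bounded in norm by $(2\Lambda)^{2k+1}$ times a count of how many of the $O(n_s^2)$ terms can fail to commute — here the interaction term's pair structure means each term overlaps $O(n_s)$ others, contributing the extra $n_s$ that combines with the $n_s$ rotations per step to give $n_s^2$. I would then verify that carrying these bounds through the $r$ formula and the rotation-synthesis cost, with $k$ chosen as a slowly-growing function of $n_s$ and $1/\epsilon$, produces the stated $N_T \in (\Lambda n_s^2/\epsilon)^{1+o(1)}$, being careful that all the suppressed $\mathrm{polylog}$ and slowly-growing factors are genuinely $n_s^{o(1)}$ and $\epsilon^{-o(1)}$.
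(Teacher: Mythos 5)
Your overall route is the same as the paper's: high-order Trotter--Suzuki plus phase estimation with $O(1/(\epsilon\, t))$ repetitions, $O(n_s^2)$ rotations per Trotter step, Ross--Selinger synthesis at $O(\log(1/\epsilon_{\rm syn}))$ $T$-gates per rotation, and a nested-commutator bound of the form $n_s\Lambda^{\ell+O(1)}e^{O(\ell)}$ whose $(\ell)$-th root gives the step size $t\sim 1/\Lambda$. You have also correctly identified that the three entries of $\Lambda$ are exactly the per-site induced one-norms $\max_{\vec x}(m+e\|A^{ex}(\vec x)\|)$, $\max_{\vec x}\sum_{\vec y}|V_{\vec x,\vec y}|=O(e^2 n_s/L)$ and $\max_{\vec x}\sum_{\vec y}|T_{\vec x,\vec y}|=O(n_s^{4/3}/L)$; the paper reaches the same quantities by a probabilistic detour (treating the coefficients as random variables, bounding $\mathbb{E}(V)$, $\mathbb{E}(W)$ with Chebyshev and a union bound, and multiplying by the count $O(n_s^{1+N_V+N_T})$ of non-vanishing commutators), which is literally ``expectation times count $=$ row sum,'' so your framing is equivalent and arguably cleaner.

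The gap is in how you propose to certify those induced norms and in the power counting. First, the quantity you actually need for the commutators with the two-body term is the coefficient-level row sum $\max_{\vec x}\sum_{\vec y}|T^{(\mu,\mu')}_{\vec x,\vec y}|$, not the many-body operator norm $\|H_{SLAC}\|$ and not the spectral norm of the Fourier-conjugated diagonal matrix: only terms whose site indices overlap survive the commutator, so the bound is a sum of products of individual coefficients over matched indices. Your appeal to ``unitary conjugation of a diagonal operator with entries $\Theta(n_s^{1/3}/L)$'' gives the spectral norm of the coefficient matrix; to get the row sum you must either do the oscillating geometric-sum estimates the paper spends most of its proof on (which show most entries are $O(n_s^{1/3}/L)$ but entries with $\Delta_2=0$ or $\Delta_3=0$ can be as large as $O(n_s/L)$), or at minimum insert a Cauchy--Schwarz step $\sum_{\vec y}|T_{\vec x,\vec y}|\le\sqrt{n_s}\,\|T\|$. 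Second, if one instead multiplies full block norms, $\|H_{int}\|=O(e^2n_s^2/L)=n_s\Lambda$ rather than $\Lambda$, each commutator level picks up an extra factor of $n_s$, the step size degrades to $t\sim 1/(n_s\Lambda)$, and the final answer becomes $(\Lambda n_s^3/\epsilon)^{1+o(1)}$ --- so the index-matching count is not optional. Relatedly, your closing arithmetic (``the extra $n_s$ combines with the $n_s$ rotations per step to give $n_s^2$'') is off: there are $O(n_s^2)$ rotations per step, and the single global factor of $n_s$ from the one free summation index in the commutator count only enters as $n_s^{1/(p+1)}$ and is absorbed into the $o(1)$ exponent, not multiplied into the $n_s^2$.
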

\begin{proof}
In order to estimate the error in the Trotter-Suzuki approximation, we need to evaluate the commutators between all these terms.  A first step towards this is to estimate the commutators between the individual terms involved.

In order to evaluate the commutator terms involving $H_{SLAC}$ we need to estimate the magnitude of these terms.  As seen above, this involves estimating an oscillating sum.  This expression is symmetric with respect to exchange of $x,y,z$ axis labels and so we will proceed by bounding the coefficient for $k=1$ (i.e. the $x$ component of the momentum.  Further, let $\vec{\Delta} = \vec{x} - \vec{y}$, and $p_k$ be the $k$th component of vector $\vec{p}$. 

\begin{align}
    T_{\vec{x},\vec{y}}^{(1,1)} &=  \frac{2\pi}{n_s^{\frac{1}{3}} L}\sum_{\vec{p}} p_1\left({e^{i {2 \pi n_s^{-\frac{1}{3}}}\vec{p} \cdot \vec{\Delta}}} \right)\nonumber\\
    &=\frac{2\pi}{n_s^{\frac{1}{3}} L}\left( \sum_{p_1=-n_s^{1/3}/2+1}^{n_s^{1/3}/2}p_1 e^{i 2\pi n_s^{-1/3} p_1 \Delta_1}\left(\sum_{p_2,p_3 = -n_s^{1/3}/2+1}^{n_s^{1/3}/2}e^{i 2\pi n_s^{-1/3} (p_2 \Delta_2 +p_3 \Delta_3)} \right) \right)
\end{align}
Next let us assume $\Delta_2>0$ and $\Delta_3>0$.  In this case we have from the formula for the sum of a geometric series that under these circumstances 
\begin{equation}
    \left| \sum_{p_2= -n_s^{1/3}/2+1}^{n_s^{1/3}/2}e^{i 2\pi n_s^{-1/3} (p_2 \Delta_2)}\right|=1+O(n_s^{-1/3}).
\end{equation}
Similarly, if $\Delta_2 =0$ then 
\begin{equation}
    \left| \sum_{p_2= -n_s^{1/3}/2+1}^{n_s^{1/3}/2}e^{i 2\pi n_s^{-1/3} (p_2 \Delta_2)}\right|=n_s^{1/3}.
\end{equation}
We then have that, for $n_s^{1/3} \ge 2$
\begin{align}
    |T_{\vec{x},\vec{y}}^{(1,1)}| &\le\frac{2\pi}{n_s^{\frac{1}{3}} L}\left| \sum_{p_1=-n_s^{1/3}/2+1}^{n_s^{1/3}/2}p_1 e^{i 2\pi n_s^{-1/3} p_1 \Delta_1}\left(\sum_{p_2,p_3 = -n_s^{1/3}/2+1}^{n_s^{1/3}/2}e^{i 2\pi n_s^{-1/3} (p_2 \Delta_2 +p_3 \Delta_3)} \right) \right|\nonumber\\
    &\le \frac{2\pi}{n_s^{\frac{1}{3}} L}\left( \sum_{p_1=-n_s^{1/3}/2+1}^{n_s^{1/3}/2} |p_1| \left|\sum_{p_2,p_3 = -n_s^{1/3}/2+1}^{n_s^{1/3}/2}e^{i 2\pi n_s^{-1/3} (p_2 \Delta_2 +p_3 \Delta_3)} \right| \right)\nonumber\\
    &\in O\left( \frac{1}{n_s^{\frac{1}{3}} L}\left( \sum_{p_1=-n_s^{1/3}/2}^{n_s^{1/3}/2} |p_1| (1+ \delta_{\Delta_2,0}n_s^{1/3})(1+ \delta_{\Delta_3,0}n_s^{1/3}) \right)\right)\nonumber\\
    &\subseteq O\left(\frac{1}{n_s^{\frac{1}{3}} L}\left( \frac{n_s^{1/3}}{2}\left(\frac{n_s^{1/3}}{2} +1 \right) (1+ \delta_{\Delta_2,0}n_s^{1/3})(1+ \delta_{\Delta_3,0}n_s^{1/3}) \right)\right)\nonumber\\
    &\subseteq O\left(\frac{n_s^{1/3}}{L} (1+ \delta_{\Delta_2,0}n_s^{1/3})(1+ \delta_{\Delta_3,0}n_s^{1/3}) \right)
\end{align}
By symmetry, the exact same bound holds by permuting the labels of the indices,
\begin{equation}
    |T_{\vec{x},\vec{y}}^{(\chi,\chi)}| \in O\left(\frac{n_s^{1/3}}{L} (1+ \delta_{\Delta_{\chi+1 ~{\rm mod }~ 3 +1},0}n_s^{1/3})(1+ \delta_{\Delta_{\chi +2~{\rm mod }~ 3 +1},0}n_s^{1/3} )\right) 
\end{equation}
The exact same argument can be applied to find a similar expression for $|T_{\vec{x},\vec{y}}^{(\chi,\xi)}|$ for $\xi \ne \chi$.  Specifically, it can be shown that for each $\chi ,\xi$ there exist $f,g\in  \{1,2,3\}$ such that
\begin{equation}
    |T_{\vec{x},\vec{y}}^{(\chi,\xi)}| \in O\left(\frac{n_s^{1/3}}{L} (1+ \delta_{\Delta_{f},0}n_s^{1/3})(1+ \delta_{\Delta_{g},0}n_s^{1/3} )\right) 
\end{equation}

Next it is straightforward to see that for each of the $n_s$ terms in $H_m$, their coefficients are at most $m$.  The situation for $H_{\rm int}$ is a little more complicated since the coefficients for a given term vary with the distance between $\vec{x}$ and $\vec{y}$.  It is useful for us to envision a probability distribution over the possible coefficients that emerge in the expansion.  Let $V$ be a random variable drawn from a uniform distribution over upper bounds on the coefficients of $T$.  It follows from the above discussion that there exist constants $\kappa_1$ and $\kappa_2$ such that the distribution on the upper bounds on the coefficients for the creation operators in the potential term obeys

\begin{equation}
    P \left(V\le(\kappa_1e^2/L)\right) \in O(1),\qquad P \left(V \ge ( \kappa_2 e^{2}n_s^{1/3}/L) \right) \in O(1/n_s).
\end{equation}
Thus we have that the expectation value of $V$ satisfies
\begin{equation}
    \mathbb{E}(V) \in O \left(\frac{e^2}{L} \right)\label{eq:Vavg}
\end{equation}

Similarly, from the previous discussion it follows that if we define $W$ to be a random variable found by sampling $T^{(\chi,\xi)}_{\vec{x},\vec{y}}$ we find that there exist constants $K_1, K_2,K_3,K_4$ such that the sampled upper bound on the coefficients reads
\begin{equation}
\begin{split}
    P\left(W \le\left(\frac{K_1n_s^{\frac{1}{3}}}{ L } \right)\right) &\in O(1), \\  
    P\left(W \in\left[ \frac{K_2n_s^{2/3}}{L}, \frac{K_3n_s^{2/3}}{L} \right]\right) &\in O(1/n_s^{1/3}), \\
    P\left(W \ge \left(\frac{K_4n_s}{ L} \right)\right) &\in O(1/n_s^{2/3})
\end{split}
\end{equation}
Hence the expectation value of $W$ satisfies
\begin{equation}
    \mathbb{E}(W) \in O\left(\frac{n_s^{1/3}}{L}\right).
\end{equation}
Thus since $W>0$ we have from Chebyshev's inequality that for any constant $\delta>0$, $P(W \ge \delta\mathbb{E}(W)) \in O(1/\delta)$. Thus from the union bound for independent and identically distributed variables $W_1,\ldots,W_N$, $P(W_1\cdots W_N \ge \delta \mathbb{E}(W)^N) \in O(N/\delta)$.  Therefore taking $\delta \in \Theta(n_s^{\lambda N})$ for some constant $\lambda>0$ yields $P(W_1\cdots W_N \ge  (n_s^\lambda \mathbb{E}(W))^N) \in O(N n_s^{-\lambda N})$.  Thus because $W \in O(n_s/L)$ for all inputs,
\begin{equation}
    W^N \in N \mathbb{E}(W)^{N+O(1)}.\label{eq:unionarg}
\end{equation}
The exact same reasoning implies that $V^N$ is similarly bounded.

Now let us consider all commutators consisting of $N_m$ mass terms, $N_{ext}$ external potential terms, $N_T$ kinetic terms and $N_V$ two-body interaction terms.  Each such commutator is a Lie-product between even monomials of creation and annihilation operators of degree at most $4$.  
Each commutator at most doubles the number of terms in the polynomial and increases the degree of each monomial by at most $4$.  For example, we have that if $\phi_{s_1} \phi_{s_2}\phi_{s_3}\phi_{s_4}$ is a monomial taken from the set $\phi\in \{a_1^\dagger, a_1,\ldots, a_{n_s}^\dagger, a_{n_s},1\}$ that for any polynomial in the creation and annihilation operators of degree $d$, $P$, which can be expressed as $P=\sum_j c_j \phi_{\sigma_{j,1}}\cdots \phi_{\sigma_{j,d}}$ we have that
\begin{equation}
\begin{split}
    [\phi_{s_1} \phi_{s_2}\phi_{s_3}\phi_{s_4}, P] &= \sum_j c_j [\phi_{s_1} \phi_{s_2}\phi_{s_3}\phi_{s_4},\phi_{\sigma_{j,1}}\cdots \phi_{\sigma_{j,d}}] \\ 
    &=\sum_{j:\{\phi_s\} \bigcap \{\phi_{\sigma_{j,1}}\cdots \phi_{\sigma_{j,d}}\} \not\subseteq \{\emptyset,1\}} c_j [\phi_{s_1} \phi_{s_2}\phi_{s_3}\phi_{s_4},\phi_{\sigma_{j,1}}\cdots \phi_{\sigma_{j,d}}]
\end{split}
\end{equation}
Therefore there exists a polynomial $Q$ of degree at most $d+4$ in the elements of $\phi$ and a sequence $\Sigma_{\ell,j}$ such that
\begin{align}
    Q= \sum_\ell \gamma_\ell \phi_{\Sigma(\ell,1)}\cdots \phi_{\Sigma(\ell,d+4)} = [\phi_{s_1} \phi_{s_2}\phi_{s_3}\phi_{s_4}, P],
\end{align}
where $\max(|\gamma_\ell|) = \max(|c_j|)$ and $|\{\gamma_\ell\}|\le 2|\{c_j:\{\phi_s\} \bigcap \{\phi_{\sigma_{j,1}}\cdots \phi_{\sigma_{j,d}}\} \not\subseteq \{\emptyset,1\}\}|\le 2 |\{c_j\}|$
If we consider an initial term $P_0$ to be of the form $\phi_{t_1} \phi_{t_2} \phi_{t_3} \phi_{t_4}$ then it is easy to validate that for any $\phi_{s_1}\phi_{s_2}\phi_{s_3}\phi_{s_4}$, $[\phi_{t_1}\phi_{t_2}\phi_{t_3}\phi_{t_4},c\phi_{s_1}\phi_{s_2}\phi_{s_3}\phi_{s_4}]$ is at most an eighth-order polynomial with coefficients at most $c$.  Therefore we have by induction that for any sequence $\chi$

\begin{equation}
    \|[\phi_{\chi_{1,\ell}}\phi_{\chi_{2,\ell}}\phi_{\chi_{3,\ell}}\phi_{\chi_{4,\ell}},[\cdots,[\phi_{\chi_{1,2}}\phi_{\chi_{2,2}}\phi_{\chi_{3,2}}\phi_{\chi_{4,2}},\phi_{\chi_{1,1}}\phi_{\chi_{2,1}}\phi_{\chi_{3,1}}\phi_{\chi_{4,1}}]\cdots]]\|\le 2^{\ell} \max_j\|\phi_j\|^{4\ell} =2^\ell \label{eq:commbd}
\end{equation}
By setting $\ell = N_m + N_{\rm ext} + N_T + N_V$, it therefore follows that the expectation value over indices $\vec{x},\vec{y}$, the norm of the commutators of such terms is with constant probability greater than $2/3$, from the triangle inequality, Eq.~\eqref{eq:unionarg} and the sub-multiplicative property of the operator norm, in
\begin{align}
&O(2^{\ell} \|H_m +H_{L,ext}\|^{N_m +N_{ext}}(N_TN_V)(\mathbb{E}(V))^{N_T}\mathbb{E}(W)^{N_V})\nonumber\\
&\subseteq
    2^{N_m+N_{ext}+ N_T +N_V} N_T N_V ( m+e\max_{\vec{x}}\|A^{ex}(x)\|)^{N_m+N_{ext}} (\max(\kappa_1,\kappa_2)e^2/L)^{N_T+O(1)} \nonumber\\
    &\qquad\times(\max(K_1,K_2,K_3,K_4)/(n_s^{1/3} L) )^{N_V+O(1)})\nonumber\\
    &\subseteq 3^{N_m+N_{ext}+ N_T +N_V} ( m+e\max_{\vec{x}}\|A^{ex}(x)\|)^{N_m+N_{ext}} (\max(\kappa_1,\kappa_2)e^2/L)^{N_T+O(1)}\nonumber\\
    &\qquad\times(\max(K_1,K_2,K_3,K_4)/(n_s^{1/3} L) )^{N_V+O(1)}) \label{eq:commAvg}
\end{align}

The maximum number of non-zero commutators that can arise in the commutator series can be bounded using the following argument.  Let us consider the simplest non-trivial commutator which is of the form $[a_{x}^\dagger a_{y},a_{u}^\dagger a_v]$.  There are clearly $O(n_s^4)$ possible combinations.  However, unless the sets $\{x,y\}$ and $\{u,v\}$ have a non-empty intersection the commutator is zero.  Therefore there are actually $O(n_s^3)$ rather than $O(n_s^4)$ possible non-zero commutators of this form.  Iterating this, it is clear that there are $O(n_s^4) $ non-zero commutators of the form $[a_{s}^\dagger a_{u},[a_{x}^\dagger a_{y},a_u^\dagger a_v]]$.  In general it follows by induction that for the $k$-fold nested commutator (if $k\in O(1)$) that there are at most $n_s^{1+k}$ such terms. 

The situation is even more constrained with terms that arise from the external potential as well as the mass.  Such terms consist of creation and annihilation operators that only act on one fermion site (and $4$ potential components).  Therefore for each such term introduced the site must match one of the other terms in the commutator product otherwise the commutator will be zero.  Thus the number of non-zero commutators in a $k$-fold nested commutator series, where $k\in O(1)$, is also in $O(1)$.

Thus combining these two observations we see that the total number of non-zero commutators that can be formed from a general product is in
\begin{equation}
    O(n_s^{1+N_V+N_T})\, .\label{eq:commnumber}
\end{equation}
Thus the sum over all terms formed by these commutators is simply the number of commutators multiplied by the expectation value of the coefficients.  We can use the expression in Eq.~\eqref{eq:commAvg} to estimate the sum over all commutators by simply multiplying the mean by the number of commutator terms in~\eqref{eq:commnumber}.  This yields

\begin{equation}
    O\left(n_s 3^{N_m+N_{ext} + N_T +N_v} \left( m+e\max_{x}\|A^{ex}(x)\|\right)^{N_m+N_{ext}} \left( \frac{e^2n_s}{L} \right)^{N_T+O(1)} \left(\frac{n_s^{4/3}}{L} \right)^{N_V+O(1)}\right)
\end{equation}
Next let $\Lambda:= \max\left(m+e\max_{x}\|A^{ex}(x)\|,\frac{e^2n_s}{L}, \frac{n_s^{4/3}}{L}\right)$.  The sum of over all commutators with $N_m+N_{ext}+N_T+N_v=\ell$ is then in
\begin{equation}
    n_s\Lambda^{\ell+O(1)} e^{O(\ell)}
\end{equation}
Therefore the size of any commutator that arises from the expansion of the error in the Trotter-Suzuki formula, using time-step $t$, is in
\begin{equation}
    n_s(\Lambda t)^\ell e^{O(\ell)}
\end{equation}
It then follows from Theorem 6 of~\cite{childs2021theory} that the Trotter-Suzuki error for a $p^{\rm th}$ order formula is therefore in
\begin{equation}
    \Delta_{TS}(p) \in n_s(\Lambda t)^{p+1} e^{O(p)}.
\end{equation}
Thus we can choose $t$ such that the error, $\Delta_{TS}$ is at most $\epsilon_{TS}$ for
\begin{equation}
    t\in O\left(\frac{1}{\Lambda}\left(\frac{\epsilon_{TS}}{n_s \Lambda^{O(1)}}\right)^{1/(p+1)} \right)
\end{equation}

Next note that each product formula of order $p$ consists of a product of $5^{p/2-1}$ second-order Trotter formulas.  Each such product formula is composed of $O(n_s^2)$ exponentials.  Therefore we have that the number of exponentials in the product formula is~\cite{berry2007efficient}
\begin{equation}
    N_{\exp} = O\left(n_s^25^{p/2 -1} \right) \  .
\end{equation}
Next let $U_{\rm TS}(t)$ be the Trotter-Suzuki formula for $e^{-iHt}$.  If we apply phase estimation to $U_{\rm TS}(t)$ the eigenvalues returned are, with high probability, those of $Ht$.  Therefore if we wish to estimate the eigenvalues of $H$ within error $O(\epsilon_{\rm TS})$ the phase estimation protocol requires $O(1/(\epsilon_{\rm TS} t))$ repetitions. We choose the error in the phase estimation protocol to be in $\Theta(\epsilon_{TS})$ because we want to ensure that $\epsilon_{PE} + \epsilon_{TS} \le \epsilon$ where $\epsilon $ is our total error tolerance.  This can be attained by choosing $\epsilon_{PE} \in \Theta(\epsilon_{TS})$ as we do here.

The cost of performing phase estimation and estimating the energy within error $\epsilon=\Theta(\epsilon_{TS} t)$ and probability of failure $\delta<1/3$ is in
\begin{equation}
    O\left(\frac{N_{\exp}}{\epsilon_{\rm TS} t} \right)\subseteq O\left(\frac{n_s^{2+1/(p+1)}\Lambda^{1+o(1)} 5^{p/2}}{\epsilon_{\rm TS}^{1+1/(p+1)}} \right) \ .
\end{equation}
In practice, however, we only guarantee that $\|e^{-iHt} - U_{\rm TS}(t) \| \le \epsilon_{\rm TS}$ and need to ensure that the errors in the eigenvalues of the unitaries are comparable.  The necessary result follows from Theorem 6.3.2 of~\cite{horn2012matrix} and using this result and the fact that unitary matrices are unitarily diagonalizable that if $\lambda_x(\cdot)$ is the $x^{\rm th}$ eigenvalue of a matrix then for any $x$ there exists a $y$ such that
\begin{equation}
    |\lambda_x(e^{-iHt}) - \lambda_y(U_{\rm TS}(t))| \le \|e^{-iHt} - U_{\rm TS}(t)\| \le \epsilon_{TS} \ .\label{eq:eigenbd}
\end{equation}
Next, choosing the error from this step such that $\epsilon_{\rm TS} \in \Theta( \epsilon)$ we then have after optimizing over the value of $p$ as per~\cite{berry2007efficient} the number of exponentials needed for the simulation is in
\begin{equation}
    O\left(\frac{N_{\exp}}{\epsilon t} \right)\subseteq\left(\frac{\Lambda^{1+o(1)} n_s^{2+o(1)}}{\epsilon^{1+o(1)} } \right) \ .\label{eq:expscale}
\end{equation}

Gate complexity estimates then easily follow from Eq.~\eqref{eq:expscale}.  The exponential that requires the most $T$-gates to simulate is given in Fig.~\ref{fig:odd_pauli}.  It consists of $8$ Pauli operations.  From Box 4.1 of Nielsen and Chuang~\cite{nielsen2002quantum}, it suffices to synthesize each rotation within error $\epsilon / N_{\exp}$.  Using an optimal synthesis method, such as~\cite{kliuchnikov2012fast,ross2014optimal} this can be achieved using $O(\log(N_{\exp} / \epsilon))$ $T$-gates.  Therefore the number of $T$-gates needed for the simulation is in
\begin{equation}
    N_T \in \left(\frac{\Lambda n_s^{2}}{\epsilon} \right)^{1+o(1)}.\label{eq:TgatesLattice}
\end{equation}
\end{proof}

A key assumption in eQED is that the mass energy of the electrons dominates the momentum contributions.  This is necessary because the derivation of eQED truncates the path integral expansion of the propagator at second order.  The case that most closely resembles the canonical case in the electronic structure literature is where $n_s^{1/3} / L \ll m \ll n_s^{4/3}/L$~\cite{babbush2018low,berry2019qubitization,lee2020even}.  In this non-relativistic case considered in the electronic structure literature, the mass energy of the electron is not included and so the Trotter error is dominated by the momentum of the terms in the Hamiltonian and the number of $T$ gates needed for the simulation scales as $N_T\in (n_s^{10/3}/L\epsilon)^{1+o(1)}$.  If we consider the thermodynamic limit where $L \in \Theta (n_s^{1/3})$, we then have that $N_T \in (n_s^3/\epsilon)^{1+o(1)}$.  This result is comparable to some of the earlier results for simulations of electronic structure in local-bases~\cite{babbush2018low}, but does not precisely match these bounds because of the use of the SLAC kinetic operator, which is much less local than the corresponding kinetic operator used in planewave-dual simulations~\cite{babbush2018low,childs2021theory}.

\subsection{Cost Estimates for Momentum Basis Simulations using Trotter}
The calculation of the norm of the nested commutators for the momentum space Hamiltonian are needed to estimate the Trotterization error for the momentum space simulation.  Fortunately, these nested commutators are much easier to evaluate than their position space brethren because of the lack of summation over auxiliary indices in the definition of the interaction and constraint terms in the Hamiltonian.

\begin{theorem}\label{thm:momCost}
Let $H_p$ be the momentum space effective QED Hamiltonian of Eq.~\eqref{eq:Hrel} and Eq.~\eqref{eq:Hpext} in three spatial dimensions in a cavity of volume $L^3$ with electrons of mass $m$ and charge $e\in O(1)$ and external vector potential $A^{ex}$ such that for any momentum mode within the reciprocal lattice, $|E_p -m|\in o(1)$. Then, there exists a quantum algorithm that when provided a state $\ket{\psi}$ such that $H_p\ket{\psi} = E\ket{\psi}$, the energy value $E$ can be estimated within error $\epsilon$ for any $\epsilon>0$ and failure probability at most $1/3$ using a number of $T$-gates that is in
$$
\left(\frac{n_s^3 \Lambda_p }{ \epsilon} \right)^{1+o(1)} ,
$$
where $\Lambda_p = O\left(n_s\left(\frac{m}{n_s} + \frac{e^2n_s}{L} + e|A^{ex}| \right) \right).$
\end{theorem}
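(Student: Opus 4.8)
The plan is to follow the template of Theorem~1, taking advantage of the fact that in momentum space every term of $H_p$ appearing in~\eqref{eq:Hrel} and~\eqref{eq:Hpext} is a monomial in $a,a^\dagger,b,b^\dagger$ of degree at most four whose coefficient is a \emph{closed-form} number rather than a lattice sum over auxiliary indices, so that bounding a nested commutator reduces to multiplying per-term coefficient bounds by a count of nonzero commutators, with no oscillating-sum estimates required. The first step is to bound the coefficient of each term type. Using the hypothesis $|E_p-m|\in o(1)$ uniformly over the reciprocal lattice, every denominator $\sqrt{E_pE_qE_rE_{p+q-r}}$ equals $m^2(1+o(1))$, every helicity-spinor bilinear $\bar u_\sigma(q)\gamma^\mu u_{\sigma'}(p)$ and its $u$--$v$ and $v$--$v$ analogues is $O(m)$ directly from the explicit spinors, and the photon kernel $4\pi/k_\nu^2$ that enters $\mathcal M$ is at most $O(L^2)$ since the exchanged momentum obeys $k_\nu^2\ge(2\pi/L)^2$ on the lattice; hence (Appendix~\ref{appendix:momentumpotential}) $|\mathcal M|\in O(e^2m^2L^2)$ in every channel, so each interaction coefficient $\tfrac{1}{2\Omega}\mathcal M/\sqrt{E_pE_qE_rE_{p+q-r}}$ is $O(e^2/L)$. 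The free terms carry coefficients $E_p=m+o(1)$, the external-potential terms of~\eqref{eq:Hpext} carry coefficients $O(e|A^{ex}|)$ after bounding $|\tilde A^{ex}_\mu(k)|\le L^3|A^{ex}|$, and the $\delta m$ and $\Lambda n_s$ counterterms --- whose values are fixed by demanding the physical electron mass and a zero-energy vacuum in the continuum limit --- contribute coefficients dominated by these, with the c-number $\Lambda n_s$ dropping out of every commutator.

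Next I would count how many nested commutators are nonzero, mirroring the bookkeeping in Theorem~1. Two four-body interaction terms commute unless they share a mode; momentum conservation labels each such term by three free momenta (the spin indices contribute only $O(1)$), so a shared-mode condition is a single constraint and each additional interaction term in an $\ell$-fold nested commutator multiplies the count of nonzero terms by $O(n_s^2)$. The free and $\delta m$ terms are labelled by a single momentum and, exactly like the mass term in Theorem~1, can only contribute nontrivially when that momentum already appears, adding a factor $O(1)$ each, while the external terms are labelled by two momenta and add $O(n_s)$ each. Combining with the submultiplicativity of the operator norm and the coefficient bounds above, the sum of the norms of all $\ell$-fold nested commutators of terms of $H_p$ lies in $n_s\,\Lambda_p^{\,\ell+O(1)}e^{O(\ell)}$ with $\Lambda_p=O\big(n_s(m/n_s+e^2n_s/L+e|A^{ex}|)\big)$, the three summands $m$, $e^2n_s^2/L$, and $en_s|A^{ex}|$ coming respectively from the per-level factors of the free, interaction, and external terms.

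Finally I would assemble the cost in the usual way: Theorem~11 of~\cite{su} gives $\Delta_{TS}(p)\in n_s(\Lambda_p t)^{p+1}e^{O(p)}$ for the $p$th-order formula; the momentum-space product formula contains $N_{\exp}\in O(n_s^3\,5^{p/2})$ exponentials, dominated by the $O(n_s^3)$ four-body terms, each of which (Sec.~\ref{section:intcircuits}) compiles to $O(1)$ single-qubit rotations; running phase estimation on $U_{TS}(t)$ applied to the given eigenstate $\ket\psi$ estimates $E$ to accuracy $\epsilon=\Theta(\epsilon_{TS}t)$ using $O(N_{\exp}/(\epsilon_{TS}t))$ repetitions, with the eigenvalue perturbation bound (Theorem 6.3.2 of~\cite{horn2012matrix}) guaranteeing the spectral error is controlled by $\|e^{-iHt}-U_{TS}(t)\|\le\epsilon_{TS}$; optimizing over $p$ and synthesizing each rotation to precision $\epsilon_{TS}/N_{\exp}$ with $O(\log(N_{\exp}/\epsilon_{TS}))$ $T$-gates via~\cite{ross2014optimal} then yields $N_T\in(n_s^3\Lambda_p/\epsilon)^{1+o(1)}$. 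The step I expect to be the main obstacle is the first one: obtaining a clean, channel-independent $O(e^2/L)$ bound on the interaction coefficients requires simultaneously controlling the infrared behaviour of the photon propagator, the normalization of the four-spinors, and the renormalization constants $\delta m$ and $\Lambda$, across every interaction channel and its Hermitian conjugate.
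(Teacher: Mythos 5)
Your proposal follows essentially the same route as the paper's proof: bound the per-term coefficients (kinetic $\Theta(m)$, interaction $O(e^2/L)$ via $|\mathcal{M}|\in O(m^2L^2)$ over $L^3m^2$ denominators, external $O(e|A^{ex}|)$), count nonzero nested commutators as $O(n_s^{1+N_{ext}+2N_V})$, and assemble the cost through Theorem~11 of~\cite{su}, phase estimation with the eigenvalue perturbation bound, and rotation synthesis. The only differences are presentational — you make the infrared bound on the photon kernel and the treatment of the $\delta m$ and $\Lambda n_s$ counterterms explicit where the paper leaves them implicit — so the argument is correct and matches the paper's.
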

\begin{proof}
The Hamiltonian is the sum of three terms, the kinetic energy term, the electron-electron interaction term and the external potential term.  First let us consider the kinetic term, which is trivial in a momentum basis
\begin{equation}
    H= \sum_{\sigma, \nu} C_{\sigma,\nu} (a^\dagger_{\sigma,\nu} a_{\sigma,\nu}+ b^\dagger_{\sigma,\nu} b_{\sigma,\nu})
\end{equation}
where 
\begin{equation}
    C_{\sigma,\nu} := {E_\nu } = \sqrt{m^2 + \frac{4\pi^2}{L^2} |\vec{\nu}|^2 }\in \Theta(m)
\end{equation}

The two-body interactions are much more complicated in momentum representation.  For example, the fermion-fermion interaction can be written as
\begin{equation}
     \sum_{p,q,r,\sigma_1,\sigma_2,\sigma_3,\sigma_4} e^2D_{p,q,r,\sigma_1,\sigma_2,\sigma_3,\sigma_4} a^{\dagger}_{(p+q-r),\sigma_4} a^{\dagger}_{r,\sigma_3} a_{q,\sigma_2} a_{p,\sigma_1},
\end{equation}
where 
\begin{equation}
    |D_{p,q,r,\sigma_1,\sigma_2,\sigma_3,\sigma_4}| \le \frac{\mathcal{M}}{4L^3 \sqrt{E_{p+q-r} E_r E_p E_q}} \in \Theta\left(\frac{\mathcal{M}}{L^3 m^2} \right) \ . \label{eq:Dbd}
\end{equation}
Here for convenience we take $\mathcal{M}$ an upper bound on the values of the coefficients in Eq.~\eqref{eq:M1}, \eqref{eq:M2}, \eqref{eq:M3}, \eqref{eq:M4} and \eqref{eq:M5}.  By doing so we make the result of~\eqref{eq:Dbd} hold for all the two body interactions in the problem.  First, we see that (in units where $e=1$)
\begin{equation}
\begin{split}
    \mathcal{M} &\in O\left(\max_{p,q}\left(\frac{\max_{j}\{\|u_j(p)\|^4,\|v_j(p)\|^4\} }{|(E_p - E_q)^2 - \frac{4\pi^2}{L^2}|\vec{p} - \vec{q}|^2|} \right)\right) \\
    &\subseteq O\left(\max_p\left(\frac{ (E_p +m)^2}{\min_q |(E_p - E_q)^2 - |\vec{p} - \vec{q}|^2/L^2|} \right)\right) \\
    &\subseteq O\left({m^2L^2}\right)
\end{split}
\end{equation}
Therefore,
\begin{equation}
    |D_{p,q,r,\sigma_1,\sigma_2,\sigma_3,\sigma_4}| \in O\left(\frac{1}{L} \right).
\end{equation}
The exact same scaling holds by inspection for every two body term in the momentum space Hamiltonian.

The external potential (in momentum space) is given by Eq.~\eqref{eq:Hpext}. 
The Hamiltonian in this case can be chosen (in units where $e=1$) to be
\begin{equation}
    H_{p,ext} = \sum_{\sigma_1,\sigma_2,p,q} E_{\sigma_1,\sigma_2,p,q} a^\dagger_p a_q
\end{equation}
where 
\begin{equation}
    |E_{\sigma_1,\sigma_2,p,q}| \in O\left(  \frac{\max_p |u_p|^2 e\max |A^{ex} |}{m} \right) \subseteq O\left(\frac{\max_p(E_p +m) e\max|A^{ex}|}{m} \right) \subseteq O(e|A^{ex}|).
\end{equation}

Next let us consider a Lie-Polynomial of kinetic, mass, interaction and external potential terms consisting of $N_m, N_{\rm ext}, N_T, N_V$ mass, external potential, kinetic and two-body interaction terms.  As noted above, the indices for each mass term must match the indices of existing terms in the polynomial; whereas all other terms must match at least one term.  Therefore the total number of valid commutators that can be present is in
\begin{equation}
    O(n_s^{1+N_{\rm ext} + 2N_V}).
\end{equation}
Next, let us assume that $N_{\rm ext} + N_T +N_V = \ell$.  We can then see that the sum of all nested commutators of order $\ell$ is in

\begin{align}
  &O\left(2^\ell (\max_{\sigma,\nu} |C_{\sigma,\nu}|^{N_T} \max_{\substack{\mathstrut p,q,r,\\ \sigma_1,\sigma_2,\sigma_3,\sigma_4 }}|D_{p,q,r,\sigma_1,\sigma_2,\sigma_3,\sigma_4}|^{N_v} +\max_{p,q,\sigma_1,\sigma_2} |E_{p,q,\sigma_1,\sigma_2}|^{N_{ext}} n_s^{1+N_{ext}+2N_V}) \right)  \nonumber\\
  & \subseteq O\left((2n_s)^\ell \left(\frac{(\max_{\sigma,\nu} |C_{\sigma,\nu}/n_s|}{n_s}\right)^{N_T} \max_{\substack{\mathstrut p,q,r,\\ \sigma_1,\sigma_2,\sigma_3,\sigma_4}}(e^2|D_{p,q,r,\sigma_1,\sigma_2,\sigma_3,\sigma_4}|)^{N_v} \left(n_s{\max_{p,q,\sigma_1,\sigma_2} |E_{p,q,\sigma_1,\sigma_2}|}\right)^{N_{ext}}  \right)
\end{align}
Next if we let 
\begin{align}
    \Lambda_p :=& \max\left\{\frac{\max_{\sigma,\nu} |C_{\sigma,\nu}/n_s|}{n_s}, \max_{\substack{\mathstrut p,q,r,\\\sigma_1,\sigma_2,\sigma_3,\sigma_4}}e^2|D_{p,q,r,\sigma_1,\sigma_2,\sigma_3,\sigma_4}|,\,\, n_s\, {\max_{p,q,\sigma_1,\sigma_2} |E_{p,q,\sigma_1,\sigma_2}|}\right\}\nonumber\\
    &\in O\left(n_s\left(\frac{m}{n_s} + \frac{e^2n_s}{L} + e\max|A^{ex}| \right) \right)
\end{align}
We then have from Theorem 6 of~\cite{childs2021theory} that the error in the $p^{\rm th}$-order Trotter-Suzuki formula is
\begin{equation}
    \Delta_{TS}(p) \in n_s(\Lambda_p t)^{p+1} e^{O(p)}.
\end{equation}
Thus if we wish to have $\Delta_{TS}(p) \le \epsilon_{TS}$ then it suffices to choose
\begin{equation}
    t\in \Theta\left( \frac{1}{\Lambda_p}\left(\frac{\epsilon_{TS}}{n_s}\right)^{1/(p+1)} \right)
\end{equation}
We can then invoke Eq.~\eqref{eq:eigenbd} to show that this corresponds to a systematic error of at most $\epsilon_{TS}$ in the eigenvalues of $e^{-iHt}$ that arises from the Trotter-Suzuki approximation.  Let us define the correct eigenphase that we would see from phase estimation to be $E t$ and the approximate phase $\widetilde{E}_{TS} t$.  This means we can use phase estimation on the Trotter-Suzuki approximation to learn the eigenphase $\widetilde{E}_{TS} t$ within error $\epsilon_{TS} t$ and probability of failure less than $1/3$ using $O(1/\epsilon_{TS} t)$ applications of the Trotter-Suzuki formula~\cite{higgins2007entanglement}.  Thus the total number of operator exponentials that need to be invoked in the simulation in order to learn the $\widetilde{E} t$ within error $O(\epsilon_{TS} t)$ is in
\begin{equation}
    O\left(\frac{N_{\exp}}{\epsilon_{\rm TS} t} \right)\subseteq O\left(\frac{n_s^{3+1/(p+1)}\Lambda_p 5^{p/2}}{\epsilon_{\rm TS}^{1+1/(p+1)}} \right),\label{eq:pNexp}
\end{equation}
Provided that $t(\|H\|+\epsilon_{TS})\le\pi$, we can unambiguously infer $E$ from this result by taking
\begin{equation} \label{eq:energy_phase}
    E = \frac{\widetilde{E} t}{t} + O(\epsilon_{TS}),
\end{equation}
which implies that this estimate also suffices to provide $E$ within error $O(\epsilon_{TS})$ with probability at least $2/3$.  

The final step involves following the reasoning laid out in~\cite{berry2007efficient,childs2021theory} to choose $p$ to minimize the number of operator exponentials needed to achieve error $\epsilon\ge \epsilon_{\rm TS}$.  This corresponds to taking $p\in O(\sqrt{\log(n_s t/\epsilon)})$, which when substituted into Eq.~\eqref{eq:pNexp} leads to a number of operator exponentials that scales as $(\frac{n_s^3 \Lambda_p }{ \epsilon})^{1+o(1)}$, where $(
\cdot)^{o(1)}$ is used to refer to factors that are at most sub-polynomial (but not necessarily poly-logarithmic).

We then see from our circuit constructions that each operator exponential requires a number of $T$ gates that scales as $O(\log(n_s \Lambda t /\epsilon))$ thus the number of $T$-gates required by the simulation obeys
\begin{equation}
    N_T \in  \left(\frac{n_s^3 \Lambda_p }{ \epsilon} \right)^{1+o(1)}.
\end{equation}
\end{proof}

The asymptotics of the simulation complexity in momentum space are interesting for a number of reasons.  First, let us consider the case where the two-body interaction term dominates the Trotter-Suzuki error.  This occurs when $\frac{n_s^{1/3}}{L}\ll m\ll \frac{n_s^2}{L}$.  Note that we require that the lower bound hold in order to justify the assumptions in Theorem~\ref{thm:momCost} as well as to ensure that we remain in the situation where effective QED, rather than full QED, is appropriate.  In the thermodynamic limit, we take $L \propto n_s^{1/3}$ and therefore have that
\begin{equation}
     N_T^{therm} \in \left(\frac{n_s^{4+2/3}}{\epsilon } \right)^{1+o(1)}\label{eq:plimits}
\end{equation}
The continuum limit, unfortunately, is not naturally defined without making further promises on the system.  This is because in the continuum limit we need to take $L\in o(n_s^{1/3})$, which  leads to momentum modes where the kinetic contribution to the energy dominates the mass energy.  Such modes invalidate our assumptions and so  effective QED cannot be considered valid in the continuum limit for finite mass electrons without imposing restrictions on the input state. Note that these issues arise for both the position and momentum space formulations of eQED. Taking the continuum limit is also complicated by the issue of renormalization. To ensure the electron mass takes the correct value and the potential between two electrons has the correct $1/r$ dependence, the electron mass and charge in the Hamiltonian must be varied as a function of the lattice spacing. For the continuum limit of QED, this leads to the electron charge being forced to zero as the lattice spacing goes to zero. This is known as triviality and is due to QED likely not being a valid interacting field theory when defined without a cutoff \cite{triviality1,triviality2,triviality3,triviality4}.

\subsection{Cost Estimates for Qubitization}\label{sec:qubitization}

In recent years, qubitization has emerged as an alternative to Trotter-Suzuki simulations on fault tolerant hardware~\cite{childs2010relationship,low2019hamiltonian,gilyen2019quantum,babbush2018encoding,von2020quantum,lee2020even}.  Unlike Trotter-Suzuki methods, qubitization is known to saturate lower bounds on the query complexity for quantum simulation.  Further, it is much simpler to provide tight bounds for the complexity of simulation using qubitization~\cite{childs2018toward}.  However, qubitization is not space optimal and further cannot directly exploit locality of the Hamiltonian to reduce the costs of simulation unlike Trotter-Suzuki methods.  For these reasons, qubitization does not supplant Trotter-Suzuki methods but rather provide us with a new set of tools that can perform favorably to Trotter-Suzuki methods under certain circumstances.

The central idea of qubitization is that a walk operator, $W\in \mathbb{C}^{(N+M)\times (N+M)}$, can be constructed for any Hamiltonian such that if $H=\sum_j \lambda_j H_j \in\mathbb{C}^{N\times N}$ for unitary $H_j$ then for every eigenvector $\ket{\psi}$ of $H$ and any integer $q$, $W^q \ket{\psi} \ket{0}^M$ is a vector within a two dimensional subspace spanned by the non-orthogonal vectors $\ket{\psi} \ket{0}^M$ and $W\ket{\psi} \ket{0}^M$.  Further let $\lambda = \sum_{j=1}^m |\lambda_j|$.  With these assumptions in place, if we define $\ket{\psi}^\perp$ to be the orthogonal component of $W\ket{\psi}\ket{0}^M$, then within the basis $\ket{\psi}\ket{0}^M$ and $\ket{\psi^\perp}$, the walk operator restricted to this two-dimensional subspace then takes the form
\begin{equation}
    ({\ketbra{\psi}{\psi}\otimes \ketbra{0}{0}+\ketbra{\psi^\perp}{\psi^\perp}})W({\ketbra{\psi}{\psi}\otimes \ketbra{0}{0}+\ketbra{\psi^\perp}{\psi^\perp}})= \begin{bmatrix} \frac{\bra{\psi}H\ket{\psi}}{\lambda} & -\sqrt{1 - \frac{\bra{\psi}H\ket{\psi}^2}{\lambda^2}}\\
    \sqrt{1 - \frac{\bra{\psi}H\ket{\psi}^2}{\lambda^2}} & \frac{\bra{\psi}H\ket{\psi}}{\lambda}
    \end{bmatrix}
\end{equation}
which is isospectral to the rotation $e^{-iY\cos^{-1}(\bra{\psi} H \ket{\psi}/\lambda )}$.  

This shows that if the eigenvalues of $H$ can be estimated, given knowledge of $\lambda$, we can then construct an estimator $\hat{E}$ for the energy, from an estimator for the phase $\hat{\phi}$ using~\cite{babbush2018encoding,poulin2018quantum}
\begin{equation}
    \hat{E} = \lambda\cos(\hat{\phi}).
\end{equation}
Thus the energy can be estimated within error $\epsilon$ using $O(\lambda/\epsilon)$ applications of $W$~\cite{babbush2018encoding}.  
The position space normalization can be found by examining the Jordan-Wigner representation of each of the terms individually.  Specifically, we have that for each $\vec{x}$, $a^\dagger_{\vec{x}}$ is expressed as a sum of $2$ Pauli-operators in the Jordan-Wigner representation.  Since Pauli operators are unitary we can compute the asymptotic scaling of any term in the Hamiltonian by treating the fermionic operators as if they were unitary because only constant factors are introduced by expanding the Jordan-Wigner representation.  We further have that $\lambda = \lambda_m + \lambda_{ext} + \lambda_{int}+\lambda_{SLAC}$, which are the contributions to the normalization terms from the mass, external vector potential and the kinetic operator.  These expressions are straightforward to compute
\begin{align}
    H_m &= \sum_{\vec{x}} m a_{\vec{x}}^\dagger \gamma^0 a_{\vec{x}} \Rightarrow \lambda_m \in O(m n_s)\nonumber\\
    H_{L,ext} &= \sum_{\vec{x}} e a_{\vec{x}}^\dagger \gamma^0 \gamma^\mu A_\mu^{ex}(\vec{x}) a_{\vec{x}} \Rightarrow \lambda_{ext} \in O\left(n_s e\max_{\vec{x}}|A^{ex}(\vec{x})| \right)\nonumber\\
    H_{SLAC}&= \frac{2\pi}{n_s^{1/3} L }\sum_{\vec{x},\vec{y},\vec{p}} e^{-i 2\pi n_s^{-1/3} \vec{p}\cdot(\vec{x} -\vec{y})} a^\dagger_{\vec{x}}\gamma^0 \gamma^j p_j a_{\vec{x}} \Rightarrow \lambda_{SLAC} \in O \left(\frac{n_s^{5/3}}{L} \right)\nonumber\\ 
    H_{int} &= \sum_{\vec{x} \ne \vec{y}} \sum_{\mu,\nu} g_{\mu\mu} \frac{n_s^{1/3} e^2}{8\pi L |\vec{x} -\vec{y}|}(a_{\vec{x}}^\dagger \gamma^0 \gamma^\mu a_{\vec{x}})(a_{\vec{y}}^\dagger \gamma^0 \gamma^\nu a_{\vec{y}}) \Rightarrow \lambda_{int} \in O\left( \frac{n_s^2 e^2}{L}\right),\label{eq:lambdaEqns}
\end{align}
where the last expression in Eq.~\eqref{eq:lambdaEqns} follows from the average over position of $1/|\vec{x}-\vec{y}|$ given in~\eqref{eq:Vavg}.
We therefore have that the normalization constant in position space eQED is
\begin{equation}
    \lambda_{pos} \in O\left(mn_s +n_s e\max_{\vec{x}} |A^{ex}(\vec{x})| +(n_s^{-1/3} + e^2)\frac{n_s^2}{L})  \right)
\end{equation}

It is straightforward to compute the values of $\lambda$ for the momentum space formalism for eQED under worst-case assumptions about the functional form of the external vector potential $A^{ex}(x)$:
\begin{align}
    \lambda_{mom}&\in O\left(m n_s+ \frac{n_s^3}{L} + n_s^2 e \max_{\vec{x}} |A^{ex}(\vec{x})|   \right) 
\end{align}

More specifically, the walk operator is constructed from a pair of unitary operations $\mathtt{SELECT}$ and $\mathtt{PREPARE}$.  For simplicity let us assume without loss of generality that $\lambda_j\ge 0$ (any signs or phases can be absorbed into the $H_j$). The operator $\mathtt{PREPARE}$ is defined to prepare an initial state
\begin{equation}
    \mathtt{PREPARE} \ket{0} = \frac{1}{\sqrt{m}} \sum_j \sqrt{\frac{\lambda_j}{\lambda}}\ket{j}.\label{eq:prep}
\end{equation}
Note the operation of $\mathtt{PREPARE}$ on states other than $\ket{0}$ is not specified here because any unitary matrix that satisfies Eq.~\eqref{eq:prep} can be used to construct the walk operator $W$.

The action of select is similarly defined via
\begin{equation}
    \mathtt{SELECT} \ket{j} \ket{\psi} = \ket{j} H_j \ket{\psi}.
\end{equation}
The walk operator $W$ is then defined to be
\begin{equation}
    W:= (1 - 2 \mathtt{PREPARE} \ketbra{0}{0} \mathtt{PREPARE}^\dagger) \mathtt{SELECT}.
\end{equation}

It is then clear from this exposition that the cost of performing the quantum simulation depends directly on two quantities: the normalization constant $\lambda$ and the costs of performing $\mathtt{SELECT}$ and $\mathtt{PREPARE}$.  The costs, however, depend sensitively on the construction used for these two operations and the circuit constructions for the two operations are complicated relative to those used in Trotter-Suzuki simulations.

For simplicity, we will adapt the construction of~\cite{berry2019qubitization,babbush2018encoding} which was derived for simulations of non-relativistic chemistry in an arbitrary basis to the relativistic case considered here.  The prepare circuit is implemented using a memory access model known as a QROM, which can be thought of as an oracle replacement that uses a lookup table to store each of the unique amplitudes in the state $\mathtt{PREPARE}\ket{0}$.  If there are $K$ such amplitudes then the cost of preparing the state within error $\epsilon$ is in $O(K + \log(1/\epsilon))$ using the approach outlined in Section 3.D of~\cite{babbush2018encoding}.   

The number of unique coefficients in the position space Hamiltonian, $K_{pos}$, is substantially lower than the number of terms in the Hamiltonian.  While the number of terms in the postion space Hamiltonian is in $O(n_s^2)$ only $O(n_s)$ of these can take unique values.  This can easily be seen from Eq.~\eqref{eq:lambdaEqns} wherein $H_m$ only takes $O(1)$ values, $H_{L,ext}$ takes at most $O(n_s)$ values (assuming each $A^{ex}(\vec{x})$ is unique).  $H_{SLAC}$ contains $O(n_s^{2/3})$ unique exponentials of the form $e^{-i2\pi n_s^{-1/3}\vec{p} \cdot (\vec{x}-\vec{y})}$ and $O(n_s^{1/3})$ values of $p_j$.  Thus the total number of distinct amplitudes for $H_{SLAC}$ is at most in $O(n_s)$ as well.  Finally the fermion-fermion interaction consists of only $O(n_s^{1/3})$ distinct values and hence 
\begin{equation}
    K_{pos}\in O(n_s).
\end{equation}

The number of unique coefficients in the momentum space Hamiltonian is much more challenging to analyze and such simple patterns in the magnitudes of the coefficients do not naturally reveal themselves.  As such, we use the trivial bound of 
\begin{equation}
    K_{mom} \in O(n_s^3).
\end{equation}
It is likely, however, that by refactoring the momentum space Hamiltonian using techniques analogous to~\cite{motta2018low} that a substantial reduction in $K_{mom}$ may be attainable.

The last piece that needs to be considered is the implementation of $\mathtt{SELECT}$.  The approach that we use again mirrors the presentation in Fig. 9 of~\cite{babbush2018encoding}.  The strategy we take is to decompose the fermionic operators into Majorana operators of the form $X\otimes Z\otimes \cdots\otimes Z$ and $Y\otimes Z\otimes \cdots Z$.  At most four Majorana operators are needed in both position and momentum space and thus the cost of implementing the select operator is $O(1)$ times the cost of applying the selected Majorana operator.  The construction in~\cite{Babbush18_110501} allows such Majorana operators to be selected in time $O(n_s)$ and therefore the cost of the select circuit is in $O(n_s)$ in both the position and momentum bases.  Therefore, with this construction the cost of state preparation dominates the cost of the select circuit.

The number of $T$-gates needed in the qubitized simulation is therefore the product of the number of applications of $W$ needed by phase estimation and the sum of the number of $T$-gates needed by the prepare and select circuits.  This means that the complexity for the position space simulation under the assumption that $e^2 \gg n_s^{-1/3}$ is
\begin{equation}
    N_{T,pos} \in \widetilde{O}\left(\frac{\lambda_{pos}( K_{pos}+n_s)}{\epsilon} \right)\subseteq \widetilde{O}\left(\frac{n_s^2(m + e\max_{\vec{x}} |A^{ex}(\vec{x})| + e^2\frac{n_s}{L})}{\epsilon}  \right).
\end{equation}
Similarly, the number of gates needed to perform the momentum space simulation at most scales as
\begin{equation}
    N_{T,mom} \in \widetilde{O}\left(\frac{\lambda_{mom}( K_{mom}+n_s)}{\epsilon} \right)\subseteq \widetilde{O}\left(\frac{n_s^4(m + \frac{n_s^2}{L} + n_s e \max_{\vec{x}} |A^{ex}(\vec{x})|)}{\epsilon}  \right).
\end{equation}
If we assume the thermodynamic limit, then we have that the scaling of qubitization is upper bounded by $\widetilde{O}(\frac{n_s^{2+2/3}}{\epsilon})$ in position space and $\widetilde{O}(\frac{n_s^{5+2/3}}{\epsilon})$ in momentum space.  The scaling of qubitization in position basis is slightly superior to the upper bound on the scaling of Trotterization, $(\frac{n_s^{3}}{\epsilon})^{1+o(1)}$.  In momentum basis, the use of a brute force prepare circuit switches this behavior and leads to worse scaling than the $(n_s^{4+2/3}/\epsilon)^{1+o(1)}$ scaling provided by Trotter formulas.  This bound, however, is likely pessimistic and by taking advantage of symmetries in the Hamiltonian terms it is likely that the number of unique coefficients can be further compressed. A summary of the results presented in this section are given in Table \ref{table:t-gate-summary}.
\begin{table}[ht]
\begin{center}
\begin{tabular}{ |c|c|c|c| }
\hline
 Method & Hamiltonian Basis & T-gate Complexity \\
\hline
Trotter-Suzuki & Position & $O(n_s^{3}/\epsilon)^{1+o(1)}$ \\ 
Trotter-Suzuki & Momentum & $O(n_s^{4+2/3}/\epsilon)^{1+o(1)}$\\ 
Qubitization & Position &$\widetilde{O}(n_s^{2+2/3} /\epsilon)$ \\ 
Qubitization & Momentum & $\widetilde{O}(n_s^{5+2/3} / \epsilon)$ \\ 
\hline
\end{tabular}\caption{The T-gate complexities for both Trotter-Suzuki and Qubitization simulations in the position and momentum based eQED Hamiltonians in the thermodynamic limit.}\label{table:t-gate-summary}
\end{center} 
\end{table} 

A final point of interest is that the performance of Trotter-Suzuki methods in the non-relativistic limit may be superior to qubitization.  Specifically, if we define the non-relativistic limit to be the case where $m\gg n_s^{2}/L$, then the scaling of position space simulation using Trotter-Suzuki methods becomes $(n_s^2m/\epsilon)^{1+o(1)}$.  On the other hand, qubitization's cost scales as $\widetilde{O}(n_s^2 m /\epsilon)$ in this limit.  Thus for cases where relativistic effects are small, but highly accurate simulations are required then the bounds for Trotterization coincide (up to sub-polynomial factors) with those of qubitization.  Further, since the empirical performance of Trotter-Suzuki methods is often much better than the upper bounds~\cite{reiher2017elucidating,childs2018toward} it is natural to suspect that Trotter's performance may be even better than this bound as also noted in the following section.
% -------------------------------------------------------

\section{Rellium Model Analysis}
\subsection{Numerical Evaluation of Momentum Space Commutators}
In this section, we present a numerical study of the momentum space rellium Hamiltonian commutators.  We focus on the momentum space Hamiltonian because said Hamiltonian can be simply constructed, where the spinor interaction terms are computed with planewave integrals. For the following simulations, the Hamiltonian terms and integral coefficients were constructed utilizing the SymPy software package~\cite{sympy_package}, and the cell box size was kept constant at $L = 1$. Each successive model system with different numbers of planewaves $n_s$, were created by modifying the planewave energy cutoff, $E_{cut}$ within this constant box size. Additionally, renormalization terms in the rellium Hamiltonian, Eq.~\eqref{eq:Hrel}, were ignored for simplicity.

In the following result, the expectation value over $i,j,k$ of the nested commutator, also called the second order commutator, $\| [ H_i, [H_j, H_k]] \|$ was computed by randomly sampling Hamiltonian terms using Monte Carlo. The indices $i,j,k$ in this case represent any possible term in the Hamiltonian. A total of 8 different rellium systems were used by defining the planewave energy cutoff values at $E_{cut} = 8, 9, 11, 14, 15, 17, 20, 23, 26$ eV which correspond to the number of planewaves being $N = 24, 72, 104, 128, 224, 320, 584, 808, 1216$ respectively. Each average commutator value was computed by running a Monte Carlo simulation with increasing sample numbers in order to evaluate the limit of the average commutator, and then taking the average of the corresponding Monte Carlo runs. Specifically, 13 simulations were performed for each rellium system equally spaced on the logarithmic scale between a minimum of $10^6$ and a maximum of $10^{10}$ samples inclusively.

\begin{figure}[ht]
	\begin{center}
		\includegraphics[width=3.75in]{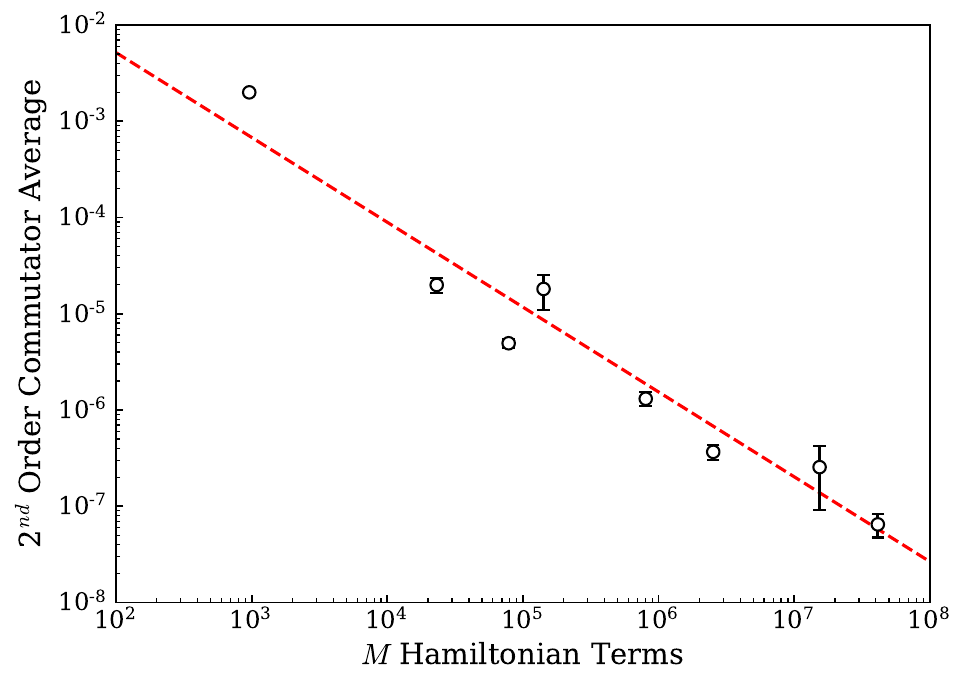}
		\caption{The Monte Carlo sampled 2$^{\text{nd}}$ order commutator average for different $E_{cut}$ values defining different rellium systems with constant box length $L = 1$. Plotted points are the average value across all Monte Carlo runs, and the error bars denote the standard deviation. The dashed red line corresponds to the least-squares power law fit where $f(M) = 0.3 M^{-0.9}$}
		\label{fig:commutator}
	\end{center}
\end{figure}

The results of the Monte Carlo simulations are presented in Fig. \ref{fig:commutator}, where each data point is the average of all runs for each system, including the standard deviation in the error bars. By using a least squares power law fit, we can compare how the number of terms in the Hamiltonian, $M$, affect the average second order commutator. The power law fit function from the commutator average data in Fig. \ref{fig:commutator}, results in $f(M) = 0.3M^{-0.9}$. Using the fact that the number of terms in the Hamiltonian $M$ scales as $O(n_s^3)$ with the number of planewaves, we can na\"ively state that the complexity of the second order commutator would scale as $O(n_s^9)$. However, since there needs to be at least one index in common between each Hamiltonian term, the upper bound is actually $O(n_s^7)$. By multiplying the upper bound for the number of second order commutator terms to evaluate, $n_s^7$, with the fitted function $f(M)$, where $M = n_s^3$, we see that the relationship of this complexity is observed to be $\sim O(n_s^4)$, performing better than the upper bound. The derivation of the estimated T-gate complexity is detailed in the following section.

\subsection{Cost Estimate for QPE}
The most common goal of Hamiltonian simulation in general is to find ground state energies. By using the above rellium model for analysis, we can gain an understanding of the error scaling throughout the Trotter-Suzuki decomposition, quantum phase estimation (QPE), and $T$-gate synthesis. First, in the standard surface code model of fault-tolerant quantum computation, all Clifford gates are trivial in cost, where non-Clifford gates such as the $T$-gate end up dominating the computational cost of an arbitrary circuit. Therefore, we can simply understand the cost of our simulation with the number of $R_z$ gates required, which are typically implemented as a circuit of multiple $T$-gates. 

Second, it is important to note that the problem of finding ground states is generically hard.  Specifically, if we could find a ground state energy in polynomial time then the complexity class $\BQP$ would contain $\QMA$, which is the quantum analog of $\P=\NP$.  For this reason, we strongly suspect that the ground state preparation problem is intractable on quantum computers unless a sufficiently good guess of the ground state can be provided to the phase estimation algorithm.  Although we discuss in the following section strategies that can ameliorate this problem, it is important to note that we do not necessarily know how well these methods will work with a particular instance of a rellium simulation and hence the only thing we can say with confidence is that the estimates contained here will give the cost of sampling from the spectrum of rellium.

In order to eventually compute the ground state energy tolerance within some precision $\epsilon$, we first need to find out the error scaling in the Trotter-Suzuki (TS) decomposition, $\epsilon_{TS}$. Specifically we will focus on the second order decomposition of our unitary time propagator the error in which is given by~\cite{childs2021theory} to be at most

\begin{align}
    \epsilon_{TS} \leq \frac{t^3}{12} \sum_{\gamma} \left| \left|  \left[ \sum_{\alpha} H_{\alpha},  \left[   \sum_{\beta} H_{\beta}, H_{\gamma} \right] \right] \right| \right| + \frac{t^3}{24} \sum_{\gamma} \left| \left|  \left[ H_{\gamma},  \left[   \sum_{\beta} H_{\gamma}, H_{\beta} \right] \right] \right| \right|
    &\le \frac{t^3}{8}\sum_{\gamma,\alpha,\beta}\left| \left|  \left[ H_{\alpha},  \left[   H_{\beta}, H_{\gamma} \right] \right] \right| \right|
\end{align}
%If we require an overall error in the simulation to be $\epsilon$, this yields a resulting error in $\epsilon_{TS}$ in the measured phase, denoted as $\widetilde{E} t$, as described in Eq. \ref{eq:energy_phase} (More description here) 
Using Ref. \cite{Neven18_041015}, the root-mean-square error in the measured phase during phase estimation can be denoted as the following
\begin{equation}
    \Delta \phi \approx \sqrt{\left( \frac{\pi}{2^{n + 1}}\right)^2 + \left(\epsilon_{\text{TS}} + \epsilon_{\text{syn}} + \pi \epsilon_{\text{QFT}} \right)^2}
\end{equation}
where $n$ is the number of ancilla qubits used.
We will now neglect the cost of the quantum Fourier transform as it needs to be done only once and so $\epsilon_{QFT}$ can be taken to be an incredibly small value without altering the time-complexity of the simulation.
\begin{equation}
    \Delta \phi \approx \sqrt{\left( \frac{\pi}{2^{n + 1}}\right)^2 + \left(\epsilon_{\text{TS}} + \epsilon_{\text{syn}} \right)^2}
\end{equation}
We can now set the phase error target to be equivalent to the total RMS error in the energy multiplied by the total time propagation $t$
\begin{equation}
    {\epsilon} = \frac{\Delta \phi}{t} \approx \frac{1}{t}\sqrt{\left( \frac{\pi}{2^{n + 1}}\right)^2 + \left(\epsilon_{\text{TS}} +\epsilon_{\text{syn}}\right)^2}
\end{equation}
For simplicity, we choose $\epsilon_{TS} =\epsilon_{syn} = \pi\sqrt{3}/2^{n+2}$.  With this choice we find
\begin{equation}
    \epsilon \approx \frac{1}{t} \sqrt{\frac{1}{4} \left(\frac{\pi}{2^n} \right)^2 + \frac{3}{4}\left(\frac{\pi}{2^n} \right)^2  }=\frac{\pi}{t2^n} ,
\end{equation}
where $n$ is the number of qubits used in the phase estimation routine. 

Next, using the fitted function from the Monte Carlo simulation given in Figure~\ref{fig:commutator}, we can take 
\begin{equation}
   \sum_{\gamma,\alpha,\beta}\left| \left|  \left[ H_{\alpha},  \left[   H_{\beta}, H_{\gamma} \right] \right] \right| \right| \approx \frac{0.3 n_s^7}{M^{0.9}} = 0.3 n_s^{4.3}\label{eq:empscaling}
\end{equation}
which we can define as 
\begin{equation}
   \chi_H = A n_s^{b}
\end{equation}
where $A = 0.3$ and $b = 4.3$. Note that  if we assumed the worst case commutator scaling that would be predicted from the commutators would be $b=7$ for momentum basis simulations.  This shows that substantial gaps likely exist between the worst case scalings and the actual scaling for eQED, similar to observations that have already been made for quantum simulations of non-relativistic chemistry. 

Therefore, we can substitute the estimate in Eq.~\eqref{eq:empscaling} into $\epsilon_{TS}$ to find
\begin{equation}
    \frac{\pi \sqrt{3}}{2^{n+2}} \leq \frac{t^3 \chi_H}{8} \  .
\end{equation}
Now we find that the correct choice of $t$, relative to these bounds, will satisfy
\begin{equation}
    \sqrt[3]{\frac{\pi \sqrt{3} }{\chi_H 2^{n-1}}} \leq t \  .
\end{equation}
Picking $t$ to saturate the lower bound (which corresponds to the worst-case scenario) we find that $t = \sqrt[3]{\frac{\pi \sqrt{3} }{\chi_H 2^{n-1}}}$, and can then solve for the number of qubits required in the QPE procedure.
\begin{equation}\label{eq:nQPE}
    n = \ceil[\Bigg]{\frac{\log(\frac{\pi^2\chi_H}{{2} \sqrt{3} {\epsilon^3 }})}{2\log(2)}}  \ .
\end{equation}

For a single trotter step, the number of rotations required is based on the total number of terms in the Hamiltonian
\begin{equation}
    N_{terms} = 2n_s + 9n_s^3  \  .
\end{equation}
Using the above formula, the max number of rotations required for a single term being equal to 8, and the fact that the number of exponentials required for the second order TS formula is $N_{exp} = 2N_{terms}$; The total number of rotations per trotter step is 
\begin{equation}
    N_{Rot} = 8 \times 2N_{terms} \le 32n_s + 144n_s^3  \ .
\end{equation}

The number of rotations needed in QPE is $2^n$, therefore the overall number of rotations needed for the simulation is
\begin{equation}
    N_{Rot}^{Sim} = 2^{n} \left( 32n_s + 144n_s^3 \right)
\end{equation}

Using chemical accuracy, $\epsilon = 1.6\text{mHa}$ for our error target, we can then estimate the number of qubits needed for QPE, and ultimately the number of $T$-gates required to obtain the ground state energy eigenvalue within the error tolerance of choice. The number of $T$-gates per rotation can be computed using the scaling from Ref. \citenum{Svore15_080502}, and our chosen error for the $T$-gate synthesis, $\epsilon_{syn}$, where 
\begin{equation}
    N_T = 1.15\log_2(1/\epsilon_{syn}) \times N^{Sim}_{Rot} \  .
\end{equation}
Since the error in the eigenvalue scales at most linearly with the error in the unitary matrix~\cite{horn2012matrix} and since the error in the unitary scales at most linearly with the number of gates comprising the unitary from Box 4.1 of~\cite{nielsen2002quantum} we have that it suffices to take $\epsilon_{syn} = \epsilon N_{Rot}$.  With this assignment $N_T$ becomes 
\begin{equation}\label{eq:nt_qpe}
    N_T = 1.15\log_2(N_{Rot}/\epsilon) \times N^{Sim}_{Rot}
\end{equation}
Using the empirical values for $A$ and $b$ defined above, chemical accuracy $\epsilon$, the number of qubits necessary for QPE in Eq.~\eqref{eq:nQPE}, the relation for $T$-gate count in Eq.~\eqref{eq:nt_qpe}, we can finally estimate the $T$-gate count for the full QPE routine given some number of planewaves for the system, $n_s$. The log-log plot of this relationship is given in Fig. \ref{fig:Tgate}.
\begin{figure}[t]
	\begin{center}
		\includegraphics[width=3.75in]{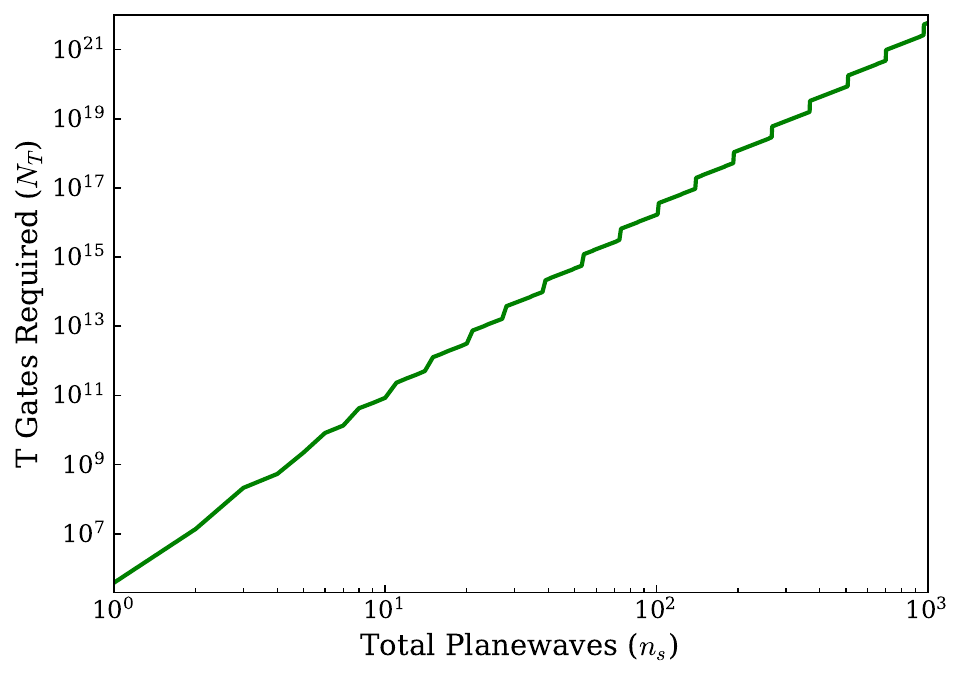}
		\caption{The estimated total number of $T$-gates required to sample from the spectrum of the rellium Hamiltonian for a box of length $L=1$ within an error tolerance of $\epsilon = 1.6\text{mHa}$ corresponding to chemical accuracy as a function of the number of planewaves in the system, $n_s$. }
		\label{fig:Tgate}
	\end{center}
\end{figure}
The full equation for this relationship is
\begin{equation}
%    N_T = 586250n_s^{5.3}\label{eq:finalscale}
    N_T =\ceil[\Bigg]{1.15 \times 2^n \left(32n_s + 144n_s^{3} \right) \log_2 \left(\frac{\left(32n_s + 144n_s^{3} \right)}{\pi \sqrt{3} / 2^{n+2}} \right)}\in \widetilde{O}\left(\frac{n_s^3\sqrt{\chi_H}}{\epsilon^{3/2}}\right) = \widetilde{O}\left(\frac{n_s^{5.2}}{\epsilon^{3/2}}\right)\label{eq:empSim}
\end{equation}
for an error of $\epsilon = 1.6\text{mHa}$, and $n$ is determined from Eq.~\eqref{eq:nQPE}.

In contrast to Eq.~\eqref{eq:empSim}, the costs for such a planewave simulation in the constant $L$ and $\epsilon$ limit are given by Theorem~\ref{thm:momCost} to be in $n_s^{5+o(1)}$ for adaptively chosen high-order Trotter-Suzuki formulas.  We find empirically that for the second-order Trotter-Suzuki formula the number of gates needed to reach $1.6\text{mHa}$ is in $\widetilde{O}(n_s^{5.2})$.  This suggests that, the empirical performance of this simulation is comparable to what we expect from our prior theoretical analysis; however, it is worth considering that this analysis still does rely on crude triangle-inequality based estimates that disregard cancellation between terms in the expansion of the error operator, and that further studies may be needed to determine the impact of neglecting such cancellations.

Finally, the number of non-Clifford operations needed to perform a classically challenging simulation using $20$ plane waves (comprising $80$ logical qubits) is projected by our results to be on the order of $10^{13}$ non-Clifford operations.  In contrast, the best known results for simulating jellium using Trotter methods are on the order of $10^9$ non-Clifford operations for systems of $27$ spin orbitals ($54$ qubits).  The gulfs between these two estimates suggest that further optimization may be needed to allow eQED to reach the same levels of performance that we can reach for non-relativistic electronic structure calculations, however, the gulfs between the two are not so large as to suspect that such simulations will be infeasible once subjected to the same optimizations that lowered the costs of simulation for challenge problems in chemistry from $10^{14}$ non-Clifford gates~\cite{reiher2014relativistic} to on the order of $10^{9}$ non-Clifford gates~\cite{lee2020even,von2020quantum}

% -------------------------------------------------------

\section{State Preparation}

In order to efficiently simulate any Hamiltonian on a quantum register, it is necessary to start with a high quality wavefunction ans\"atz that has a large overlap with the true wavefunction. For the jellium or rellium Hamiltonian, the degeneracy of the ground state is a non-trivial problem. Typically this is referred to as ``strong correlation" where many different electronic configurations are entangled and contribute to the ground state wavefunction. This is in contrast to the familiar Hartee-Fock ground state, where a single electronic configuration \emph{is} the wavefunction. Ground state electronic systems where the Hartree-Fock configuration is not dominant are sometimes called multi-reference (MR) ground states, where using the Hartree-Fock reference has a small overlap with the true ground state wavefunction due to other configurations being just as important. The multi-reference nature of electronic systems is also present in many chemical systems, typically large conjugated carbon systems, and multi-metal centered molecules due to the abundance of low lying spin states.

A commonly known classical method for computing multi-reference wavefunctions in quantum chemistry is called multi-reference configuration interaction (MRCI).~\cite{Shephard12_108, Shavitt81_91} In short, this method captures static correlation by first expanding the Hartree-Fock reference into a complete active space (CAS) of all configurations included within a truncated orbital and particle space, typically centered around the occupied and unoccupied valence orbitals. Next, a number of additional configurations are added for capturing correlation effects beyond the CAS space, sometimes referred to as ``dynamic correlation." A common version of MRCI, is MRCI singles and doubles (MRCISD), where additional determinants are added to the CAS wavefunction, including single and double particle excitations. Additionally, the singly and doubly excited particle and hole space spans a larger space then just the initial CAS determinants, and are commonly referred as restricted active space 1 (RAS1) for the additional occupied orbitals, and subsequently RAS3 for the virtual orbitals. The single and double excitations span across all three spaces. As an aside, RAS2 sometimes refers to the original CAS space in the literature.

The number of determinants required to build an MRCI ans\"atz can be defined by the chosen active spaces, RAS1, CAS, and RAS3. Following the notation in Ref. \citenum{Li20_2975} where we only consider singly occupied spin-orbitals that follow the Jordan-Wigner mapping.
\begin{align}
N_{\text{det}}^{MRCI} &= \sum_{i_h = 0}^{m_h}\sum_{i_e = 0}^{m_e} {N_{RAS1} \choose i_h} {N_{CAS} \choose n_e - N_{RAS1} + i_h - i_e} {N_{RAS3} \choose i_e} 
\end{align}
where $N_{RAS1}$, $N_{CAS}$, and $N_{RAS3}$ are the number of orbitals in each active space respectively, $n_e$ is the total number of electrons in all of the correlated spaces, $i_h$ is the hole index, $i_e$ is the particle index, $m_h$ and $m_e$ are the number of holes in RAS1 and the particles in RAS3 respectively. For the MRCISD ans\"{a}tz the number of determinants then scales exponentially with $N_{CAS}$, but quadratically in both $N_{RAS1}$ and $N_{RAS3}$. Additionally for MRCISD, $m_h = m_e = 2$.

For the purpose of state-preparation on a quantum computer, we can pre-compute the MRCISD wave function classically, and then use the coefficients for the determinants to initialize a better wavefunction that hopefully has a much higher overlap with the true ground state. A convenient method to prepare the initial MRCISD state is to use a Givens rotation on the creation and annihilation operators, defined by a general Slater determinant generated by the classical MRCISD wavefunction. \cite{Boixo18_044036, babbush2018low} The general Slater determinant can be defined as 
\begin{equation}
| \Phi \rangle  = \hat{d}^\dagger_1 \cdots \hat{d}^\dagger_{N_f} | 0 \rangle, \qquad \hat{d}^\dagger_i = \sum_j^{n_s} Q_{ij} \hat{c}^\dagger_{j}
\end{equation}
where $\hat{c}^\dagger$ is the arbitrary particle type creation operator in the computational basis, $\hat{d}^\dagger$ is the rotated creation operator for the new basis based off of fractional particle occupation, $N_f$ is the number of fermions, and $n_s$ is the number of orbitals in the chosen basis. The $N_f \times n_s$ matrix $Q$ rotates the original creation/annihilation operators into the rotated basis based off the choice of initial wavefunction. Therefore the rows of $Q$ correspond to single particle wavefunctions that are linear combinations of the original orbital basis. In general, the rotated Slater determinant can be generated by a unitary rotation of a simple computational basis state generated by the original creation operators
\begin{equation}
| \Phi \rangle  = U \hat{c}^\dagger_1 \cdots \hat{c}^\dagger_{N_f} | 0 \rangle, \qquad \hat{d}^\dagger_i = U \hat{c}^\dagger_i  U^\dagger
\end{equation}
Following Ref. \citenum{Boixo18_044036} we see that this unitary transformation can be represented by sequences of 2-qubit rotations, also known as Givens rotations where
\begin{equation}
U = G_{N_G} \cdots G_2 G_1
\end{equation}
where $N_G$ is the total number of Givens rotations, and the Givens rotation matrix between spin-orbitals $p,q$ is defined as
\begin{equation}
G(\theta, \varphi) = 
\begin{pmatrix} 
\cos \theta & -e^{i \varphi} \sin \theta \\
\sin\theta & e^{i \varphi} \cos \theta \\
\end{pmatrix}
\end{equation}
where the angles $\theta$ and $\varphi$ can be solved classically by diagonalizing the $Q$ matrix, where $Q = V^{\dagger} D U$. The number of Givens rotations required to perform the basis transform is
\begin{equation}
N_G = \left(n_s - N_f \right) N_f \in O(n_s^2).
\end{equation}
Each Givens rotation, $G(\theta,\varphi)$, can be implemented using a rotation of the form $e^{-i(e^{-i\varphi}a_p^\dagger a_q + e^{i\varphi}a_q^\dagger a_p)\theta}$. As discussed, each exponential can be simulated using $O(\log(1/\epsilon))$ $T$-gates.  The complexity of performing this state preparation is therefore in $O(n_s^2\log(n_s^2/\epsilon))$, which is sub-dominant to the cost of simulation in momentum basis which is in $\widetilde{O}(n_s^{4+2/3}/\epsilon)$ in the thermodynamic limit.  For this reason, we neglect the cost of the state preparation in all of our previous analysis.

 Additionally, for typical electronic systems of interest, we only care about linear combinations of electronic single particle functions and the positronic block is trivially occupied. This means that for a system of $N_{elec}$ number of electrons our matrix $Q$ will already be diagonal in the positronic space, meaning simply that $N_f = N_{elec}$.

\section{Planewave Cutoff Estimates for Heavy Atoms}

In this section we will provide heuristic arguments that estimate the number of planewaves needed to solve a realistic relativistic problem in the momentum basis.  This is important because the cost of both the MRCISD ans\"{a}tz as well as the simulation of the dynamics can be non-trivial.  As a target problem, we focus on the simulation of atomic gold.  This is chosen because relativistic effects are needed in order to even qualitatively understand the spectrum of gold and thus such a simulation is arguably the first logical benchmark problem to consider after simulation of the relativistic free electron gas (rellium). In the estimates below we use the atomic unit convention where $\hbar = m_e = e = 1$.

For the momentum space planewave Hamiltonian, we estimate that a single all-electron gold atom will require at least 31 million planewaves, which is obviously beyond the reach of quantum computers in the foreseeable future. The number of planewaves $N_{PW}$ needed for an arbitrary system is defined by the cell volume $L^3$ and energy cutoff.
\begin{equation}
    N_{PW} = \frac{L^3}{2\pi^2} E_{cut}^{3/2}
\end{equation}
To find the lowest possible cutoff energy, we can calculate the highest possible kinetic energy of an electron in an atomic potential with nuclear charge $Z$, which will be in the 1$s$ orbital. The kinetic energy of the 1$s$ electron can be estimated to be the following, using the hydrogenic Dirac equation
\begin{equation}
%    E_{kin} = \frac{Z^2}{2n^2}
    E = c^2 \left[ \frac{1}{\sqrt{1+\frac{Z^2 \alpha^2}{\left(n-(j+1/2)+\sqrt{(j+1/2)^2-Z^2\alpha^2} \right)^2}}} -1 \right]
\end{equation}
where $\alpha$ is the fine structure constant, $n = 1$ is the principle quantum number and for the ground state $s$ orbital, $j = \frac{1}{2}$. By plugging in $Z = 79$ and $L = 2r_{A}$ where $r_{A}$ is the atomic radius, using 3.14$a_0$ for gold , and finally setting $E_{cut} = E_{kin}$ we estimate that the all electron gold atom to require at least $3.08\times 10^7$ planewaves and in turn roughly $1.23\times 10^8$ logical qubits. This means that based on the number of sites in the reciprocal lattice and the cost of the simulation, enforcing chemical accuracy $(1.6 \text{ mHa})$, will require roughly on the order of $10^{38}$ non-Clifford operations.  Our analysis therefore suggests that such simulations will likely be out of reach for any quantum computer in the next few decades and beyond, since the number of planewaves for a single atom all electron system is expected to increase as $O(L^3Z^3)$. We expect similar conclusions to hold for other heavy element atomic systems as well. 

This is not too surprising since planewave simulations in general need a large number of basis functions to properly describe atomic core orbitals. The most obvious remedy to this issue is to switch to a different basis set.  In particular, Gaussian orbitals model the nuclear cusp condition much better than planewaves and so the number of Gaussians needed to describe the system to within chemical accuracy can be substantially lower. This makes them often a more natural choice.

The opposite approach would be to instead of investigating eQED in second quantization to look at it instead in first quantization using an appropriately anti-symmetrized wave function.  Within such a framework, the number of qubits needed to store the atomic configuration can be exponentially smaller.  The prefactors, however, make such applications outside of the reach of existing or near-future quantum computers.

\section{Conclusion}
In this work we have presented how to simulate the eQED Hamiltonian on a quantum computer. Specifically, we presented an analysis of both the position basis using a cubic lattice, and the momentum basis planewave formulations of the Hamiltonians. From this analysis we find that for the position basis, the number of $T$-gates required for simulating the Hamiltonian scales as $\left( \frac{\Lambda n_s^2}{\epsilon}\right)^{1+o(1)}$ where $\Lambda = \max\left(m+e\max_{x}\|A^{ex}(x)\|,\frac{e^2n_s}{L}, \frac{n_s^{4/3}}{L}\right)$. For the momentum basis, the number of $T$-gates required scales as $\left(\frac{ \Lambda_p n_s^3}{\epsilon} \right)^{1+o(1)}$ where $\Lambda_p \in O\left(n_s\left(\frac{m}{n_s} + \frac{n_s}{L} + e|A^{ex}_\mu| \right) \right)$.  This shows that the the ground state energy can be computed in polynomial time given that a copy of the ground state is provided to a quantum computer, which suggests that the problem of deciding whether there exists an eigenstate with energy less than a threshold, is contained within the complexity class \QMA \, for effective QED.  This is relevant because it provides further vindication that relativistic effects do not invalidate the Quantum Complexity-Theoretic Church-Turing Thesis.

Further, we investigated the cost of using quantum phase estimation to estimate the ground state energy of the relativistic jellium (rellium) model as the simplest momentum based eQED Hamiltonian. Specifically we computed the number of $T$-gates needed empirically for quantum phase estimation using the second order Trotter-Suzuki decomposition, and using Monte Carlo sampling of different rellium Hamiltonians by increasing the cutoff energy for the system. For this routine, we find that for a constant box size, $L = 1$, the number of $T$-gates needed to estimate the ground state energy eigenvalue within an error of chemical accuracy $\epsilon = 1.6\text{mHa}$ is on the order of $10^{16}$ $T$-gates for a classically intractable problem involving $100$ planewaves ($400$ qubits) or $10^{13}$ $T$-gates for a classically challenging problem with $20$ plane waves $(80)$  qubits.  These costs, while substantial, suggest that by further optimizing our simulation algorithm that the costs of quantum simulation may be reduced to reasonable levels.

While this work has explored how eQED can be simulated in general, the momentum space and lattice formulations of the Hamiltonian are not necessarily the most pertinent for all applications to physics and chemistry. Specifically, the focus of including QED corrections to relativistic effects in molecular systems is most prominent for heavy elements at the bottom of the periodic table which have large potential wells from the nuclear charge. However, QED corrections to the energies and properties of light elements can be important in certain situations as well. Future work will focus on adapting this method for simulating eQED on quantum computers to other more convenient basis sets for chemistry, such as the well known Gaussian basis sets that can more compactly model the electronic wavefunction at the nuclear cusp. 

\section*{Acknowledgements}
We would like to thank Hongbin Liu and Evelyn Goldfield for helpful discussions and feedback.  NW was funded by a grant from Google Quantum AI, the NSF funded Challenge Institute for Quantum Computation, the Pacific Northwest
National Laboratory LDRD program and the “Embedding Quantum Computing into Many-Body Frameworks for
Strongly Correlated Molecular and Materials Systems” project, funded by the US Department of Energy (DOE). AC was supported in part by the INT's U.S. Department of Energy grant No. DE-FG02- 00ER41132 and in part by the U. S. Department of Energy grant No. DE-SC0019478. XL acknowledges support from the U.S. Department of Energy, Office of Science, Basic Energy Sciences, in the Heavy-Element Chemistry program (Grant No. DE-SC0021100). TS was supported by a fellowship from The Molecular Sciences Software Institute under NSF grant OAC-1547580.

%%% here %%%

\bibliographystyle{unsrtnat}
\bibliography{bibliography}

\begin{thebibliography}{141}
\providecommand{\natexlab}[1]{#1}
\providecommand{\url}[1]{\texttt{#1}}
\expandafter\ifx\csname urlstyle\endcsname\relax
  \providecommand{\doi}[1]{doi: #1}\else
  \providecommand{\doi}{doi: \begingroup \urlstyle{rm}\Url}\fi

\bibitem[Manin(1980)]{manin1980computable}
Yuri Manin.
\newblock Computable and uncomputable.
\newblock \emph{Sovetskoye Radio, Moscow}, 128, 1980.

\bibitem[Feynman(1982)]{feynman1982simulating}
Richard~P Feynman.
\newblock Simulating physics with computers.
\newblock \emph{Int. J. Theor. Phys}, 21\penalty0 (6/7), 1982.
\newblock \doi{10.1007/BF02650179}.

\bibitem[Lanyon et~al.(2010)Lanyon, Whitfield, Gillett, Goggin, Almeida,
  Kassal, Biamonte, Mohseni, Powell, Barbieri, et~al.]{lanyon2010towards}
Benjamin~P Lanyon, James~D Whitfield, Geoff~G Gillett, Michael~E Goggin,
  Marcelo~P Almeida, Ivan Kassal, Jacob~D Biamonte, Masoud Mohseni, Ben~J
  Powell, Marco Barbieri, et~al.
\newblock Towards quantum chemistry on a quantum computer.
\newblock \emph{Nature chemistry}, 2\penalty0 (2):\penalty0 106--111, 2010.
\newblock \doi{10.1038/nchem.483}.

\bibitem[Whitfield et~al.(2011)Whitfield, Biamonte, and
  Aspuru-Guzik]{whitfield2011simulation}
James~D Whitfield, Jacob Biamonte, and Al{\'a}n Aspuru-Guzik.
\newblock Simulation of electronic structure hamiltonians using quantum
  computers.
\newblock \emph{Molecular Physics}, 109\penalty0 (5):\penalty0 735--750, 2011.
\newblock \doi{10.1080/00268976.2011.552441}.

\bibitem[Peruzzo et~al.(2014)Peruzzo, McClean, Shadbolt, Yung, Zhou, Love,
  Aspuru-Guzik, and O’brien]{peruzzo2014variational}
Alberto Peruzzo, Jarrod McClean, Peter Shadbolt, Man-Hong Yung, Xiao-Qi Zhou,
  Peter~J Love, Al{\'a}n Aspuru-Guzik, and Jeremy~L O’brien.
\newblock A variational eigenvalue solver on a photonic quantum processor.
\newblock \emph{Nature communications}, 5:\penalty0 4213, 2014.
\newblock \doi{10.1038/ncomms5213}.

\bibitem[Reiher et~al.(2017)Reiher, Wiebe, Svore, Wecker, and
  Troyer]{reiher2017elucidating}
Markus Reiher, Nathan Wiebe, Krysta~M Svore, Dave Wecker, and Matthias Troyer.
\newblock Elucidating reaction mechanisms on quantum computers.
\newblock \emph{Proceedings of the National Academy of Sciences}, 114\penalty0
  (29):\penalty0 7555--7560, 2017.
\newblock \doi{10.1073/pnas.1619152114}.

\bibitem[von Burg et~al.(2021)von Burg, Low, H{\"a}ner, Steiger, Reiher,
  Roetteler, and Troyer]{von2020quantum}
Vera von Burg, Guang~Hao Low, Thomas H{\"a}ner, Damian~S Steiger, Markus
  Reiher, Martin Roetteler, and Matthias Troyer.
\newblock Quantum computing enhanced computational catalysis.
\newblock \emph{Physical Review Research}, 3\penalty0 (3):\penalty0 033055,
  2021.
\newblock \doi{10.1103/PhysRevResearch.3.033055}.

\bibitem[Lee et~al.(2021)Lee, Berry, Gidney, Huggins, McClean, Wiebe, and
  Babbush]{lee2020even}
Joonho Lee, Dominic~W Berry, Craig Gidney, William~J Huggins, Jarrod~R McClean,
  Nathan Wiebe, and Ryan Babbush.
\newblock Even more efficient quantum computations of chemistry through tensor
  hypercontraction.
\newblock \emph{PRX Quantum}, 2\penalty0 (3):\penalty0 030305, 2021.
\newblock \doi{10.1103/PRXQuantum.2.030305}.

\bibitem[Kivlichan et~al.(2017)Kivlichan, Wiebe, Babbush, and
  Aspuru-Guzik]{kivlichan2017bounding}
Ian~D Kivlichan, Nathan Wiebe, Ryan Babbush, and Al{\'a}n Aspuru-Guzik.
\newblock Bounding the costs of quantum simulation of many-body physics in real
  space.
\newblock \emph{Journal of Physics A: Mathematical and Theoretical},
  50\penalty0 (30):\penalty0 305301, 2017.
\newblock \doi{10.1088/1751-8121/aa77b8}.

\bibitem[Babbush et~al.(2019)Babbush, Berry, McClean, and
  Neven]{babbush2019quantum}
Ryan Babbush, Dominic~W Berry, Jarrod~R McClean, and Hartmut Neven.
\newblock Quantum simulation of chemistry with sublinear scaling in basis size.
\newblock \emph{npj Quantum Information}, 5\penalty0 (1):\penalty0 1--7, 2019.
\newblock \doi{10.1038/s41534-019-0199-y}.

\bibitem[Gerritsma et~al.(2010)Gerritsma, Kirchmair, Z{\"a}hringer, Solano,
  Blatt, and Roos]{gerritsma2010quantum}
Rene Gerritsma, Gerhard Kirchmair, Florian Z{\"a}hringer, E~Solano, R~Blatt,
  and CF~Roos.
\newblock Quantum simulation of the dirac equation.
\newblock \emph{Nature}, 463\penalty0 (7277):\penalty0 68--71, 2010.
\newblock \doi{10.1038/nature08688}.

\bibitem[Fillion-Gourdeau et~al.(2017)Fillion-Gourdeau, MacLean, and
  Laflamme]{fillion2017algorithm}
Fran{\c{c}}ois Fillion-Gourdeau, Steve MacLean, and Raymond Laflamme.
\newblock Algorithm for the solution of the dirac equation on digital quantum
  computers.
\newblock \emph{Physical Review A}, 95\penalty0 (4):\penalty0 042343, 2017.
\newblock \doi{10.1103/PhysRevA.95.042343}.

\bibitem[Reiher and Wolf(2014)]{reiher2014relativistic}
Markus Reiher and Alexander Wolf.
\newblock \emph{Relativistic Quantum Chemistry: The Fundamental Theory of
  Molecular Science}.
\newblock John Wiley \& Sons, 2014.
\newblock \doi{10.1002/9783527627486}.

\bibitem[Dyall and F{\ae}gri~Jr(2007)]{dyall2007introduction}
Kenneth~G Dyall and Knut F{\ae}gri~Jr.
\newblock \emph{Introduction to Relativistic Quantum Chemistry}.
\newblock Oxford University Press, 2007.
\newblock \doi{10.1093/oso/9780195140866.001.0001}.

\bibitem[Jordan et~al.(2012)Jordan, Lee, and Preskill]{Jordan_2012}
S.~P. Jordan, K.~S.~M. Lee, and J.~Preskill.
\newblock Quantum algorithms for quantum field theories.
\newblock \emph{Science}, 336\penalty0 (6085):\penalty0 1130–1133, May 2012.
\newblock ISSN 1095-9203.
\newblock \doi{10.1126/science.1217069}.

\bibitem[Jordan et~al.(2014{\natexlab{a}})Jordan, Lee, and Preskill]{Preskill1}
Stephen~P Jordan, Keith~SM Lee, and John Preskill.
\newblock {Quantum computation of scattering in scalar quantum field theories}.
\newblock \emph{Quant. Inf. Comput.}, 14, 2014{\natexlab{a}}.
\newblock \doi{10.26421/QIC14.11-12-8}.

\bibitem[Preskill(2018)]{Preskill3}
John Preskill.
\newblock {Simulating quantum field theory with a quantum computer}.
\newblock \emph{PoS}, LATTICE2018:\penalty0 024, 2018.
\newblock \doi{10.22323/1.334.0024}.

\bibitem[Brennen et~al.(2015)Brennen, Rohde, Sanders, and Singh]{Brennen_2015}
Gavin~K. Brennen, Peter Rohde, Barry~C. Sanders, and Sukhwinder Singh.
\newblock Multiscale quantum simulation of quantum field theory using wavelets.
\newblock \emph{Physical Review A}, 92\penalty0 (3), Sep 2015.
\newblock ISSN 1094-1622.
\newblock \doi{10.1103/physreva.92.032315}.

\bibitem[Marshall et~al.(2015)Marshall, Pooser, Siopsis, and
  Weedbrook]{Marshall_2015}
Kevin Marshall, Raphael Pooser, George Siopsis, and Christian Weedbrook.
\newblock Quantum simulation of quantum field theory using continuous
  variables.
\newblock \emph{Physical Review A}, 92\penalty0 (6), Dec 2015.
\newblock ISSN 1094-1622.
\newblock \doi{10.1103/physreva.92.063825}.

\bibitem[Ciavarella(2020)]{Ciavarella_2020}
Anthony Ciavarella.
\newblock Algorithm for quantum computation of particle decays.
\newblock \emph{Physical Review D}, 102\penalty0 (9), Nov 2020.
\newblock ISSN 2470-0029.
\newblock \doi{10.1103/physrevd.102.094505}.

\bibitem[Barata et~al.(2021)Barata, Mueller, Tarasov, and
  Venugopalan]{Barata:2020jtq}
Jo{\~a}o Barata, Niklas Mueller, Andrey Tarasov, and Raju Venugopalan.
\newblock Single-particle digitization strategy for quantum computation of a
  $\phi^4$ scalar field theory.
\newblock \emph{Physical Review A}, 103\penalty0 (4):\penalty0 042410, 2021.
\newblock \doi{10.1103/PhysRevA.103.042410}.

\bibitem[Klco and Savage(2019)]{Klco:2018zqz}
Natalie Klco and Martin~J. Savage.
\newblock {Digitization of scalar fields for quantum computing}.
\newblock \emph{Phys. Rev.}, A99\penalty0 (5):\penalty0 052335, 2019.
\newblock \doi{10.1103/PhysRevA.99.052335}.

\bibitem[Klco and Savage(2021)]{klco2021hierarchical}
Natalie Klco and Martin~J Savage.
\newblock Hierarchical qubit maps and hierarchically implemented quantum error
  correction.
\newblock \emph{Physical Review A}, 104\penalty0 (6):\penalty0 062425, 2021.
\newblock \doi{10.1103/PhysRevA.104.062425}.

\bibitem[Jordan et~al.(2014{\natexlab{b}})Jordan, Lee, and Preskill]{Preskill2}
Stephen~P Jordan, Keith~SM Lee, and John Preskill.
\newblock Quantum algorithms for fermionic quantum field theories.
\newblock \emph{arXiv:1404.7115}, 2014{\natexlab{b}}.
\newblock URL \url{https://arxiv.org/abs/1404.7115}.

\bibitem[Lamm et~al.(2020)Lamm, Lawrence, and Yamauchi]{Lamm_2020}
Henry Lamm, Scott Lawrence, and Yukari Yamauchi.
\newblock Parton physics on a quantum computer.
\newblock \emph{Physical Review Research}, 2\penalty0 (1), Mar 2020.
\newblock ISSN 2643-1564.
\newblock \doi{10.1103/physrevresearch.2.013272}.

\bibitem[Mazza et~al.(2012)Mazza, Bermudez, Goldman, Rizzi, Martin-Delgado, and
  Lewenstein]{Mazza_2012}
Leonardo Mazza, Alejandro Bermudez, Nathan Goldman, Matteo Rizzi, Miguel~Angel
  Martin-Delgado, and Maciej Lewenstein.
\newblock An optical-lattice-based quantum simulator for relativistic field
  theories and topological insulators.
\newblock \emph{New Journal of Physics}, 14\penalty0 (1):\penalty0 015007, jan
  2012.
\newblock \doi{10.1088/1367-2630/14/1/015007}.

\bibitem[Klco et~al.(2018)Klco, Dumitrescu, McCaskey, Morris, Pooser, Sanz,
  Solano, Lougovski, and Savage]{Klco:2018kyo}
N.~Klco, E.~F. Dumitrescu, A.~J. McCaskey, T.~D. Morris, R.~C. Pooser, M.~Sanz,
  E.~Solano, P.~Lougovski, and M.~J. Savage.
\newblock {Quantum-classical computation of Schwinger model dynamics using
  quantum computers}.
\newblock \emph{Phys. Rev.}, A98\penalty0 (3):\penalty0 032331, 2018.
\newblock \doi{10.1103/PhysRevA.98.032331}.

\bibitem[Kokail et~al.(2019)Kokail, Maier, van Bijnen, Brydges, Joshi,
  Jurcevic, Muschik, Silvi, Blatt, Roos, and et~al.]{Schwinger1}
C.~Kokail, C.~Maier, R.~van Bijnen, T.~Brydges, M.~K. Joshi, P.~Jurcevic, C.~A.
  Muschik, P.~Silvi, R.~Blatt, C.~F. Roos, and et~al.
\newblock Self-verifying variational quantum simulation of lattice models.
\newblock \emph{Nature}, 569\penalty0 (7756):\penalty0 355–360, May 2019.
\newblock ISSN 1476-4687.
\newblock \doi{10.1038/s41586-019-1177-4}.

\bibitem[Kharzeev and Kikuchi(2020)]{Schwinger2}
Dmitri~E Kharzeev and Yuta Kikuchi.
\newblock Real-time chiral dynamics from a digital quantum simulation.
\newblock \emph{Physical Review Research}, 2\penalty0 (2):\penalty0 023342,
  2020.
\newblock \doi{10.1103/PhysRevResearch.2.023342}.

\bibitem[Lu et~al.(2019)Lu, Klco, Lukens, Morris, Bansal, Ekström, Hagen,
  Papenbrock, Weiner, Savage, and et~al.]{Schwinger3}
Hsuan-Hao Lu, Natalie Klco, Joseph~M. Lukens, Titus~D. Morris, Aaina Bansal,
  Andreas Ekström, Gaute Hagen, Thomas Papenbrock, Andrew~M. Weiner, Martin~J.
  Savage, and et~al.
\newblock Simulations of subatomic many-body physics on a quantum frequency
  processor.
\newblock \emph{Physical Review A}, 100\penalty0 (1), Jul 2019.
\newblock ISSN 2469-9934.
\newblock \doi{10.1103/physreva.100.012320}.

\bibitem[Chakraborty et~al.(2020)Chakraborty, Honda, Izubuchi, Kikuchi, and
  Tomiya]{Schwinger5}
Bipasha Chakraborty, Masazumi Honda, Taku Izubuchi, Yuta Kikuchi, and Akio
  Tomiya.
\newblock Digital quantum simulation of the schwinger model with topological
  term via adiabatic state preparation.
\newblock \emph{arXiv:2001.00485}, 2020.
\newblock URL \url{https://arxiv.org/abs/2001.00485}.

\bibitem[Shaw et~al.(2020)Shaw, Lougovski, Stryker, and Wiebe]{Schwinger6}
Alexander~F Shaw, Pavel Lougovski, Jesse~R Stryker, and Nathan Wiebe.
\newblock Quantum algorithms for simulating the lattice schwinger model.
\newblock \emph{Quantum}, 4:\penalty0 306, 2020.
\newblock \doi{10.22331/q-2020-08-10-306}.

\bibitem[Bender et~al.(2018)Bender, Zohar, Farace, and Cirac]{LGT1}
Julian Bender, Erez Zohar, Alessandro Farace, and J~Ignacio Cirac.
\newblock Digital quantum simulation of lattice gauge theories in three spatial
  dimensions.
\newblock \emph{New Journal of Physics}, 20\penalty0 (9):\penalty0 093001, Sep
  2018.
\newblock ISSN 1367-2630.
\newblock \doi{10.1088/1367-2630/aadb71}.

\bibitem[Zohar and Burrello(2015)]{LGT2}
Erez Zohar and Michele Burrello.
\newblock Formulation of lattice gauge theories for quantum simulations.
\newblock \emph{Physical Review D}, 91\penalty0 (5), Mar 2015.
\newblock ISSN 1550-2368.
\newblock \doi{10.1103/physrevd.91.054506}.

\bibitem[Klco et~al.(2020)Klco, Savage, and Stryker]{su2sim}
Natalie Klco, Martin~J Savage, and Jesse~R Stryker.
\newblock Su(2) non-abelian gauge field theory in one dimension on digital
  quantum computers.
\newblock \emph{Physical Review D}, 101\penalty0 (7):\penalty0 074512, 2020.
\newblock \doi{10.1103/PhysRevD.101.074512}.

\bibitem[Zohar et~al.(2013{\natexlab{a}})Zohar, Cirac, and Reznik]{Zohar_2013}
Erez Zohar, J.~Ignacio Cirac, and Benni Reznik.
\newblock Quantum simulations of gauge theories with ultracold atoms: Local
  gauge invariance from angular-momentum conservation.
\newblock \emph{Physical Review A}, 88\penalty0 (2), Aug 2013{\natexlab{a}}.
\newblock ISSN 1094-1622.
\newblock \doi{10.1103/physreva.88.023617}.

\bibitem[Zohar et~al.(2015)Zohar, Cirac, and Reznik]{Zohar_2015}
Erez Zohar, J~Ignacio Cirac, and Benni Reznik.
\newblock Quantum simulations of lattice gauge theories using ultracold atoms
  in optical lattices.
\newblock \emph{Reports on Progress in Physics}, 79\penalty0 (1):\penalty0
  014401, Dec 2015.
\newblock ISSN 1361-6633.
\newblock \doi{10.1088/0034-4885/79/1/014401}.

\bibitem[Lamm et~al.(2019)Lamm, Lawrence, and Yamauchi]{Lamm_2019}
Henry Lamm, Scott Lawrence, and Yukari Yamauchi.
\newblock General methods for digital quantum simulation of gauge theories.
\newblock \emph{Physical Review D}, 100\penalty0 (3), Aug 2019.
\newblock ISSN 2470-0029.
\newblock \doi{10.1103/physrevd.100.034518}.

\bibitem[Alexandru et~al.(2019{\natexlab{a}})Alexandru, Bedaque, Harmalkar,
  Lamm, Lawrence, and Warrington]{Alexandru_2019}
Andrei Alexandru, Paulo~F. Bedaque, Siddhartha Harmalkar, Henry Lamm, Scott
  Lawrence, and Neill~C. Warrington.
\newblock Gluon field digitization for quantum computers.
\newblock \emph{Physical Review D}, 100\penalty0 (11), Dec 2019{\natexlab{a}}.
\newblock ISSN 2470-0029.
\newblock \doi{10.1103/physrevd.100.114501}.

\bibitem[Bañuls et~al.(2020)Bañuls, Blatt, Catani, Celi, Cirac, Dalmonte,
  Fallani, Jansen, Lewenstein, Montangero, and et~al.]{Ba_uls_2020}
Mari~Carmen Bañuls, Rainer Blatt, Jacopo Catani, Alessio Celi, Juan~Ignacio
  Cirac, Marcello Dalmonte, Leonardo Fallani, Karl Jansen, Maciej Lewenstein,
  Simone Montangero, and et~al.
\newblock Simulating lattice gauge theories within quantum technologies.
\newblock \emph{The European Physical Journal D}, 74\penalty0 (8), Aug 2020.
\newblock ISSN 1434-6079.
\newblock \doi{10.1140/epjd/e2020-100571-8}.

\bibitem[Tagliacozzo et~al.(2013{\natexlab{a}})Tagliacozzo, Celi, Orland,
  Mitchell, and Lewenstein]{Tagliacozzo_2013}
L.~Tagliacozzo, A.~Celi, P.~Orland, M.~W. Mitchell, and M.~Lewenstein.
\newblock Simulation of non-abelian gauge theories with optical lattices.
\newblock \emph{Nature Communications}, 4\penalty0 (1), Oct 2013{\natexlab{a}}.
\newblock ISSN 2041-1723.
\newblock \doi{10.1038/ncomms3615}.

\bibitem[Tagliacozzo et~al.(2013{\natexlab{b}})Tagliacozzo, Celi, Zamora, and
  Lewenstein]{Tagliacozzo_2013_2}
L.~Tagliacozzo, A.~Celi, A.~Zamora, and M.~Lewenstein.
\newblock Optical abelian lattice gauge theories.
\newblock \emph{Annals of Physics}, 330:\penalty0 160–191, Mar
  2013{\natexlab{b}}.
\newblock ISSN 0003-4916.
\newblock \doi{10.1016/j.aop.2012.11.009}.

\bibitem[Bermudez et~al.(2010)Bermudez, Mazza, Rizzi, Goldman, Lewenstein, and
  Martin-Delgado]{PhysRevLett.105.190404}
A.~Bermudez, L.~Mazza, M.~Rizzi, N.~Goldman, M.~Lewenstein, and M.~A.
  Martin-Delgado.
\newblock Wilson fermions and axion electrodynamics in optical lattices.
\newblock \emph{Phys. Rev. Lett.}, 105:\penalty0 190404, Nov 2010.
\newblock \doi{10.1103/PhysRevLett.105.190404}.

\bibitem[Mathis et~al.(2020)Mathis, Mazzola, and
  Tavernelli]{Tavernelli20_094501}
Simon~V Mathis, Guglielmo Mazzola, and Ivano Tavernelli.
\newblock Toward scalable simulations of lattice gauge theories on quantum
  computers.
\newblock \emph{Physical Review D}, 102\penalty0 (9):\penalty0 094501, 2020.
\newblock \doi{10.1103/PhysRevD.102.094501}.

\bibitem[Byrnes and Yamamoto(2006)]{Byrnes_2006}
Tim Byrnes and Yoshihisa Yamamoto.
\newblock Simulating lattice gauge theories on a quantum computer.
\newblock \emph{Physical Review A}, 73\penalty0 (2), Feb 2006.
\newblock ISSN 1094-1622.
\newblock \doi{10.1103/physreva.73.022328}.

\bibitem[Ciavarella et~al.(2021)Ciavarella, Klco, and Savage]{Ciavarella_2021}
Anthony Ciavarella, Natalie Klco, and Martin~J. Savage.
\newblock Trailhead for quantum simulation of su(3) yang-mills lattice gauge
  theory in the local multiplet basis.
\newblock \emph{Physical Review D}, 103\penalty0 (9), May 2021.
\newblock ISSN 2470-0029.
\newblock \doi{10.1103/physrevd.103.094501}.

\bibitem[Kan and Nam(2021)]{kan2021lattice}
Angus Kan and Yunseong Nam.
\newblock Lattice quantum chromodynamics and electrodynamics on a universal
  quantum computer.
\newblock \emph{arXiv:2107.12769}, 2021.
\newblock URL \url{https://arxiv.org/abs/2107.12769}.

\bibitem[Stryker(2021)]{stryker2021shearing}
Jesse~R Stryker.
\newblock Shearing approach to gauge invariant trotterization.
\newblock \emph{arXiv:2105.11548}, 2021.
\newblock URL \url{https://arxiv.org/abs/2105.11548}.

\bibitem[Paulson et~al.(2021)Paulson, Dellantonio, Haase, Celi, Kan, Jena,
  Kokail, van Bijnen, Jansen, Zoller, and Muschik]{Paulson_2021}
Danny Paulson, Luca Dellantonio, Jan~F. Haase, Alessio Celi, Angus Kan, Andrew
  Jena, Christian Kokail, Rick van Bijnen, Karl Jansen, Peter Zoller, and
  Christine~A. Muschik.
\newblock Simulating 2d effects in lattice gauge theories on a quantum
  computer.
\newblock \emph{PRX Quantum}, 2:\penalty0 030334, Aug 2021.
\newblock \doi{10.1103/PRXQuantum.2.030334}.

\bibitem[Davoudi et~al.(2021{\natexlab{a}})Davoudi, Linke, and
  Pagano]{davoudi2021simulating}
Zohreh Davoudi, Norbert~M. Linke, and Guido Pagano.
\newblock Toward simulating quantum field theories with controlled phonon-ion
  dynamics: A hybrid analog-digital approach.
\newblock \emph{Phys. Rev. Research}, 3:\penalty0 043072, Oct
  2021{\natexlab{a}}.
\newblock \doi{10.1103/PhysRevResearch.3.043072}.

\bibitem[Zohar et~al.(2012)Zohar, Cirac, and Reznik]{Zohar:2012ay}
Erez Zohar, J.~Ignacio Cirac, and Benni Reznik.
\newblock {Simulating Compact Quantum Electrodynamics with ultracold atoms:
  Probing confinement and nonperturbative effects}.
\newblock \emph{Phys. Rev. Lett.}, 109:\penalty0 125302, 2012.
\newblock \doi{10.1103/PhysRevLett.109.125302}.

\bibitem[Zohar et~al.(2013{\natexlab{b}})Zohar, Cirac, and
  Reznik]{Zohar:2012xf}
Erez Zohar, J.~Ignacio Cirac, and Benni Reznik.
\newblock {Cold-Atom Quantum Simulator for SU(2) Yang-Mills Lattice Gauge
  Theory}.
\newblock \emph{Phys. Rev. Lett.}, 110\penalty0 (12):\penalty0 125304,
  2013{\natexlab{b}}.
\newblock \doi{10.1103/PhysRevLett.110.125304}.

\bibitem[Banerjee et~al.(2013)Banerjee, B\"{o}gli, Dalmonte, Rico, Stebler,
  Wiese, and Zoller]{Banerjee:2012xg}
D.~Banerjee, M.~B\"{o}gli, M.~Dalmonte, E.~Rico, P.~Stebler, U.~J. Wiese, and
  P.~Zoller.
\newblock {Atomic Quantum Simulation of U(N) and SU(N) Non-Abelian Lattice
  Gauge Theories}.
\newblock \emph{Phys. Rev. Lett.}, 110\penalty0 (12):\penalty0 125303, 2013.
\newblock \doi{10.1103/PhysRevLett.110.125303}.

\bibitem[Banerjee et~al.(2012)Banerjee, Dalmonte, Muller, Rico, Stebler, Wiese,
  and Zoller]{Banerjee:2012pg}
D.~Banerjee, M.~Dalmonte, M.~Muller, E.~Rico, P.~Stebler, U.~J. Wiese, and
  P.~Zoller.
\newblock {Atomic Quantum Simulation of Dynamical Gauge Fields coupled to
  Fermionic Matter: From String Breaking to Evolution after a Quench}.
\newblock \emph{Phys. Rev. Lett.}, 109:\penalty0 175302, 2012.
\newblock \doi{10.1103/PhysRevLett.109.175302}.

\bibitem[Martinez et~al.(2016)Martinez, Muschik, Schindler, Nigg, Erhard, Heyl,
  Hauke, Dalmonte, Monz, Zoller, and Blatt]{Martinez2016}
Esteban~A. Martinez, Christine~A. Muschik, Philipp Schindler, Daniel Nigg,
  Alexander Erhard, Markus Heyl, Philipp Hauke, Marcello Dalmonte, Thomas Monz,
  Peter Zoller, and Rainer Blatt.
\newblock Real-time dynamics of lattice gauge theories with a few-qubit quantum
  computer.
\newblock \emph{Nature}, 534:\penalty0 516 EP --, Jun 2016.
\newblock \doi{10.1038/nature18318}.

\bibitem[Muschik et~al.(2017)Muschik, Heyl, Martinez, Monz, Schindler, Vogell,
  Dalmonte, Hauke, Blatt, and Zoller]{Muschik:2016tws}
Christine Muschik, Markus Heyl, Esteban Martinez, Thomas Monz, Philipp
  Schindler, Berit Vogell, Marcello Dalmonte, Philipp Hauke, Rainer Blatt, and
  Peter Zoller.
\newblock {U(1) Wilson lattice gauge theories in digital quantum simulators}.
\newblock \emph{New J. Phys.}, 19\penalty0 (10):\penalty0 103020, 2017.
\newblock \doi{10.1088/1367-2630/aa89ab}.

\bibitem[Zohar et~al.(2017)Zohar, Farace, Reznik, and Cirac]{Zohar:2016iic}
Erez Zohar, Alessandro Farace, Benni Reznik, and J.~Ignacio Cirac.
\newblock {Digital lattice gauge theories}.
\newblock \emph{Phys. Rev.}, A95\penalty0 (2):\penalty0 023604, 2017.
\newblock \doi{10.1103/PhysRevA.95.023604}.

\bibitem[Banuls et~al.(2017)Banuls, Cichy, Cirac, Jansen, and
  Kuhn]{Banuls:2017ena}
Mari~Carmen Banuls, Krzysztof Cichy, J.~Ignacio Cirac, Karl Jansen, and Stefan
  Kuhn.
\newblock {Efficient basis formulation for 1+1 dimensional SU(2) lattice gauge
  theory: Spectral calculations with matrix product states}.
\newblock \emph{Phys. Rev.}, X7\penalty0 (4):\penalty0 041046, 2017.
\newblock \doi{10.1103/PhysRevX.7.041046}.

\bibitem[Kaplan and Stryker(2020)]{Kaplan:2018vnj}
David~B. Kaplan and Jesse~R. Stryker.
\newblock Gauss's law, duality, and the hamiltonian formulation of u(1) lattice
  gauge theory.
\newblock \emph{Phys. Rev. D}, 102:\penalty0 094515, Nov 2020.
\newblock \doi{10.1103/PhysRevD.102.094515}.

\bibitem[Zache et~al.(2018)Zache, Hebenstreit, Jendrzejewski, Oberthaler,
  Berges, and Hauke]{Zache:2018jbt}
T.~V. Zache, F.~Hebenstreit, F.~Jendrzejewski, M.~K. Oberthaler, J.~Berges, and
  P.~Hauke.
\newblock {Quantum simulation of lattice gauge theories using Wilson fermions}.
\newblock \emph{Sci. Technol.}, 3:\penalty0 034010, 2018.
\newblock \doi{10.1088/2058-9565/aac33b}.

\bibitem[Stryker(2019)]{Stryker:2018efp}
Jesse~R. Stryker.
\newblock {Oracles for Gauss's law on digital quantum computers}.
\newblock \emph{Phys. Rev.}, A99\penalty0 (4):\penalty0 042301, 2019.
\newblock \doi{10.1103/PhysRevA.99.042301}.

\bibitem[Raychowdhury(2019)]{Raychowdhury:2018tfj}
Indrakshi Raychowdhury.
\newblock {Low energy spectrum of SU(2) lattice gauge theory}.
\newblock \emph{Eur. Phys. J.}, C79\penalty0 (3):\penalty0 235, 2019.
\newblock \doi{10.1140/epjc/s10052-019-6753-0}.

\bibitem[Davoudi et~al.(2020)Davoudi, Hafezi, Monroe, Pagano, Seif, and
  Shaw]{Davoudi:2019bhy}
Zohreh Davoudi, Mohammad Hafezi, Christopher Monroe, Guido Pagano, Alireza
  Seif, and Andrew Shaw.
\newblock Towards analog quantum simulations of lattice gauge theories with
  trapped ions.
\newblock \emph{Phys. Rev. Research}, 2:\penalty0 023015, Apr 2020.
\newblock \doi{10.1103/PhysRevResearch.2.023015}.

\bibitem[Haase et~al.(2021)Haase, Dellantonio, Celi, Paulson, Kan, Jansen, and
  Muschik]{Haase:2020kaj}
Jan~F. Haase, Luca Dellantonio, Alessio Celi, Danny Paulson, Angus Kan, Karl
  Jansen, and Christine~A. Muschik.
\newblock A resource efficient approach for quantum and classical simulations
  of gauge theories in particle physics.
\newblock \emph{{Quantum}}, 5:\penalty0 393, February 2021.
\newblock ISSN 2521-327X.
\newblock \doi{10.22331/q-2021-02-04-393}.

\bibitem[Davoudi et~al.(2021{\natexlab{b}})Davoudi, Raychowdhury, and
  Shaw]{Davoudi:2020yln}
Zohreh Davoudi, Indrakshi Raychowdhury, and Andrew Shaw.
\newblock Search for efficient formulations for hamiltonian simulation of
  non-abelian lattice gauge theories.
\newblock \emph{Phys. Rev. D}, 104:\penalty0 074505, Oct 2021{\natexlab{b}}.
\newblock \doi{10.1103/PhysRevD.104.074505}.

\bibitem[Buser et~al.(2021)Buser, Gharibyan, Hanada, Honda, and
  Liu]{Buser:2020cvn}
Alexander Buser, Hrant Gharibyan, Masanori Hanada, Masazumi Honda, and Junyu
  Liu.
\newblock {Quantum simulation of gauge theory via orbifold lattice}.
\newblock \emph{J. High Energ. Phys.}, 11 2021.
\newblock \doi{10.1007/JHEP09(2021)034}.

\bibitem[Raychowdhury and Stryker(2020{\natexlab{a}})]{raychowdhury2020solving}
Indrakshi Raychowdhury and Jesse~R. Stryker.
\newblock Solving gauss's law on digital quantum computers with
  loop-string-hadron digitization.
\newblock \emph{Phys. Rev. Research}, 2:\penalty0 033039, Jul
  2020{\natexlab{a}}.
\newblock \doi{10.1103/PhysRevResearch.2.033039}.

\bibitem[Raychowdhury and Stryker(2020{\natexlab{b}})]{Raychowdhury:2019iki}
Indrakshi Raychowdhury and Jesse~R. Stryker.
\newblock {Loop, string, and hadron dynamics in SU(2) Hamiltonian lattice gauge
  theories}.
\newblock \emph{Phys. Rev. D}, 101\penalty0 (11):\penalty0 114502,
  2020{\natexlab{b}}.
\newblock \doi{10.1103/PhysRevD.101.114502}.

\bibitem[Tagliacozzo et~al.(2013{\natexlab{c}})Tagliacozzo, Celi, Orland, and
  Lewenstein]{Tagliacozzo:2012df}
L.~Tagliacozzo, A.~Celi, P.~Orland, and M.~Lewenstein.
\newblock {Simulations of non-Abelian gauge theories with optical lattices}.
\newblock \emph{Nature Commun.}, 4:\penalty0 2615, 2013{\natexlab{c}}.
\newblock \doi{10.1038/ncomms3615}.

\bibitem[Ji et~al.(2020)Ji, Lamm, and Zhu]{Ji:2020kjk}
Yao Ji, Henry Lamm, and Shuchen Zhu.
\newblock {Gluon Field Digitization via Group Space Decimation for Quantum
  Computers}.
\newblock \emph{Phys. Rev. D}, 102:\penalty0 114513, 2020.
\newblock \doi{10.1103/PhysRevD.102.114513}.

\bibitem[Chandrasekharan and Wiese(1997)]{Chandrasekharan:1996ih}
S.~Chandrasekharan and U.J. Wiese.
\newblock {Quantum link models: A Discrete approach to gauge theories}.
\newblock \emph{Nucl. Phys. B}, 492:\penalty0 455--474, 1997.
\newblock \doi{10.1016/S0550-3213(97)00006-0}.

\bibitem[Brower et~al.(2004)Brower, Chandrasekharan, Riederer, and
  Wiese]{Brower:2003vy}
R.~Brower, S.~Chandrasekharan, S.~Riederer, and U.J. Wiese.
\newblock {D theory: Field quantization by dimensional reduction of discrete
  variables}.
\newblock \emph{Nucl. Phys. B}, 693:\penalty0 149--175, 2004.
\newblock \doi{10.1016/j.nuclphysb.2004.06.007}.

\bibitem[Wiese(2006)]{Wiese:2006kp}
U.J. Wiese.
\newblock {D-theory: A quest for nature's regularization}.
\newblock \emph{Nucl. Phys. B Proc. Suppl.}, 153:\penalty0 336--347, 2006.
\newblock \doi{10.1016/j.nuclphysbps.2006.01.027}.

\bibitem[Atas et~al.(2021)Atas, Zhang, Lewis, Jahanpour, Haase, and
  Muschik]{atas2021su2}
Yasar~Y Atas, Jinglei Zhang, Randy Lewis, Amin Jahanpour, Jan~F Haase, and
  Christine~A Muschik.
\newblock Su (2) hadrons on a quantum computer via a variational approach.
\newblock \emph{Nature communications}, 12\penalty0 (1):\penalty0 1--11, 2021.
\newblock \doi{10.1038/s41467-021-26825-4}.

\bibitem[Klco et~al.(2021)Klco, Roggero, and Savage]{klco2021standard}
Natalie Klco, Alessandro Roggero, and Martin~J Savage.
\newblock Standard model physics and the digital quantum revolution: thoughts
  about the interface.
\newblock \emph{arXiv:2107.04769}, 2021.
\newblock URL \url{https://arxiv.org/abs/2107.04769}.

\bibitem[de~Jong et~al.(2021)de~Jong, Lee, Mulligan, P{\l}osko{\'n}, Ringer,
  and Yao]{dejong2021quantum}
Wibe~A de~Jong, Kyle Lee, James Mulligan, Mateusz P{\l}osko{\'n}, Felix Ringer,
  and Xiaojun Yao.
\newblock Quantum simulation of non-equilibrium dynamics and thermalization in
  the schwinger model.
\newblock \emph{arXiv:2106.08394}, 2021.
\newblock URL \url{https://arxiv.org/abs/2106.08394}.

\bibitem[Meurice(2021)]{meurice2021theoretical}
Yannick Meurice.
\newblock Theoretical methods to design and test quantum simulators for the
  compact abelian higgs model, 2021.

\bibitem[Zohar(2022)]{zohar2021quantum}
Erez Zohar.
\newblock Quantum simulation of lattice gauge theories in more than one space
  dimension—requirements, challenges and methods.
\newblock \emph{Philos. Trans. Royal Soc. A}, 380\penalty0 (2216):\penalty0
  20210069, 2022.
\newblock \doi{10.1098/rsta.2021.0069}.

\bibitem[Armon et~al.(2021)Armon, Ashkenazi, Garc\'{\i}a-Moreno,
  Gonz\'alez-Tudela, and Zohar]{armon2021photonmediated}
Tsafrir Armon, Shachar Ashkenazi, Gerardo Garc\'{\i}a-Moreno, Alejandro
  Gonz\'alez-Tudela, and Erez Zohar.
\newblock Photon-mediated stroboscopic quantum simulation of a
  ${\mathbb{z}}_{2}$ lattice gauge theory.
\newblock \emph{Phys. Rev. Lett.}, 127:\penalty0 250501, Dec 2021.
\newblock \doi{10.1103/PhysRevLett.127.250501}.

\bibitem[Andrade et~al.(2021)Andrade, Davoudi, Gra{\ss}, Hafezi, Pagano, and
  Seif]{andrade2021engineering}
B{\'a}rbara Andrade, Zohreh Davoudi, Tobias Gra{\ss}, Mohammad Hafezi, Guido
  Pagano, and Alireza Seif.
\newblock Engineering an effective three-spin hamiltonian in trapped-ion
  systems for applications in quantum simulation.
\newblock \emph{arXiv:2108.01022}, 2021.
\newblock URL \url{https://arxiv.org/abs/2108.01022}.

\bibitem[Gharibyan et~al.(2021)Gharibyan, Hanada, Honda, and
  Liu]{Gharibyan_2021}
Hrant Gharibyan, Masanori Hanada, Masazumi Honda, and Junyu Liu.
\newblock Toward simulating superstring/m-theory on a quantum computer.
\newblock \emph{Journal of High Energy Physics}, 2021\penalty0 (7), Jul 2021.
\newblock ISSN 1029-8479.
\newblock \doi{10.1007/jhep07(2021)140}.

\bibitem[Alexandru et~al.(2019{\natexlab{b}})Alexandru, Bedaque, Lamm, and
  Lawrence]{Alexandru_2019_2}
Andrei Alexandru, Paulo~F. Bedaque, Henry Lamm, and Scott Lawrence.
\newblock Sigma models on quantum computers.
\newblock \emph{Physical Review Letters}, 123\penalty0 (9), Aug
  2019{\natexlab{b}}.
\newblock ISSN 1079-7114.
\newblock \doi{10.1103/physrevlett.123.090501}.

\bibitem[Singh and Chandrasekharan(2019)]{Singh:2019uwd}
Hersh Singh and Shailesh Chandrasekharan.
\newblock {Qubit regularization of the $O(3)$ sigma model}.
\newblock \emph{Phys. Rev. D}, 100\penalty0 (5):\penalty0 054505, 2019.
\newblock \doi{10.1103/PhysRevD.100.054505}.

\bibitem[Bhattacharya et~al.(2021)Bhattacharya, Buser, Chandrasekharan, Gupta,
  and Singh]{Bhattacharya:2020gpm}
Tanmoy Bhattacharya, Alexander~J. Buser, Shailesh Chandrasekharan, Rajan Gupta,
  and Hersh Singh.
\newblock Qubit regularization of asymptotic freedom.
\newblock \emph{Phys. Rev. Lett.}, 126:\penalty0 172001, Apr 2021.
\newblock \doi{10.1103/PhysRevLett.126.172001}.

\bibitem[Hostetler et~al.(2021)Hostetler, Zhang, Sakai, Unmuth-Yockey, Bazavov,
  and Meurice]{Hostetler_2021}
Leon Hostetler, Jin Zhang, Ryo Sakai, Judah Unmuth-Yockey, Alexei Bazavov, and
  Yannick Meurice.
\newblock Clock model interpolation and symmetry breaking in o(2) models.
\newblock \emph{Phys. Rev. D}, 104:\penalty0 054505, Sep 2021.
\newblock \doi{10.1103/PhysRevD.104.054505}.

\bibitem[Kreshchuk et~al.(2020{\natexlab{a}})Kreshchuk, Kirby, Goldstein,
  Beauchemin, and Love]{kreshchuk2020quantum}
Michael Kreshchuk, William~M Kirby, Gary Goldstein, Hugo Beauchemin, and
  Peter~J Love.
\newblock Quantum simulation of quantum field theory in the light-front
  formulation.
\newblock \emph{arXiv:2002.04016}, 2020{\natexlab{a}}.
\newblock URL \url{https://arxiv.org/abs/2002.04016}.

\bibitem[Kreshchuk et~al.(2020{\natexlab{b}})Kreshchuk, Jia, Kirby, Goldstein,
  Vary, and Love]{Kreshchuk:2020kcz}
Michael Kreshchuk, Shaoyang Jia, William~M. Kirby, Gary Goldstein, James~P.
  Vary, and Peter~J. Love.
\newblock {Light-Front Field Theory on Current Quantum Computers}.
\newblock \emph{Entropy}, 9 2020{\natexlab{b}}.
\newblock \doi{10.3390/e23050597}.

\bibitem[Wilson(1990)]{WILSON199082}
Kenneth~G Wilson.
\newblock Ab initio quantum chemistry: A source of ideas for lattice gauge
  theorists.
\newblock \emph{Nuclear Physics B-Proceedings Supplements}, 17:\penalty0
  82--92, 1990.
\newblock \doi{10.1016/0920-5632(90)90223-H}.

\bibitem[Veis et~al.(2012)Veis, Vi{\v{s}}{\v{n}}{\'a}k, Fleig, Knecht, Saue,
  Visscher, and Pittner]{Pittner12_030304}
Libor Veis, Jakub Vi{\v{s}}{\v{n}}{\'a}k, Timo Fleig, Stefan Knecht, Trond
  Saue, Lucas Visscher, and Ji{\v{r}}{\'\i} Pittner.
\newblock Relativistic quantum chemistry on quantum computers.
\newblock \emph{Physical Review A}, 85\penalty0 (3):\penalty0 030304, 2012.
\newblock \doi{10.1103/PhysRevA.85.030304}.

\bibitem[Pyykk\"o(2012)]{pyykko12_371}
Pekka Pyykk\"o.
\newblock The physics behind chemistry and the periodic table.
\newblock \emph{Chemical reviews}, 112\penalty0 (1):\penalty0 371--384, 2012.
\newblock \doi{10.1021/cr200042e}.

\bibitem[Liu(2015)]{Liu15_631}
Wenjian Liu.
\newblock Effective quantum electrodynamics hamiltonians: A tutorial review.
\newblock \emph{International Journal of Quantum Chemistry}, 115\penalty0
  (10):\penalty0 631--640, 2015.
\newblock \doi{10.1002/qua.24852}.

\bibitem[Liu and Lindgren(2013)]{Lindgren13_014108}
Wenjian Liu and Ingvar Lindgren.
\newblock Going beyond “no-pair relativistic quantum chemistry”.
\newblock \emph{The Journal of Chemical Physics}, 139\penalty0 (1):\penalty0
  014108, 2013.

\bibitem[Schwartz(2014)]{Schwartz}
Matthew~D. Schwartz.
\newblock \emph{{Quantum Field Theory and the Standard Model}}.
\newblock Cambridge University Press, 2014.
\newblock ISBN 1107034736, 9781107034730.
\newblock \doi{10.1017/9781139540940}.

\bibitem[Babbush et~al.(2018{\natexlab{a}})Babbush, Wiebe, McClean, McClain,
  Neven, and Chan]{babbush2018low}
Ryan Babbush, Nathan Wiebe, Jarrod McClean, James McClain, Hartmut Neven, and
  Garnet Kin-Lic Chan.
\newblock Low-depth quantum simulation of materials.
\newblock \emph{Physical Review X}, 8\penalty0 (1):\penalty0 011044,
  2018{\natexlab{a}}.
\newblock \doi{10.1103/PhysRevX.8.011044}.

\bibitem[Hagge(2020)]{hagge2020optimal}
Tobias Hagge.
\newblock Optimal fermionic swap networks for hubbard models.
\newblock \emph{arXiv:2001.08324}, 2020.
\newblock URL \url{https://arxiv.org/abs/2001.08324}.

\bibitem[Nielsen and Ninomiya(1981{\natexlab{a}})]{NNTheorem1}
Holger~Bech Nielsen and M.~Ninomiya.
\newblock {No Go Theorem for Regularizing Chiral Fermions}.
\newblock \emph{Phys. Lett. B}, 105:\penalty0 219--223, 1981{\natexlab{a}}.
\newblock \doi{10.1016/0370-2693(81)91026-1}.

\bibitem[Nielsen and Ninomiya(1981{\natexlab{b}})]{NNTheorem2}
Holger~Bech Nielsen and M.~Ninomiya.
\newblock {Absence of Neutrinos on a Lattice. 2. Intuitive Topological Proof}.
\newblock \emph{Nucl. Phys. B}, 193:\penalty0 173--194, 1981{\natexlab{b}}.
\newblock \doi{10.1016/0550-3213(81)90524-1}.

\bibitem[Wilson(1977)]{Doubling1}
Kenneth~G Wilson.
\newblock New phenomena in subnuclear physics, 1977.

\bibitem[Kogut and Susskind(1975)]{Doubling2}
John Kogut and Leonard Susskind.
\newblock Hamiltonian formulation of wilson's lattice gauge theories.
\newblock \emph{Phys. Rev. D}, 11:\penalty0 395--408, Jan 1975.
\newblock \doi{10.1103/PhysRevD.11.395}.

\bibitem[Drell et~al.(1976)Drell, Weinstein, and Yankielowicz]{Doubling3}
Sidney~D. Drell, Marvin Weinstein, and Shimon Yankielowicz.
\newblock Strong-coupling field theories. ii. fermions and gauge fields on a
  lattice.
\newblock \emph{Phys. Rev. D}, 14:\penalty0 1627--1647, Sep 1976.
\newblock \doi{10.1103/PhysRevD.14.1627}.

\bibitem[Nason(1985)]{Doubling4}
Paolo Nason.
\newblock The lattice schwinger model with slac fermions.
\newblock \emph{Nuclear Physics B}, 260\penalty0 (2):\penalty0 269 -- 284,
  1985.
\newblock ISSN 0550-3213.
\newblock \doi{10.1016/0550-3213(85)90072-0}.

\bibitem[Kaplan(1992)]{KAPLAN1992342}
David~B. Kaplan.
\newblock A method for simulating chiral fermions on the lattice.
\newblock \emph{Physics Letters B}, 288\penalty0 (3):\penalty0 342--347, 1992.
\newblock ISSN 0370-2693.
\newblock \doi{10.1016/0370-2693(92)91112-M}.

\bibitem[Neuberger(1998{\natexlab{a}})]{Neuberger_1998_1}
Herbert Neuberger.
\newblock Exactly massless quarks on the lattice.
\newblock \emph{Physics Letters B}, 417\penalty0 (1-2):\penalty0 141–144, Jan
  1998{\natexlab{a}}.
\newblock ISSN 0370-2693.
\newblock \doi{10.1016/s0370-2693(97)01368-3}.

\bibitem[Neuberger(1998{\natexlab{b}})]{Neuberger_1998_2}
Herbert Neuberger.
\newblock More about exactly massless quarks on the lattice.
\newblock \emph{Physics Letters B}, 427\penalty0 (3-4):\penalty0 353–355, May
  1998{\natexlab{b}}.
\newblock ISSN 0370-2693.
\newblock \doi{10.1016/s0370-2693(98)00355-4}.

\bibitem[Verstraete et~al.(2009)Verstraete, Cirac, and
  Latorre]{verstraete2009quantum}
Frank Verstraete, J~Ignacio Cirac, and Jos{\'e}~I Latorre.
\newblock Quantum circuits for strongly correlated quantum systems.
\newblock \emph{Physical Review A}, 79\penalty0 (3):\penalty0 032316, 2009.
\newblock \doi{10.1103/PhysRevA.79.032316}.

\bibitem[Lloyd(1996)]{lloyd1996universal}
Seth Lloyd.
\newblock Universal quantum simulators.
\newblock \emph{Science}, pages 1073--1078, 1996.
\newblock \doi{10.1126/science.273.5278.1073}.

\bibitem[Zalka(1998)]{zalka1998simulating}
Christof Zalka.
\newblock Simulating quantum systems on a quantum computer.
\newblock \emph{Proceedings of the Royal Society of London. Series A:
  Mathematical, Physical and Engineering Sciences}, 454\penalty0
  (1969):\penalty0 313--322, 1998.
\newblock \doi{10.1098/rspa.1998.0162}.

\bibitem[Berry et~al.(2007)Berry, Ahokas, Cleve, and
  Sanders]{berry2007efficient}
Dominic~W Berry, Graeme Ahokas, Richard Cleve, and Barry~C Sanders.
\newblock Efficient quantum algorithms for simulating sparse hamiltonians.
\newblock \emph{Communications in Mathematical Physics}, 270\penalty0
  (2):\penalty0 359--371, 2007.
\newblock \doi{10.1007/s00220-006-0150-x}.

\bibitem[Wiebe et~al.(2011)Wiebe, Berry, H{\o}yer, and
  Sanders]{wiebe2011simulating}
Nathan Wiebe, Dominic~W Berry, Peter H{\o}yer, and Barry~C Sanders.
\newblock Simulating quantum dynamics on a quantum computer.
\newblock \emph{Journal of Physics A: Mathematical and Theoretical},
  44\penalty0 (44):\penalty0 445308, 2011.
\newblock \doi{10.1088/1751-8113/44/44/445308}.

\bibitem[Su et~al.(2021)Su, Huang, and Campbell]{su2020nearly}
Yuan Su, Hsin-Yuan Huang, and Earl~T Campbell.
\newblock Nearly tight trotterization of interacting electrons.
\newblock \emph{Quantum}, 5:\penalty0 495, 2021.
\newblock \doi{10.22331/q-2021-07-05-495}.

\bibitem[Suzuki(1990)]{suzuki1990fractal}
Masuo Suzuki.
\newblock Fractal decomposition of exponential operators with applications to
  many-body theories and monte carlo simulations.
\newblock \emph{Physics Letters A}, 146\penalty0 (6):\penalty0 319--323, 1990.
\newblock \doi{10.1016/0375-9601(90)90962-N}.

\bibitem[Childs et~al.(2021)Childs, Su, Tran, Wiebe, and Zhu]{childs2021theory}
Andrew~M Childs, Yuan Su, Minh~C Tran, Nathan Wiebe, and Shuchen Zhu.
\newblock Theory of trotter error with commutator scaling.
\newblock \emph{Physical Review X}, 11\penalty0 (1):\penalty0 011020, 2021.
\newblock \doi{10.1103/PhysRevX.11.011020}.

\bibitem[Childs et~al.(2018)Childs, Maslov, Nam, Ross, and
  Su]{childs2018toward}
Andrew~M Childs, Dmitri Maslov, Yunseong Nam, Neil~J Ross, and Yuan Su.
\newblock Toward the first quantum simulation with quantum speedup.
\newblock \emph{Proceedings of the National Academy of Sciences}, 115\penalty0
  (38):\penalty0 9456--9461, 2018.
\newblock \doi{10.1073/pnas.1801723115}.

\bibitem[Hastings et~al.(2015)Hastings, Wecker, Bauer, and
  Troyer]{hastings2015improving}
Matthew~B Hastings, Dave Wecker, Bela Bauer, and Matthias Troyer.
\newblock Improving quantum algorithms for quantum chemistry.
\newblock \emph{Quantum Information \& Computation}, 15\penalty0
  (1-2):\penalty0 1--21, 2015.
\newblock \doi{10.26421/QIC15.1-2-1}.

\bibitem[Seeley et~al.(2012)Seeley, Richard, and Love]{seeley2012bravyi}
Jacob~T Seeley, Martin~J Richard, and Peter~J Love.
\newblock The bravyi-kitaev transformation for quantum computation of
  electronic structure.
\newblock \emph{The Journal of chemical physics}, 137\penalty0 (22):\penalty0
  224109, 2012.
\newblock \doi{10.1063/1.4768229}.

\bibitem[O'Gorman et~al.(2019)O'Gorman, Huggins, Rieffel, and
  Whaley]{o2019generalized}
Bryan O'Gorman, William~J Huggins, Eleanor~G Rieffel, and K~Birgitta Whaley.
\newblock Generalized swap networks for near-term quantum computing.
\newblock \emph{arXiv:1905.05118}, 2019.
\newblock URL \url{https://arxiv.org/abs/1905.05118}.

\bibitem[Babbush et~al.(2015)Babbush, McClean, Wecker, Aspuru-Guzik, and
  Wiebe]{babbush2015chemical}
Ryan Babbush, Jarrod McClean, Dave Wecker, Al{\'a}n Aspuru-Guzik, and Nathan
  Wiebe.
\newblock Chemical basis of trotter-suzuki errors in quantum chemistry
  simulation.
\newblock \emph{Physical Review A}, 91\penalty0 (2):\penalty0 022311, 2015.
\newblock \doi{10.1103/PhysRevA.91.022311}.

\bibitem[Horn and Johnson(2012)]{horn2012matrix}
Roger~A Horn and Charles~R Johnson.
\newblock \emph{Matrix Analysis}.
\newblock Cambridge university press, 2012.
\newblock \doi{10.1017/CBO9780511810817}.

\bibitem[Nielsen and Chuang(2002)]{nielsen2002quantum}
Michael~A Nielsen and Isaac Chuang.
\newblock \emph{Quantum Computation and Quantum Information}.
\newblock American Association of Physics Teachers, 2002.
\newblock \doi{10.1017/CBO9780511976667}.

\bibitem[Kliuchnikov et~al.(2012)Kliuchnikov, Maslov, and
  Mosca]{kliuchnikov2012fast}
Vadym Kliuchnikov, Dmitri Maslov, and Michele Mosca.
\newblock Fast and efficient exact synthesis of single qubit unitaries
  generated by clifford and t gates.
\newblock \emph{arXiv:1206.5236}, 2012.
\newblock \doi{10.26421/QIC13.7-8-4}.
\newblock URL \url{https://arxiv.org/abs/1206.5236}.

\bibitem[Ross and Selinger(2014)]{ross2014optimal}
Neil~J Ross and Peter Selinger.
\newblock Optimal ancilla-free clifford+ t approximation of z-rotations.
\newblock \emph{arXiv:1403.2975}, 2014.
\newblock \doi{10.26421/QIC15.11-12-4}.
\newblock URL \url{https://arxiv.org/abs/1403.2975}.

\bibitem[Berry et~al.(2019)Berry, Gidney, Motta, McClean, and
  Babbush]{berry2019qubitization}
Dominic~W Berry, Craig Gidney, Mario Motta, Jarrod~R McClean, and Ryan Babbush.
\newblock Qubitization of arbitrary basis quantum chemistry leveraging sparsity
  and low rank factorization.
\newblock \emph{Quantum}, 3:\penalty0 208, 2019.
\newblock \doi{10.22331/q-2019-12-02-208}.

\bibitem[Higgins et~al.(2007)Higgins, Berry, Bartlett, Wiseman, and
  Pryde]{higgins2007entanglement}
Brendon~L Higgins, Dominic~W Berry, Stephen~D Bartlett, Howard~M Wiseman, and
  Geoff~J Pryde.
\newblock Entanglement-free heisenberg-limited phase estimation.
\newblock \emph{Nature}, 450\penalty0 (7168):\penalty0 393--396, 2007.
\newblock \doi{10.1038/nature06257}.

\bibitem[Abrikosov et~al.(1954)Abrikosov, Landau, and Khalatnikov]{triviality1}
AA~Abrikosov, LD~Landau, and IM~Khalatnikov.
\newblock On the elimination of infinities in quantum electrodynamics.
\newblock In \emph{Dokl. Akad. Nauk SSSR}, volume~95, page 497, 1954.

\bibitem[Callaway(1988)]{triviality2}
David~J.E. Callaway.
\newblock Triviality pursuit: Can elementary scalar particles exist?
\newblock \emph{Physics Reports}, 167\penalty0 (5):\penalty0 241 -- 320, 1988.
\newblock ISSN 0370-1573.
\newblock \doi{10.1016/0370-1573(88)90008-7}.

\bibitem[Gockeler et~al.(1990)Gockeler, Horsley, Laermann, Rakow, Schierholz,
  Sommer, and Wiese]{triviality3}
M.~Gockeler, R.~Horsley, E.~Laermann, P.~Rakow, G.~Schierholz, R.~Sommer, and
  U.J. Wiese.
\newblock The continuum limit of qed. renormalization group analysis and the
  question of triviality.
\newblock \emph{Physics Letters B}, 251\penalty0 (4):\penalty0 567 -- 574,
  1990.
\newblock ISSN 0370-2693.
\newblock \doi{10.1016/0370-2693(90)90798-B}.

\bibitem[Djukanovic et~al.(2018)Djukanovic, Gegelia, and Meißner]{triviality4}
D.~Djukanovic, J.~Gegelia, and Ulf-G. Meißner.
\newblock Triviality of quantum electrodynamics revisited.
\newblock \emph{Communications in Theoretical Physics}, 69\penalty0
  (3):\penalty0 263, Mar 2018.
\newblock ISSN 0253-6102.
\newblock \doi{10.1088/0253-6102/69/3/263}.

\bibitem[Childs(2010)]{childs2010relationship}
Andrew~M Childs.
\newblock On the relationship between continuous-and discrete-time quantum
  walk.
\newblock \emph{Communications in Mathematical Physics}, 294\penalty0
  (2):\penalty0 581--603, 2010.
\newblock \doi{10.1007/s00220-009-0930-1}.

\bibitem[Low and Chuang(2019)]{low2019hamiltonian}
Guang~Hao Low and Isaac~L Chuang.
\newblock Hamiltonian simulation by qubitization.
\newblock \emph{Quantum}, 3:\penalty0 163, 2019.
\newblock \doi{10.22331/q-2019-07-12-163}.

\bibitem[Gily{\'e}n et~al.(2019)Gily{\'e}n, Su, Low, and
  Wiebe]{gilyen2019quantum}
Andr{\'a}s Gily{\'e}n, Yuan Su, Guang~Hao Low, and Nathan Wiebe.
\newblock Quantum singular value transformation and beyond: exponential
  improvements for quantum matrix arithmetics.
\newblock In \emph{Proceedings of the 51st Annual ACM SIGACT Symposium on
  Theory of Computing}, pages 193--204, 2019.
\newblock \doi{10.1145/3313276.3316366}.

\bibitem[Babbush et~al.(2018{\natexlab{b}})Babbush, Gidney, Berry, Wiebe,
  McClean, Paler, Fowler, and Neven]{babbush2018encoding}
Ryan Babbush, Craig Gidney, Dominic~W Berry, Nathan Wiebe, Jarrod McClean,
  Alexandru Paler, Austin Fowler, and Hartmut Neven.
\newblock Encoding electronic spectra in quantum circuits with linear t
  complexity.
\newblock \emph{Physical Review X}, 8\penalty0 (4):\penalty0 041015,
  2018{\natexlab{b}}.
\newblock \doi{10.1103/PhysRevX.8.041015}.

\bibitem[Poulin et~al.(2018)Poulin, Kitaev, Steiger, Hastings, and
  Troyer]{poulin2018quantum}
David Poulin, Alexei Kitaev, Damian~S Steiger, Matthew~B Hastings, and Matthias
  Troyer.
\newblock Quantum algorithm for spectral measurement with a lower gate count.
\newblock \emph{Physical review letters}, 121\penalty0 (1):\penalty0 010501,
  2018.
\newblock \doi{10.1103/PhysRevLett.121.010501}.

\bibitem[Motta et~al.(2018)Motta, Ye, McClean, Li, Minnich, Babbush, and
  Chan]{motta2018low}
Mario Motta, Erika Ye, Jarrod~R McClean, Zhendong Li, Austin~J Minnich, Ryan
  Babbush, and Garnet~Kin Chan.
\newblock Low rank representations for quantum simulation of electronic
  structure.
\newblock \emph{arXiv:1808.02625}, 2018.
\newblock \doi{10.1038/s41534-021-00416-z}.
\newblock URL \url{https://arxiv.org/abs/1808.02625}.

\bibitem[Kivlichan et~al.(2018)Kivlichan, McClean, Wiebe, Gidney, Aspuru-Guzik,
  Chan, and Babbush]{Babbush18_110501}
Ian~D Kivlichan, Jarrod McClean, Nathan Wiebe, Craig Gidney, Al{\'a}n
  Aspuru-Guzik, Garnet Kin-Lic Chan, and Ryan Babbush.
\newblock Quantum simulation of electronic structure with linear depth and
  connectivity.
\newblock \emph{Physical review letters}, 120\penalty0 (11):\penalty0 110501,
  2018.
\newblock \doi{10.1103/PhysRevLett.120.110501}.

\bibitem[Meurer et~al.(2017)Meurer, Smith, Paprocki, \v{C}ert\'{i}k, Kirpichev,
  Rocklin, Kumar, Ivanov, Moore, Singh, Rathnayake, Vig, Granger, Muller,
  Bonazzi, Gupta, Vats, Johansson, Pedregosa, Curry, Terrel, Rou\v{c}ka, Saboo,
  Fernando, Kulal, Cimrman, and Scopatz]{sympy_package}
Aaron Meurer, Christopher~P. Smith, Mateusz Paprocki, Ond\v{r}ej
  \v{C}ert\'{i}k, Sergey~B. Kirpichev, Matthew Rocklin, AMiT Kumar, Sergiu
  Ivanov, Jason~K. Moore, Sartaj Singh, Thilina Rathnayake, Sean Vig, Brian~E.
  Granger, Richard~P. Muller, Francesco Bonazzi, Harsh Gupta, Shivam Vats,
  Fredrik Johansson, Fabian Pedregosa, Matthew~J. Curry, Andy~R. Terrel,
  \v{S}t\v{e}p\'{a}n Rou\v{c}ka, Ashutosh Saboo, Isuru Fernando, Sumith Kulal,
  Robert Cimrman, and Anthony Scopatz.
\newblock Sympy: symbolic computing in python.
\newblock \emph{PeerJ Computer Science}, 3:\penalty0 e103, January 2017.
\newblock ISSN 2376-5992.
\newblock \doi{10.7717/peerj-cs.103}.

\bibitem[Babbush et~al.(2018{\natexlab{c}})Babbush, Gidney, Berry, Wiebe,
  McClean, Paler, Fowler, and Neven]{Neven18_041015}
Ryan Babbush, Craig Gidney, Dominic~W Berry, Nathan Wiebe, Jarrod McClean,
  Alexandru Paler, Austin Fowler, and Hartmut Neven.
\newblock Encoding electronic spectra in quantum circuits with linear t
  complexity.
\newblock \emph{Physical Review X}, 8\penalty0 (4):\penalty0 041015,
  2018{\natexlab{c}}.
\newblock \doi{10.1103/PhysRevX.8.041015}.

\bibitem[Bocharov et~al.(2015)Bocharov, Roetteler, and Svore]{Svore15_080502}
Alex Bocharov, Martin Roetteler, and Krysta~M Svore.
\newblock Efficient synthesis of universal repeat-until-success quantum
  circuits.
\newblock \emph{Physical Review Letters}, 114\penalty0 (8):\penalty0 080502,
  2015.
\newblock \doi{10.1103/PhysRevLett.114.080502}.

\bibitem[Szalay et~al.(2012)Szalay, Muller, Gidofalvi, Lischka, and
  Shepard]{Shephard12_108}
Peter~G Szalay, Thomas Muller, Gergely Gidofalvi, Hans Lischka, and Ron
  Shepard.
\newblock Multiconfiguration self-consistent field and multireference
  configuration interaction methods and applications.
\newblock \emph{Chemical reviews}, 112\penalty0 (1):\penalty0 108--181, 2012.
\newblock \doi{10.1021/cr200137a}.

\bibitem[Lischka et~al.(1981)Lischka, Shepard, Brown, and
  Shavitt]{Shavitt81_91}
Hans Lischka, Ron Shepard, Franklin~B Brown, and Isaiah Shavitt.
\newblock New implementation of the graphical unitary group approach for
  multireference direct configuration interaction calculations.
\newblock \emph{International Journal of Quantum Chemistry}, 20\penalty0
  (S15):\penalty0 91--100, 1981.
\newblock \doi{10.1002/qua.560200810}.

\bibitem[Hu et~al.(2020)Hu, Jenkins, Liu, Kasper, Frisch, and Li]{Li20_2975}
Hang Hu, Andrew~J Jenkins, Hongbin Liu, Joseph~M Kasper, Michael~J Frisch, and
  Xiaosong Li.
\newblock Relativistic two-component multireference configuration interaction
  method with tunable correlation space.
\newblock \emph{Journal of Chemical Theory and Computation}, 16\penalty0
  (5):\penalty0 2975--2984, 2020.
\newblock \doi{10.1021/acs.jctc.9b01290}.

\bibitem[Jiang et~al.(2018)Jiang, Sung, Kechedzhi, Smelyanskiy, and
  Boixo]{Boixo18_044036}
Zhang Jiang, Kevin~J Sung, Kostyantyn Kechedzhi, Vadim~N Smelyanskiy, and
  Sergio Boixo.
\newblock Quantum algorithms to simulate many-body physics of correlated
  fermions.
\newblock \emph{Physical Review Applied}, 9\penalty0 (4):\penalty0 044036,
  2018.
\newblock \doi{10.1103/PhysRevApplied.9.044036}.

\end{thebibliography}

%\begin{thebibliography}{9}
%\bibitem{examplecitation}
%  Name Surname,
%  \href{https://doi.org/10.22331/
%        idonotexist}{Quantum
%        \textbf{123}, 123456 (1916).}
%
%\bibitem{biblatexsubmittingtothearxiv}
%  StackExchange discussion on \href{http://tex.stackexchange.com/questions/26990/biblatex-submitting-to-the-arxiv}{``Biblatex: submitting to the arXiv'' (2017-01-10)}
%
%\bibitem{arxivpdfoutput}
%  Help article published by the arXiv on \href{https://arxiv.org/help/submit_tex}{``Considerations for TeX Submissions'' (2017-01-10)}
%
%\bibitem{howtogetdoilinksinbibliography}
%  StackExchange discussion on \href{http://tex.stackexchange.com/questions/3802/how-to-get-doi-links-in-bibliography}{``How to get DOI links in bibliography'' (2016-11-18)}
%  
%\bibitem{automaticallyaddingdoifieldstoahandmadebibliography}
%  StackExchange discussion on \href{http://tex.stackexchange.com/questions/6810/automatically-adding-doi-fields-to-a-hand-made-bibliography}{``Automatically adding DOI fields to a hand-made bibliography'' (2016-11-18)}
%\end{thebibliography}

\onecolumn\newpage
\appendix

\section{Momentum Space Hamiltonian}
\label{appendix:momentumpotential}
The interactions in an effective field theory should be chosen to correctly reproduce some physics of the full model. To correctly reproduce the physics of QED, the effective interaction will be chosen to correctly reproduce the QED scattering amplitudes at lowest order in perturbation theory. This means the potential will consist of 4 fermion terms which describe the scattering processes $e^{\pm} e^{\pm} \rightarrow  e^{\pm} e^{\pm}$, $e^{+} e^{-} \rightarrow  e^{+} e^{-}$,  $e^\pm \rightleftharpoons e^\pm e^{+} e^{-}$ and $0 \rightleftharpoons e^{+} e^{-} e^{+} e^{-}$. Note that the $1 \rightarrow 3$ and $0 \rightarrow 4$ scattering amplitudes will always be off-shell, so these scattering processes will not be directly observed, but including them in the Hamiltonian is necessary for scattering amplitudes at higher orders to be correctly reproduced. The necessary scattering amplitudes in the following subsections are computed using Feynman diagrams at leading order.  Further details of these derivations can be found in~\cite{Schwartz}.
\subsection{$e^\pm e^\pm \rightarrow e^\pm e^\pm$ Amplitudes}
The electron scattering amplitude is given by

\begin{equation}
\mathcal{M}_{e^-_{p,\sigma_1}e^-_{p,\sigma_2}}^{e^-_{r,\sigma_3}e^-_{p+q-r,\sigma_4}} =   e^2 \left(\frac{\bar{u}_{\sigma_3}(p_3)\gamma^\mu u_{\sigma_1}(p_1) \bar{u}_{\sigma_4}(p_4)\gamma_\mu u_{\sigma_2}(p_2)}{(E_{p_3} - E_{p_1})^2-(\vec{p_3} - \vec{p_1})^2} - \frac{\bar{u}_{\sigma_4}(p_4)\gamma^\mu u_{\sigma_1}(p_1) \bar{u}_{\sigma_3}(p_3)\gamma_\mu u_{\sigma_2}(p_2)}{(E_{p_4} - E_{p_1})^2-(\vec{p_4} - \vec{p_1})^2}\right) \ .   
\label{eq:M1}    
\end{equation}
The positron scattering amplitude takes a similar form and it is given by

\begin{equation}
\mathcal{M}_{e^+_{p,\sigma_1}e^+_{q,\sigma_2}}^{e^+_{r,\sigma_3} e^+_{p+q-r,\sigma_4}} =  e^2 \left(\frac{\bar{v}_{\sigma_1}(p_1)\gamma^\mu v_{\sigma_3}(p_3) \bar{v}_{\sigma_2}(p_2)\gamma_\mu v_{\sigma_4}(p_4)}{(E_{p_3} - E_{p_1})^2-(\vec{p_3} - \vec{p_1})^2} - \frac{\bar{v}_{\sigma_1}(p_1)\gamma^\mu v_{\sigma_4}(p_4) \bar{v}_{\sigma_2}(p_2)\gamma_\mu v_{\sigma_3}(p_3)}{(E_{p_4} - E_{p_1})^2-(\vec{p_4} - \vec{p_1})^2}\right) \ .    \label{eq:M2}
\end{equation}

\subsection{$e^+ e^- \rightarrow e^+ e^-$ Amplitude}
The electron positron scattering amplitude is given by
\begin{equation}
\mathcal{M}_{e^-_{p,\sigma_1}e^+_{q,\sigma_2}}^{e^-_{r,\sigma_3} e^+_{p+q-r,\sigma_4}} = e^2 \left(\frac{\bar{v}_{\sigma_2}(q_1) \gamma^\mu u_{\sigma_1}(p_1) \bar{u}_{\sigma_3}(p_2) \gamma_\mu v_{\sigma_4}(q_2)}{(E_{p_1} + E_{p_2})^2-(\vec{p_1} + \vec{p_2})^2} + \frac{\bar{u}_{\sigma_3}(p_2) \gamma^\mu u_{\sigma_1}(p_1) \bar{v}_{\sigma_3}(q_1) \gamma_\mu v_{\sigma_4}(q_2)}{(E_{p_1} - E_{p_2})^2-(\vec{p_1} - \vec{p_2})^2} \right) \ .    \label{eq:M3}
\end{equation}

\subsection{$e^\pm  \rightarrow e^+ e^- e^\pm$ Amplitude}
The $e^- \rightarrow e^- e^+ e^-$ scattering amplitude is given by
\begin{equation}
\mathcal{M}_{e^-_{p,\sigma_1}}^{e^+_{q,\sigma_2} e^-_{p_1,\sigma_3} e^-_{p-q-p_1,\sigma_4}} = e^2 \left(\frac{\bar{u}_{\sigma_3}(p_1) \gamma^\mu u_{\sigma_1}(p) \bar{u}_{\sigma_4}(p_2) \gamma_\mu v_{\sigma_2}(q)}{(E_{p} - E_{p_1})^2-(\vec{p} - \vec{p_1})^2} - \frac{\bar{u}_{\sigma_4}(p_2) \gamma^\mu u_{\sigma_1}(p) \bar{u}_{\sigma_3}(p_1) \gamma_\mu v_{\sigma_2}(q)}{(E_{p} - E_{p_2})^2-(\vec{p} - \vec{p_2})^2} \right) \ .   \label{eq:M4} 
\end{equation}
The $e^+ \rightarrow e^+ e^+ e^-$ scattering amplitude is given by
\begin{equation}
\mathcal{M}_{e^+_{p,\sigma_1}}^{e^-_{q,\sigma_2} e^+_{p_1,\sigma_3} e^+_{p-q-p_1,\sigma_4}} = e^2 \left(\frac{\bar{u}_{\sigma_2}(q) \gamma^\mu v_{\sigma_4}(p_2) \bar{v}_{\sigma_1}(p) \gamma_\mu v_{\sigma_3}(p_1)}{(E_{p} - E_{p_1})^2-(\vec{p} - \vec{p_1})^2} - \frac{\bar{u}_{\sigma_2}(q) \gamma^\mu v_{\sigma_3}(p_3) \bar{v}_{\sigma_1}(p) \gamma_\mu v_{\sigma_4}(p_2)}{(E_{p} - E_{p_2})^2-(\vec{p} - \vec{p_2})^2} \right) .    \label{eq:M5}
\end{equation}

\subsection{Na\"ive Vacuum Coupling}
The amplitude describing the coupling of the na\"ive vacuum to states with nonzero electron and positron number is given by

\begin{equation}
\mathcal{M}_{0}^{e^-_{p,\sigma_1} e^+_{q,\sigma_2} e^-_{r,\sigma_3} e^+_{-p-q-r,\sigma_4}} = \frac{\bar{u}_{\sigma_1}(p) \gamma^\mu v_{\sigma_2}(q) \bar{u}_{\sigma_3}(r) \gamma_\mu v_{\sigma_4}(-p-q-r)}{(E_{p} + E_{q})^2-(\vec{p} + \vec{q})^2} \ .
\end{equation}

\section{Diagonalization of Interaction Terms}
\label{appendix:diagonalization}
In order to derive our simulation circuits for the imaginary terms in the eQED Hamiltonian we need to show explicit simulation circuits for the mutually commuting Pauli operators of the form $YXXX,XYXX,\ldots,XYYY$.
There are eight possible combinations of which four cases need to be considered.  To see why this is, let us consider the cases $XYYY,YXYY,YYXY,YYYX$.  The GHZ preparation circuit, $G$, has a symmetry in that the circuit is invariant under swaps of qubits $2,3$ and $4$.  Thus if we utilize this permutational symmetry, the only cases that need to be considered are $XYYY$ and $YXYY$ as the other two are equivalent to $YXYY$ under exchange of the last three qubits.  The exact same argument holds true for $XXXY,\ldots,YXXX$ and so only four cases need to be considered to understand how the modified GHZ preparation circuit $G$ diagonalizes such terms.

\begin{table}[h!]
    \centering
    \begin{tabular}{|c|c||c|c|}
    Original&Equivalent&Original&Equivalent\\
    \hline
         $G^\dagger (XYYY)G$ & $-ZZZZ$ & $G^\dagger (YXXX) G$ & $Z111$  \\
         $G^\dagger (YYYX) G$ & $-ZZZ1$ & $G^\dagger (XXXY)G$ & $Z11Z$  \\
         $G^\dagger (YYXY) G$ & $-ZZ1Z$ & $G^\dagger (XXYX)G$ & $Z1Z1$  \\
         $G^\dagger (YXYY) G$ & $-Z1ZZ$ & $G^\dagger (XYXX)G$ & $ZZ11$  \\
    \end{tabular}
    \caption{Summary of diagonalization of Hamiltonian terms using the GHZ preparation circuit $G$}
    \label{tab:my_label}
\end{table}

The proof that the circuit $G$ introduced in section \ref{section:intcircuits} diagonalizes all the relevant Pauli operators will make use of the $X$ and $Z$ error propagation identity shown in Fig. \ref{fig:z_error_circuit}.
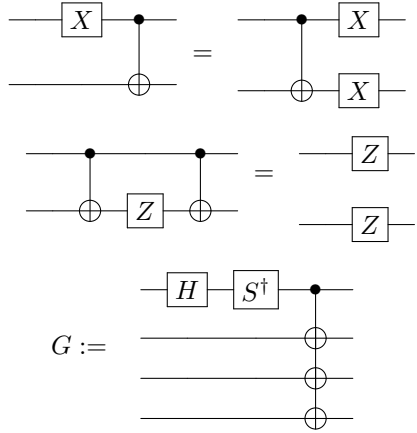
\begin{figure}[t!]
\centering
\subfloat{
\Qcircuit @C=1em @R=.7em {
& \qw & \gate{X} & \ctrl{2} & \qw&& \\
&   &           &           &   &=&\\
& \qw &  \qw     &  \targ   & \qw&& \\ 
}}
\subfloat{
\Qcircuit @C=1em @R=.7em {
& \qw & \ctrl{2} & \gate{X} & \qw& \\
&       &       &           &   &\\
& \qw & \targ    & \gate{X} & \qw& \\ 
}}   

\subfloat{
\Qcircuit @C=1em @R=.7em {
& \qw & \ctrl{2} &  \qw     & \ctrl{2} & \qw && \\
&   &           &           &           &   &=&\\
& \qw &  \targ   &  \gate{Z}&  \targ   & \qw&& \\ 
}}
\subfloat{
\Qcircuit @C=1em @R=.7em {
& \qw &  \gate{Z} & \qw \\
&   &           &\\
& \qw &  \gate{Z} & \qw \\ 
}}\\
\subfloat{\Qcircuit @C=1em @R=.7em {&\\&\\&\\G:=&\\}}\subfloat{
    \Qcircuit @C=1em @R=.7em {
&&\gate{H} &\gate{S^\dagger} &\ctrl{3}  &\qw       \\
&&\qw      &\qw              &\targ    &\qw       \\ 
&&\qw      &\qw              &\targ     &\qw       \\
&&\qw      &\qw              &\targ        &\qw       \\
}}

    \caption{$X$ and $Z$ Error Propagation Identities and definition of the GHZ transformation circuit $G$}
    \label{fig:z_error_circuit}
\end{figure}

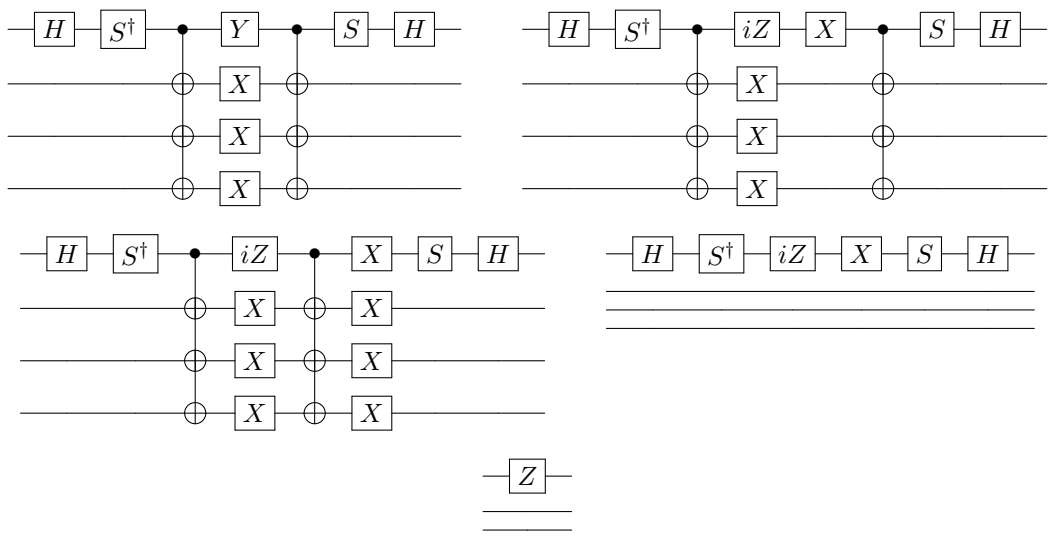
\begin{figure}[t!]
    \centering
\subfloat{
    \Qcircuit @C=1em @R=.7em {
&\gate{H} &\gate{S^\dagger} &\ctrl{3} &\gate{Y} &\ctrl{3} &\gate{S} &\gate{H} &\qw       \\
&\qw      &\qw              &\targ    &\gate{X} &\targ    &\qw      &\qw      &\qw       \\ 
&\qw      &\qw              &\targ    &\gate{X} &\targ    &\qw      &\qw      &\qw       \\
&\qw      &\qw              &\targ    &\gate{X} &\targ    &\qw      &\qw      &\qw       \\
}}
\qquad
%\subfloat{$ = $ (Using $Y=iXZ$)}
\subfloat{
    \Qcircuit @C=1em @R=.7em {
&\gate{H} &\gate{S^\dagger} &\ctrl{3} &\gate{i Z} & \gate{X} &\ctrl{3} &\gate{S} &\gate{H} &\qw       \\
&\qw      &\qw              &\targ    &\gate{X}   &\qw       &\targ    &\qw      &\qw      &\qw       \\ 
&\qw      &\qw              &\targ    &\gate{X}   &\qw       &\targ    &\qw      &\qw      &\qw       \\
&\qw      &\qw              &\targ    &\gate{X}   &\qw       &\targ    &\qw      &\qw      &\qw       \\
}}
\\
\subfloat{
    \Qcircuit @C=1em @R=.7em {
&\gate{H} &\gate{S^\dagger} &\ctrl{3} &\gate{i Z} &\ctrl{3} &\gate{X} &\gate{S} &\gate{H} &\qw        \\
&\qw      &\qw              &\targ    &\gate{X}   &\targ    &\gate{X} &\qw      &\qw      &\qw        \\ 
&\qw      &\qw              &\targ    &\gate{X}   &\targ    &\gate{X} &\qw      &\qw      &\qw        \\
&\qw      &\qw              &\targ    &\gate{X}   &\targ    &\gate{X} &\qw      &\qw      &\qw        \\
}}
\qquad
\subfloat{
    \Qcircuit @C=1em @R=.7em {
&\gate{H} &\gate{S^\dagger} &\gate{i Z}  &\gate{X} &\gate{S} &\gate{H} &\qw       \\
&\qw      &\qw              &\qw         &\qw      &\qw      &\qw      &\qw       \\ 
&\qw      &\qw              &\qw         &\qw      &\qw      &\qw      &\qw       \\
&\qw      &\qw              &\qw         &\qw      &\qw      &\qw      &\qw       \\
}}
\\
\subfloat{
    \Qcircuit @C=1em @R=.7em {
& \gate{Z} &\qw       \\
&\qw       &\qw       \\ 
&\qw       &\qw       \\
&\qw       &\qw       \\
}}
    \caption{Diagonalization of $YXXX$}
    \label{fig:yxxx_diag}
\end{figure}
In the first line of Fig. \ref{fig:yxxx_diag}, the identity $Y=iXZ$ has been used to replace the $Y$ gate acting on the first qubit. The $X$ error identity was then used to move the $X$ gate acting on the first qubit past the CNOT gate. Since a $Z$ gate acting on the control of a CNOT commutes with the CNOT and an $X$ gate acting on the target of a CNOT commutes with the CNOT, the $Z$ gate acting on the first qubit can be moved past the CNOTs and the $X$ gates acting on the lower qubits can be moved past the CNOTs. This results in a cancellation of all of the CNOTs and $X$'s acting on the three lower qubits. Calculating $G^\dagger (YXXX) G$ is reduced to computing the product of the single qubit gates in the second to last diagram of Fig. \ref{fig:yxxx_diag} which concludes the proof that $G^\dagger (YXXX) G = Z111$. The same techniques are applied in the following diagrams to diagonalize the remaining relevant Pauli operators.

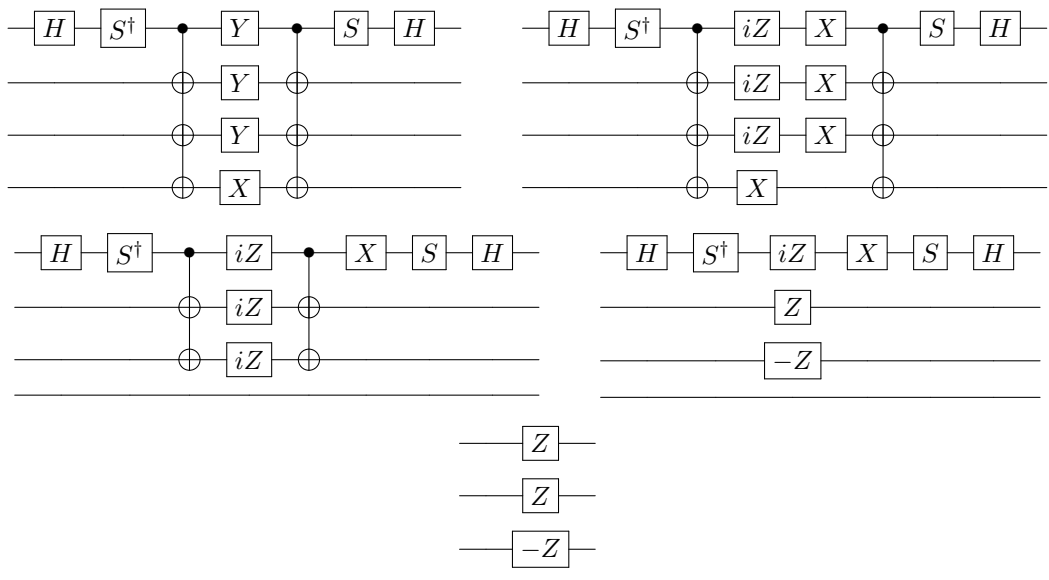
\begin{figure}[t!]
    \centering
\subfloat{
    \Qcircuit @C=1em @R=.7em {
&\gate{H} &\gate{S^\dagger} &\ctrl{3} &\gate{Y} &\ctrl{3} &\gate{S} &\gate{H} &\qw       \\
&\qw      &\qw              &\targ    &\gate{Y} &\targ    &\qw      &\qw      &\qw       \\ 
&\qw      &\qw              &\targ    &\gate{Y} &\targ    &\qw      &\qw      &\qw       \\
&\qw      &\qw              &\targ    &\gate{X} &\targ    &\qw      &\qw      &\qw       \\
}}
%\subfloat{$ = $ (Using $Y=iXZ$)}    
\qquad
\subfloat{
    \Qcircuit @C=1em @R=.7em {
&\gate{H} &\gate{S^\dagger} &\ctrl{3} &\gate{i Z} &\gate{X}  &\ctrl{3} &\gate{S} &\gate{H} &\qw       \\
&\qw      &\qw              &\targ    &\gate{i Z} &\gate{X}  &\targ    &\qw      &\qw      &\qw       \\ 
&\qw      &\qw              &\targ    &\gate{i Z} &\gate{X}  &\targ    &\qw      &\qw      &\qw       \\
&\qw      &\qw              &\targ    &\gate{X}   &\qw       &\targ    &\qw      &\qw      &\qw       \\
}}

%\subfloat{$ = $ (Using multiple applications of the $X$ error identity)}
\subfloat{
    \Qcircuit @C=1em @R=.7em {
&\gate{H} &\gate{S^\dagger} &\ctrl{2} &\gate{i Z} &\ctrl{2} &\gate{X} &\gate{S} &\gate{H} &\qw        \\
&\qw      &\qw              &\targ    &\gate{i Z} &\targ    &\qw      &\qw      &\qw      &\qw        \\ 
&\qw      &\qw              &\targ    &\gate{i Z} &\targ    &\qw      &\qw      &\qw      &\qw        \\
&\qw      &\qw              &\qw      &\qw        &\qw      &\qw      &\qw      &\qw      &\qw        \\
}}
%\subfloat{$ = $ (Using multiple applications of the $Z$ error identity)}
\qquad
\subfloat{
    \Qcircuit @C=1em @R=.7em {
&\gate{H} &\gate{S^\dagger} &\gate{i Z} &\gate{X} &\gate{S} &\gate{H} &\qw        \\
&\qw      &\qw              &\gate{Z}        &\qw      &\qw      &\qw      &\qw        \\ 
&\qw      &\qw              &\gate{- Z} &\qw      &\qw      &\qw      &\qw        \\
&\qw      &\qw              &\qw        &\qw      &\qw      &\qw      &\qw        \\
}}
%\subfloat{$ = $}

\subfloat{
    \Qcircuit @C=1em @R=.7em {
&\qw      &\gate{Z} &\qw        \\
&\qw      &\gate{Z}      &\qw        \\ 
&\qw      &\gate{-Z} &\qw        \\
&\qw      &\qw      &\qw        \\
}}

    \caption{Steps involved in showing the diagonalization of $YYYX$ using GHZ transformations.}
    \label{fig:yyyx_diag}
\end{figure}

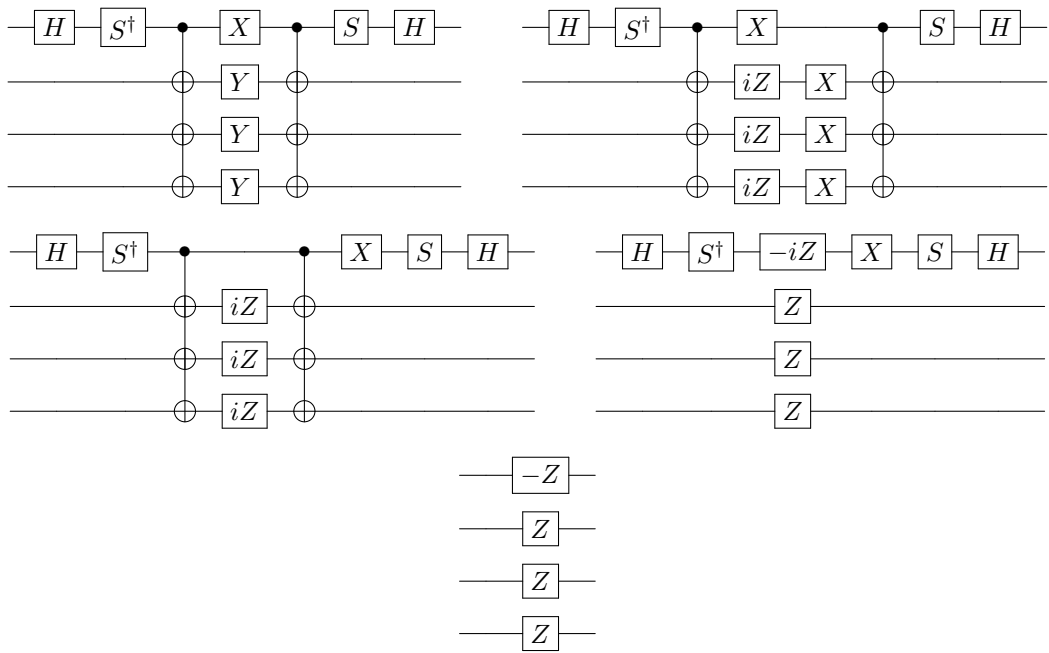
\begin{figure}[t!]
    \centering
\subfloat{
    \Qcircuit @C=1em @R=.7em {
&\gate{H} &\gate{S^\dagger} &\ctrl{3} &\gate{X} &\ctrl{3} &\gate{S} &\gate{H} &\qw       \\
&\qw      &\qw              &\targ    &\gate{Y} &\targ    &\qw      &\qw      &\qw       \\ 
&\qw      &\qw              &\targ    &\gate{Y} &\targ    &\qw      &\qw      &\qw       \\
&\qw      &\qw              &\targ    &\gate{Y} &\targ    &\qw      &\qw      &\qw       \\
}}
\qquad
\subfloat{
    \Qcircuit @C=1em @R=.7em {
&\gate{H} &\gate{S^\dagger} &\ctrl{3} &\gate{X}   &\qw       &\ctrl{3} &\gate{S} &\gate{H} &\qw       \\
&\qw      &\qw              &\targ    &\gate{i Z} &\gate{X}  &\targ    &\qw      &\qw      &\qw       \\ 
&\qw      &\qw              &\targ    &\gate{i Z} &\gate{X}  &\targ    &\qw      &\qw      &\qw       \\
&\qw      &\qw              &\targ    &\gate{i Z} &\gate{X}  &\targ    &\qw      &\qw      &\qw       \\
}}
\\
\subfloat{
    \Qcircuit @C=1em @R=.7em {
&\gate{H} &\gate{S^\dagger} &\ctrl{3} &\qw        &\ctrl{3} &\gate{X} &\gate{S} &\gate{H} &\qw        \\
&\qw      &\qw              &\targ    &\gate{i Z} &\targ    &\qw      &\qw      &\qw      &\qw        \\ 
&\qw      &\qw              &\targ    &\gate{i Z} &\targ    &\qw      &\qw      &\qw      &\qw        \\
&\qw      &\qw              &\targ    &\gate{i Z} &\targ    &\qw      &\qw      &\qw      &\qw        \\
}}
\qquad
\subfloat{
    \Qcircuit @C=1em @R=.7em {
&\gate{H} &\gate{S^\dagger} &\gate{-iZ}   &\gate{X} &\gate{S} &\gate{H} &\qw        \\
&\qw      &\qw              &\gate{Z} &\qw      &\qw      &\qw      &\qw        \\ 
&\qw      &\qw              &\gate{Z}      &\qw      &\qw      &\qw      &\qw        \\
&\qw      &\qw              &\gate{Z} &\qw      &\qw      &\qw      &\qw        \\
}}
\\
\subfloat{
    \Qcircuit @C=1em @R=.7em {
&\qw      &\gate{-Z}  &\qw        \\
&\qw      &\gate{Z}   &\qw        \\ 
&\qw      &\gate{Z}        &\qw        \\
&\qw      &\gate{Z}   &\qw        \\
}}

    \caption{Diagonalization of $XYYY$}
\end{figure}

\begin{figure}[t!]
    \centering
    
\subfloat{
    \Qcircuit @C=1em @R=.7em {
&\gate{H} &\gate{S^\dagger} &\ctrl{3} &\gate{X} &\ctrl{3} &\gate{S} &\gate{H} &\qw       \\
&\qw      &\qw              &\targ    &\gate{X} &\targ    &\qw      &\qw      &\qw       \\ 
&\qw      &\qw              &\targ    &\gate{X} &\targ    &\qw      &\qw      &\qw       \\
&\qw      &\qw              &\targ    &\gate{Y} &\targ    &\qw      &\qw      &\qw       \\
}}
\qquad
\subfloat{
    \Qcircuit @C=1em @R=.7em {
&\gate{H} &\gate{S^\dagger} &\ctrl{3} &\qw      &\ctrl{3} &\gate{X} &\gate{S} &\gate{H} &\qw       \\
&\qw      &\qw              &\qw      &\qw      &\qw      &\qw      &\qw      &\qw      &\qw       \\ 
&\qw      &\qw              &\qw      &\qw      &\qw      &\qw      &\qw      &\qw      &\qw       \\
&\qw      &\qw              &\targ    &\gate{Y} &\targ    &\gate{X} &\qw      &\qw      &\qw       \\
}}
\\
\subfloat{
    \Qcircuit @C=1em @R=.7em {
&\gate{H} &\gate{S^\dagger} &\ctrl{3} &\qw        &\ctrl{3} &\gate{X} &\gate{S} &\gate{H} &\qw       \\
&\qw      &\qw              &\qw      &\qw        &\qw      &\qw      &\qw      &\qw      &\qw       \\ 
&\qw      &\qw              &\qw      &\qw        &\qw      &\qw      &\qw      &\qw      &\qw       \\
&\qw      &\qw              &\targ    &\gate{i Z} &\targ    &\qw      &\qw      &\qw      &\qw       \\
}}
\qquad
\subfloat{
    \Qcircuit @C=1em @R=.7em {
&\gate{H} &\gate{S^\dagger} &\gate{iZ}   &\gate{X} &\gate{S} &\gate{H} &\qw       \\
&\qw      &\qw              &\qw        &\qw      &\qw      &\qw      &\qw       \\ 
&\qw      &\qw              &\qw       &\qw      &\qw      &\qw      &\qw       \\
&\qw      &\qw              &\gate{ Z} &\qw      &\qw      &\qw      &\qw       \\
}}
\\
\subfloat{
    \Qcircuit @C=1em @R=.7em {
&\qw &\gate{Z}   &\qw       \\
&\qw &\qw       &\qw       \\ 
&\qw &\qw        &\qw       \\
&\qw &\gate{Z}   &\qw       \\
}}

    \caption{Diagonalization of $XXXY$}
    \label{fig:xxxy_diag}
\end{figure}

\end{document}